%% file: BregmanMiniballPDForV3.tex
\documentclass[11pt]{article}
\usepackage{fullpage,amssymb,graphicx,hyperref}
\usepackage[ruled]{algorithm2e}
\usepackage{accents}

\graphicspath{{Fig/}{Fig2/}{Fig3/}{Fig5/}{Fig6/}{./}}

\input{macrosBDPD.tex}

\title{Minimum enclosing Bregman balls made easy}
\author{Frank Nielsen\\
\ \\
Sony Computer Science Laboratories, Inc.\\
Tokyo, Japan}
\date{}

\begin{document}

\maketitle

\begin{abstract}
In this work, we revisit the problem of computing minimum enclosing Bregman balls (Bregman MEBs) of finite sets of parameters.  
First, we show that Bregman MEBs are equivalent to MEBs of corresponding weighted point sets with respect to the power distance. 
We then report an efficient Frank--Wolfe $(1+\epsilon)$-approximation algorithm for computing power MEBs, for any $\epsilon>0$. 
This power MEB approximation algorithm coincides with the  Bregman MEB approximation algorithm of Nock and Nielsen (2005) when expressed in the dual gradient space. Finally, we show that the Bregman potential lifting transforms used to construct Bregman Voronoi diagrams can be reinterpreted as the classical paraboloid lifting transform applied to corresponding weighted point sets. In particular, Bregman MEB circumcenters lie on the farthest Bregman Voronoi diagrams or equivalently on the corresponding farthest power diagrams.
\end{abstract}

\noindent Keywords: Bregman divergence; Convex duality; mixed-parameterized Fenchel-Young divergence; minimum enclosing ball; quadratic programming; Frank--Wolfe algorithm; Bregman Voronoi/power diagram; orthosphere.

\section{Introduction}
%%%%%%%%%%%%%%%%%%%%%%

\begin{figure}\centering
 \includegraphics[width=0.5\textwidth]{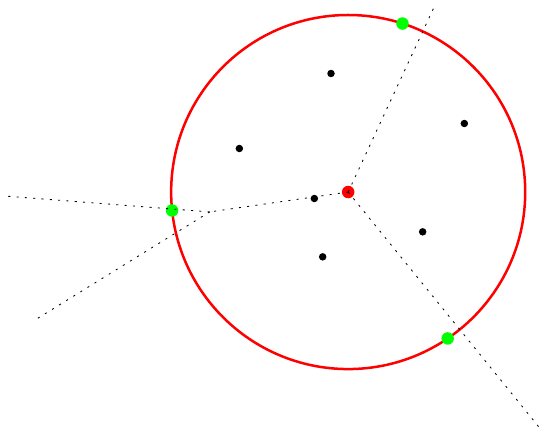}
 
\caption{The planar minimum enclosing ball: The circumcenter $c^*$ lies on the farthest Voronoi diagram (dotted lines).}\label{fig:fvd}
\end{figure}

Given a finite set $\calS=\{p_i\}$ of $n$ points in the Euclidean space $\bbR^d$, the {\em Minimum Enclosing Ball} (MEB) problem asks to construct the ball 
$\Sigma^*=\ball(c^*,r^*)$ of  circumcenter $c^*$ and minimal radius $r^*$ enclosing $\calS$:
$$
r^* = \min_{c\in\bbR^d} \max_{i\in [n]} \|p_i-c\|,\quad
c^* = \arg\min_{c\in\bbR^d} \max_{i\in [n]} \|p_i-c\|.
$$
Let $\Sigma^*=\MEB(\calS)=\ball(c^*,r^*)$ for short, where $c(\Sigma')=c$ and $r(\Sigma')=r'$ for a ball $\Sigma'=\ball(c',r')=\{x\in\bbR^d\st \|x-c'\|\leq r'\}$ with  corresponding boundary sphere $S'=\sphere\{x\in\bbR^d\st \|x-c'\| = r'\}=\partial\Sigma'$. See Figure~\ref{FigIpe-MEBFarthestVD.pdf}.
The MEB problem has also been studied in the literature as the Smallest Enclosing Ball problem~\cite{fischer2003fast,zhou2005efficient} (SEB), the $1$-center problem~\cite{kumar2009algorithm,ChaudhuriM08,arnaudon2013approximating,cong2020rank}, the circumsphere~\cite{cheng2013delaunay,noviyanti2020study}, 
 the Chebyshev center~\cite{Chebyshev-Garkavi-1964,ChebyshevAlphaDiv-2020},
etc.

 \begin{algorithm}
\caption{Welzl's randomized minimum enclosing ball algorithm}\label{alg:welzl}
\KwIn{Finite point set $\calP=\{p_i\}\subset\mathbb{R}^d$, boundary set $\calB$}
\KwOut{The minimum enclosing ball of $\calP$}

\SetKwFunction{Welzl}{Welzl}
\SetKwProg{Fn}{Function}{}{}

\Fn{\Welzl{$\calP,\calB$}}{

\If{$\calP=\emptyset$ \textbf{or} $|\calB|=d+1$}{
    \Return $\mathrm{Ball}(\cal{B})$\;
}

Choose a point $p\in \calP$ uniformly at random\;

$B\leftarrow$\Welzl{$\calP\setminus\{p\},\calB$}\;

\If{$p\in B$}{
    \Return $B$\;
}

\Return \Welzl{$\calP\setminus\{p\},\calB\cup\{p\}$}\;

}
\end{algorithm}

The MEB problem has a long history dating back from Sylvester~\cite{sylvester1857question} (1857) who first considered the planar case $d=2$.
An exact randomized algorithm called {\sc MiniBall} running in $\tilde O(n(d+2)!)$ (expected linear time with exponential dependence on the dimension) was given by Welzl~\cite{Miniball-1991} (1991). Let $I\subset[n]=\{ i \st \|p_i-c^*\|=r^*\}$ denote the index set of the input support points located on the boundary of the MEB.
For point sets in general position (i.e., distinct points with no $d+2$ co-spherical points), we have $2\leq |I|\leq d+1$.
Furthermore, we have $c^*=\sum_i \lambda_i p_i$ for $\lambda\in\Delta_n$ (the standard $n$-dimensional simplex), and $\lambda_i=0$ for $i\not\in I$.
The optimal radius can be expressed from the weighted pairwise squared distances of the support points (Lemma~6 of~\cite{PowerMEB-TR-2023}): 
$$
(r^*)^2=\frac{1}{2}\, \sum_{i\in I}\sum_{j\in I} \lambda_i\lambda_j\, \|p_i-p_j\|^2,
$$
and $\lambda_i=0$ when $\|c^*-p_i\|<r^*$.

Bad\u oiu Clarkson~\cite{badoiu2003smaller} reported a $O\left(\frac{nd}{\epsilon^2}\right)$-time approximation algorithm (Algorithm~\ref{alg:BC}) which guarantees a $(1+\epsilon)$-approximation of the MEB for any $\epsilon>0$. 
That is, {\sc BC}$(\calS=\{p_i\},\epsilon)$ returns a ball $\tilde\Sigma=\ball(\tilde{c},\tilde{r})\supset\calS$ covering $\calS$
 such that $\|\tilde{c}-c^*\|\leq (1+\epsilon)r^*$.
Furthermore, the {\sc BC} algorithm provides a core-set $\calC=\{p_{f_t}\}$ of size $|\calC|$ upper bounded by $\left\lceil\frac{1}{\epsilon^2}\right\rceil$ such that $r(\MEB(\calC))\leq (1+\epsilon)r^*$. Figure~\ref{fig:mebbc} shows some results of the {\sc BC} algorithm on planar point sets obtained after $T=1000$ iterations.

\begin{figure}\centering
\fbox{\includegraphics[width=0.3\textwidth]{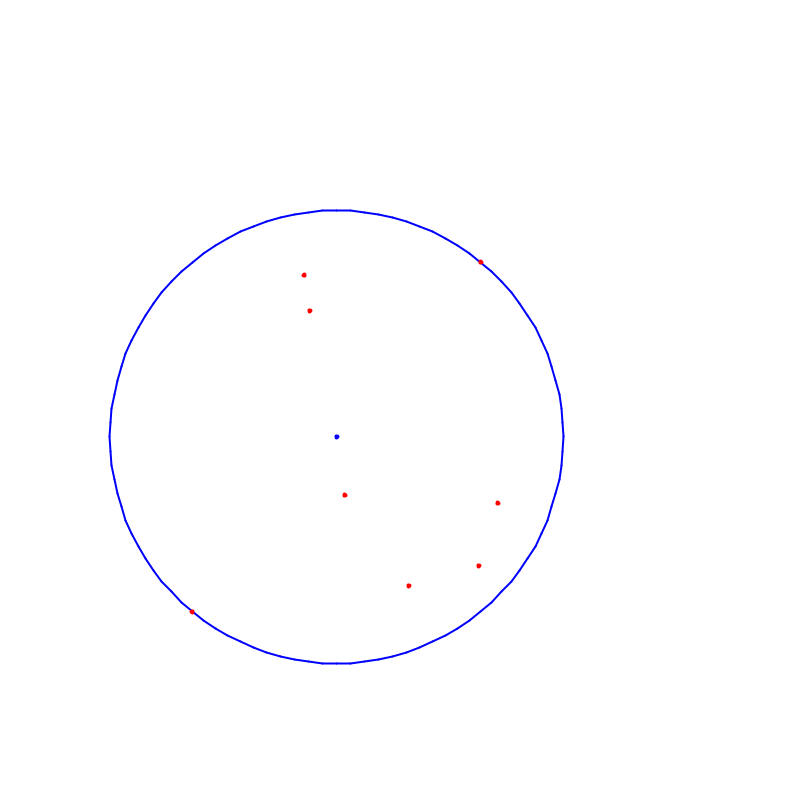}}
\fbox{\includegraphics[width=0.3\textwidth]{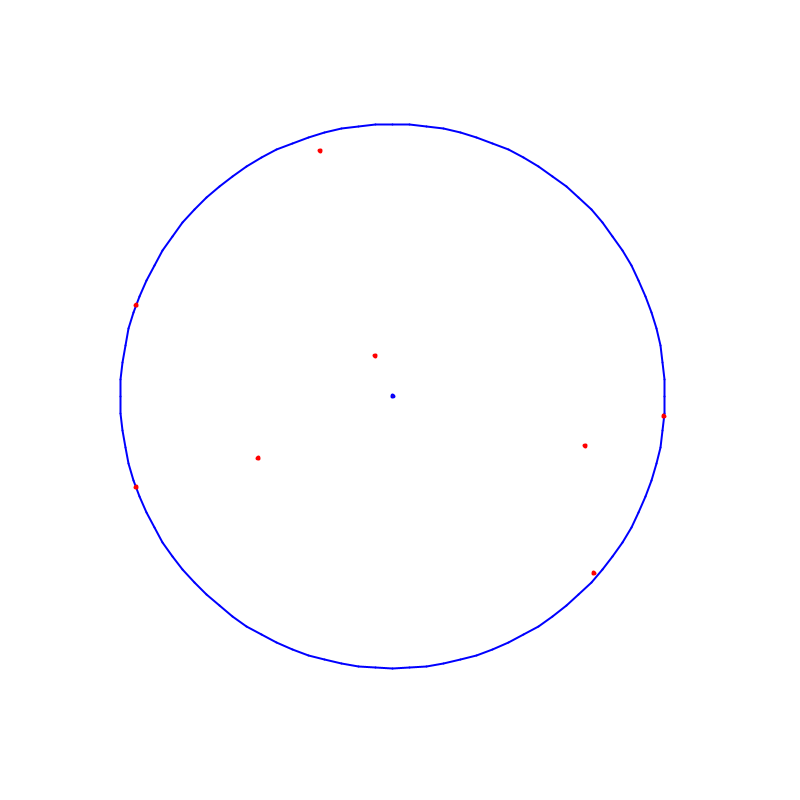}}
\caption{Bad\u oiu--Clarkson's approximation of the Euclidean MEB after $T=1000$ iterations. Left: 
The optimal MEB has two support points. Right: The optimal MEB has three support points (right).}\label{fig:mebbc}
\end{figure}

This approximation algorithm was later reinterpreted by Clarkson~\cite{clarkson2010coresets} (2010) as a Frank--Wolfe quadratic optimization procedure~\cite{frank1956algorithm,FW-2008}. It followed tight convergence analysis with several optimizations on the updating rule $c_{t+1} \leftarrow \frac{t}{t+1}c_t+\frac{1}{t+1}p'$.
This MEB approximation technique was widely used in computer graphics for collision detection~\cite{nielsen2005visual,larsson2008fast} and  machine learning~\cite{nielsen2009approximating,pham2025bregman} because it can be kernelized~\cite{nielsen2017note} with approximation guarantees depending only on the precision $\epsilon$ but independent of the input dimension~\cite{tsang2005core,gai2010learning}.

\begin{algorithm}
\caption{Bad\u oiu-Clarkson MEB $(1+\epsilon)$-approximation algorithm~\cite{badoiu2003smaller} {\sc BC}($\calS=\{p_i\},\epsilon$)}\label{alg:BC}
Choose at random $c_1 \in\calS$\;
$T=\left\lceil\frac{1}{\epsilon^2}\right\rceil$\;
\For{$t=1,2,\ldots,T-1$}{
    $f_t \leftarrow \arg\max_{i\in [n]} \|c_t-p_i\|_2^2$\;
    $c_{t+1} \leftarrow \frac{t}{t+1}c_t+\frac{1}{t+1}p_{f_t}$\;
}
Return $\tilde\Sigma=\ball(\tilde{c},\tilde{r})$ with $\tilde{c}=c_T$ and $\tilde{r}=\max_{i\in [n]} \|\tilde{c}-p_i\|_2$\;
\end{algorithm}

In this work, we consider the Bregman MEB problem (MEB with respect to Bregman divergences~\cite{Bregman-1967} described in \S\ref{sec:SEPB}) which generalizes the Euclidean MEB problem.
Bregman MEBs are useful in many application areas including
 audio stream analysis~\cite{cont2010information}, multiple hypothesis testing~\cite{varshney2014optimal} 
 and robust Bayesian inference~\cite{watson2016approximate} (minimax prior/posterior)
 in statistics, universal compression/redundancy minimization~\cite{ryabko1979coding,cover1999elements} in information theory,topological data analysis~\cite{edelsbrunner2018smallest},  optimization~\cite{bauschke2010klee,bauschke2011chebyshev},  signal and graph processing~\cite{escolano2012heat,ChebyshevAlphaDiv-2020}, machine learning~\cite{pham2025bregman}, etc. Indeed, the Kullback-Leibler (KL) divergence between two Gaussian distributions\footnote{More generally, for any pair of densities of an exponential family.} amounts to a Bregman divergence between the natural parameters of the Gaussians for the Bregman generator chosen as the log-partition function~\cite{CBD-2005} (also called cumulant function).
Thus the right KL circumcenter of a set of Gaussian densities is equivalent to the left Bregman circumcenter on their natural parameter set.
Figure~\ref{fig:SEBBGaussian} displays the right KL minimax distribution (circumcenter) of a set of $n=8$ univariate Gaussian distributions (see Appendix~\ref{sec:unigaussian}).
Both Welzl's MEB algorithm and the Bad\o{}iu--Clarkson MEB approximation algorithm was generalized to Bregman divergences~\cite{SEBB-2008,ASEBB-2005}.

The paper is organized as follows:
First, we express Bregman divergences as power divergences (Proposition~\ref{prop:bdpd}) and show that left Bregman MEBs are equivalent to power MEBs on dual weighted parameters in \S\ref{sec:SEPB} (Proposition~\ref{prop:equivMEB}).
Second, we report the Frank--Wolfe algorithm for approximating the power MEBs in \S\ref{sec:BregmanFW} and show that the Bregman Bad\u oiu-Clarkson Frank--Wolfe-type approximation algorithm~\cite{ASEBB-2005,nielsen2006approximating} corresponds to the power Frank-Wolfe algorithm in the dual parameterization (Proposition~\ref{prop:BBCPowerFW}).
We then reconsider Bregman Voronoi diagrams as equivalent power diagrams~\cite{BVD-2010} in \S\ref{sec:BVDPD} (Proposition~\ref{prop:equivBVD-PD}) by showing the equivalence between lifting points to tangent hyperplanes using either the Bregman potential graphs or equivalently with the classical lifting
 on the paraboloid (Proposition~\ref{prop:liftequiv}). In \S\ref{sec:intersectionduality}, we consider the decision problem related to the MEB and explore the dual problems induced by reference duality.
Finally, we conclude by giving some extension of the Bregman MEB problem in \S\ref{sec:concl}.

\def\picw{0.255\textwidth}
\begin{figure}
\centering

\begin{tabular}{ll}
Natural parameter $\theta$ & Gaussian density $N(\mu,\sigma)$\\
\fbox{\includegraphics[width=\picw]{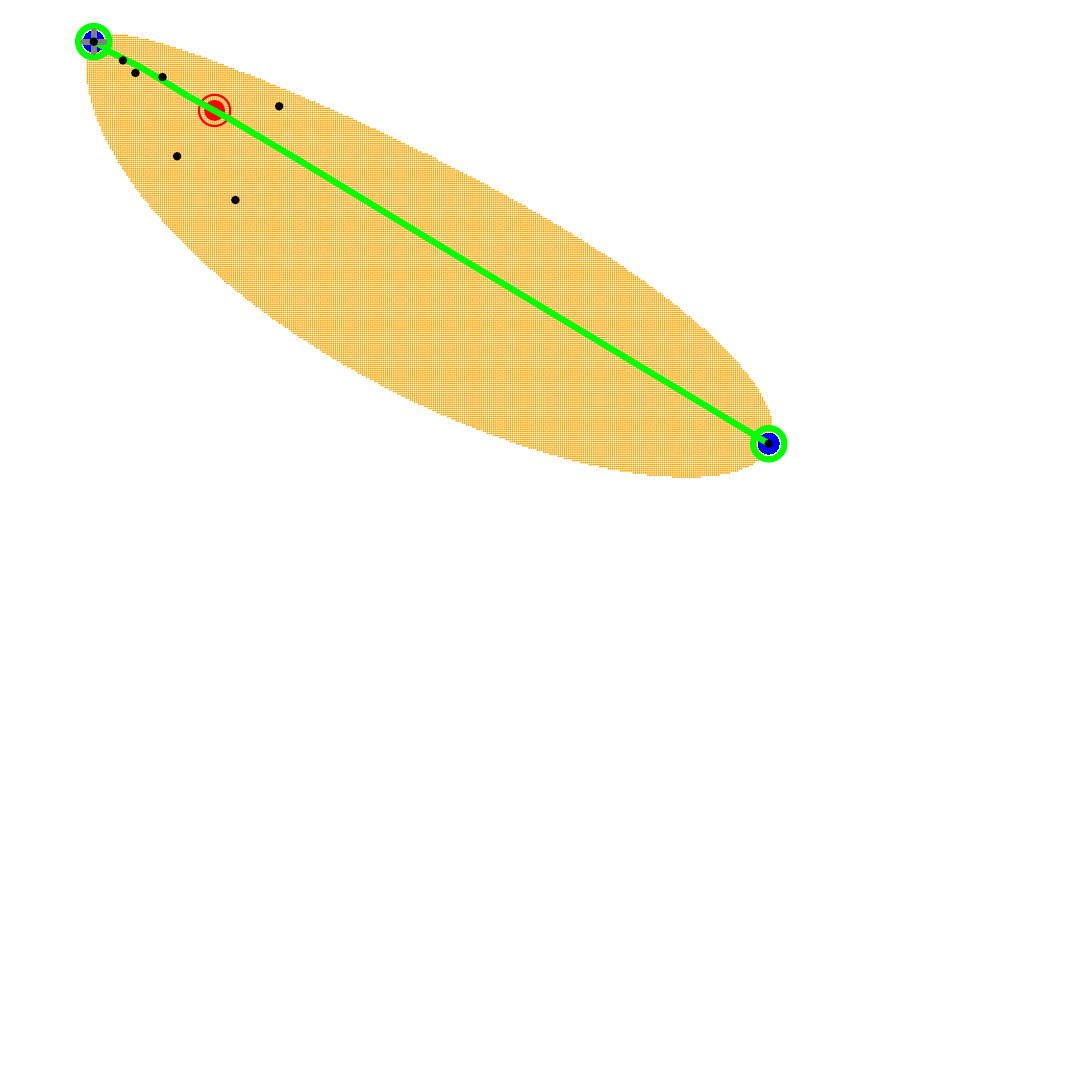}} &
\fbox{\includegraphics[width=\picw]{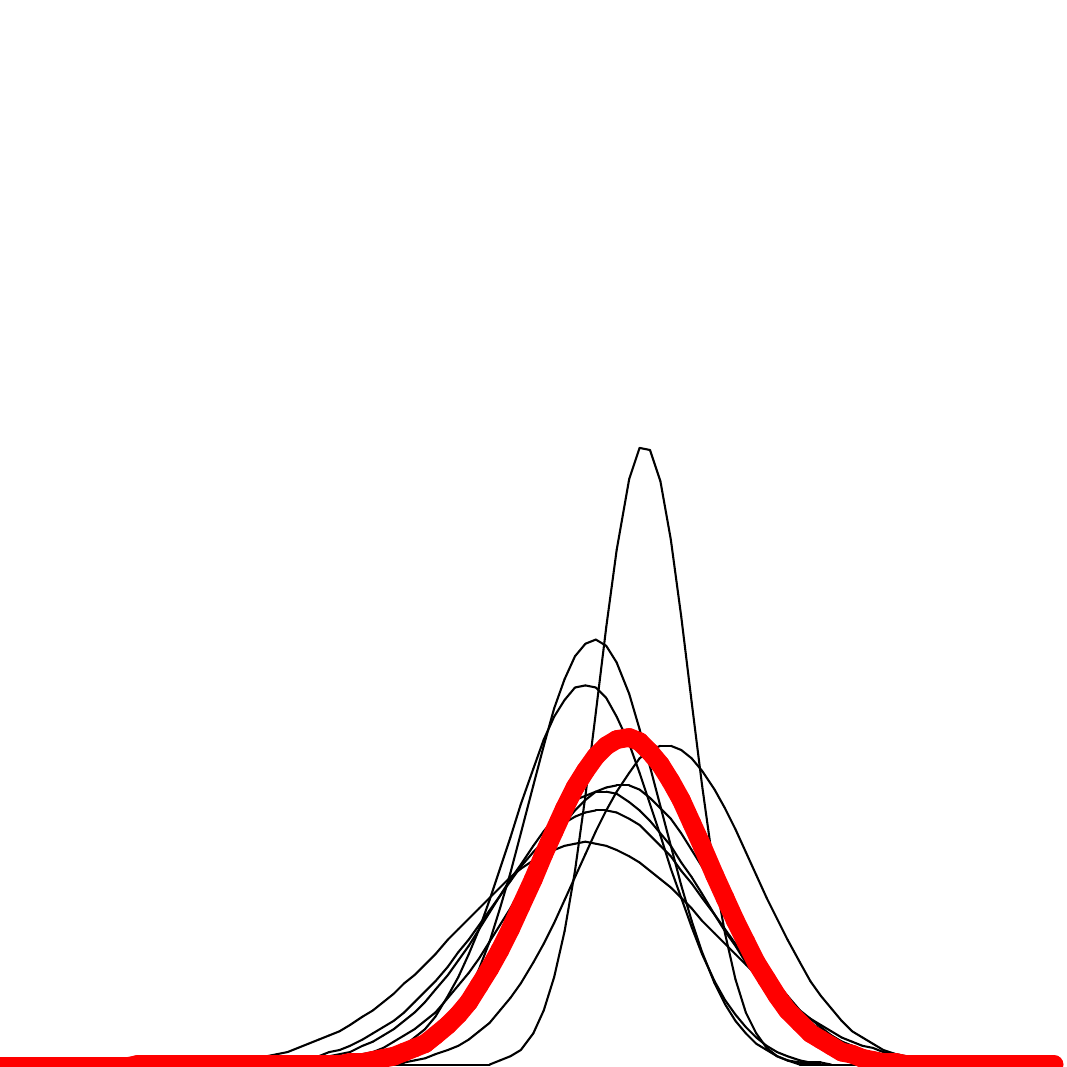}} \\
\fbox{\includegraphics[width=\picw]{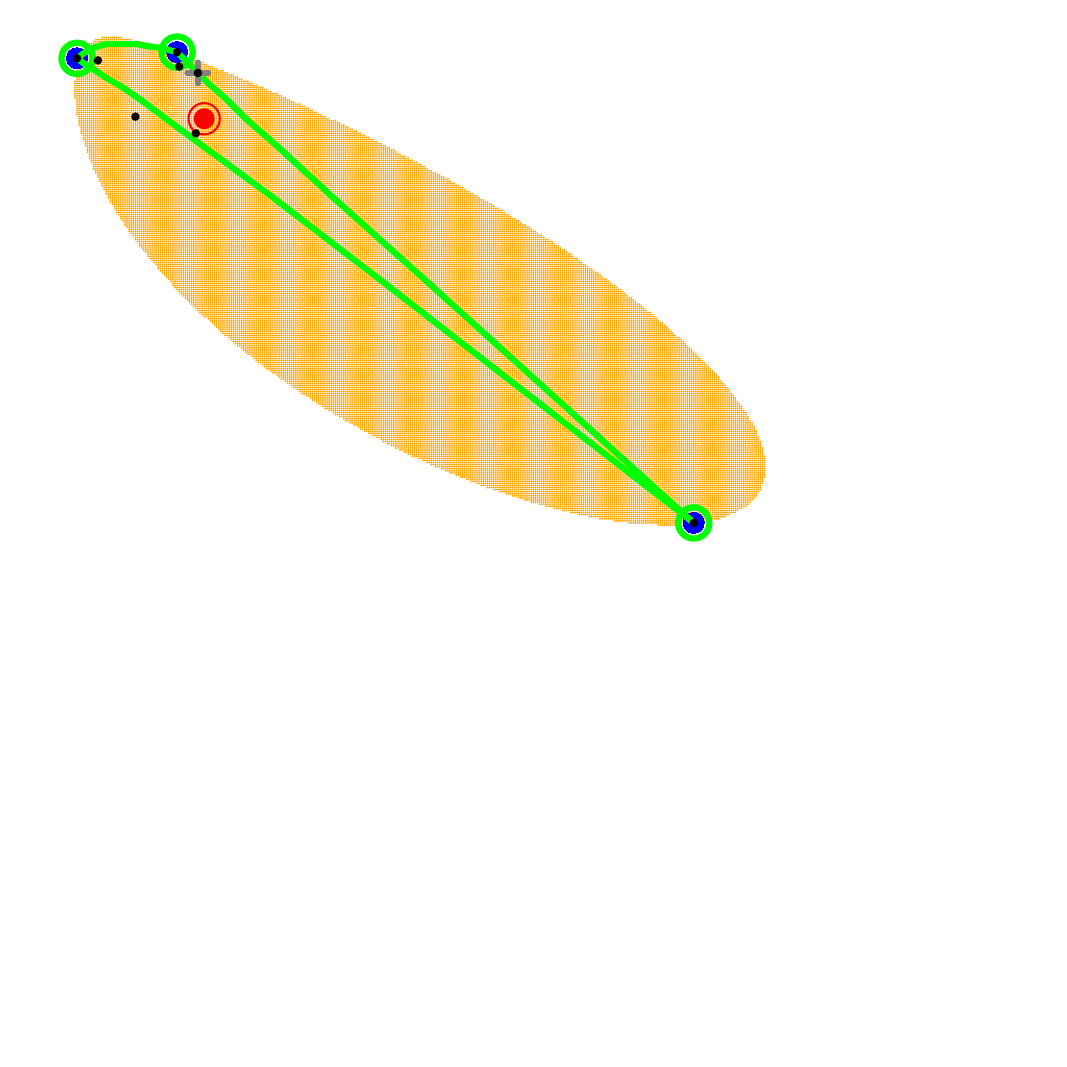}} &
\fbox{\includegraphics[width=\picw]{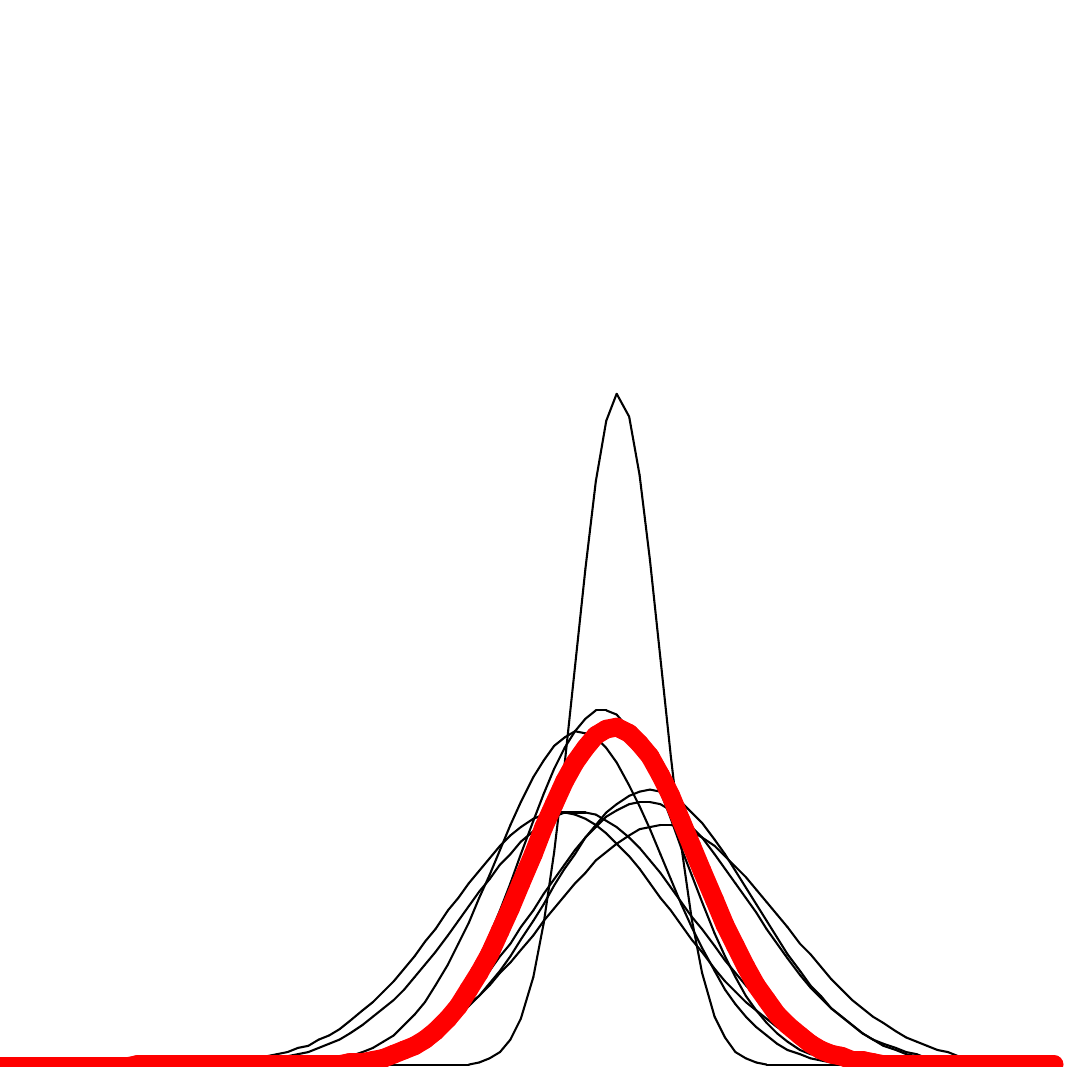}}  
\end{tabular}

\caption{The right KL minimax center of a set of  
$n=8$ univariate Gaussians (right column) is computed equivalently as a left Bregman circumcenter of their corresponding natural parameters $\theta=(\frac{\mu}{\sigma^2},-\frac{1}{2\sigma^2})$.
The top row shows an example where the Bregman MEB is defined by two support distributions while the bottom row displays an example with the Bregman MEB defined by three support distributions.
}\label{fig:SEBBGaussian}
\end{figure}

%%%%%%%%%%%
\section{Bregman MEBs from equivalent power MEBs}\label{sec:SEPB}
%%%%%%%%%

\subsection{Motivation: The notion of redundancy}
%%%%%%%%%%%%%%%%%%%%%%%

Let $(\calX,\calA)$ be a measurable space where $\calX$ denotes the sample space and $\calA$ its $\sigma$-algebra of events.
Consider a positive measure $\mu$ on $(\calX,\calA)$ and demote by $M_1^+(\calX)$  the set of probability distributions.

% inf sup
We study the following minimax Kullback-Leibler divergence problem~\cite{acharya2012tight}:
\begin{equation}\label{eq:rp}
(\mbox{Redundancy problem})\quad R(\calP)\eqdef \min_{Q\in\calQ} \max_{P\in\calP} D_\KL(P:Q),
\end{equation}
where $\calP=\{P_1,\ldots, P_n\}\subset M_1^+(\calX)$ is a finite prescribed collection of distributions and $D_\KL$ is the Kullback-Leibler divergence (relative entropy): 
$$
D_\KL(P:Q)=E_P[\log\frac{1}{Q(X)}]-H(P),
$$
where $H(P)=E_P[\log\frac{1}{P(X)}]$ is the entropy.
The quantity $R(\calP)$ is called the redundancy and plays a role in universal compression~\cite{merhav1995strong} (see Appendix~\ref{app:universalcoding}), online learning~\cite{cesa2006prediction,shayovitz2018redundancy} (sequential prediction), hypothesis testing, minimum description length principle, gambling~\cite{xie2000asymptotic}, model integration/fusion, etc. See~\cite{acharya2012tight} for details. By default, $\calQ=M_1^+(\calX)$.
When $\calP=\{P_{\theta_1},\ldots, P_{\theta_n}\}$ is a finite set of probability distributions of an exponential family~\cite{bnbook} $\calE=\{P_\theta\st \theta\in\Theta\}$ with log-normalizer $F(\theta)$ and $\calQ=\calE$, the redundancy problem of Equation~\ref{eq:rp} amounts to find the circumcenter of the minimum enclosing ball with respect to a Bregman divergence:
$$
\min_{\theta\in\Theta} \max_{i\in [n]} B_F(\theta:\theta_i),
$$
since $D_\KL(P_{\theta_i}:P_\theta)=B_F(\theta:\theta_i)$~\cite{azoury2001relative}, where $B_F$ is a Bregman divergence~\cite{Bregman-1967} induced by the cumulant function of $\calE$:
$$
B_F(\theta:\theta') \eqdef F(\theta)-F(\theta')-\inner{\theta-\theta'}{\nabla F(\theta')},
$$
for the Euclidean inner product $\inner{x}{y}=\sum_i x_iy_i$ (dot product).

Notice that the set of distributions on a finite alphabet $\calX$ (with $\sigma$-algebra the power set $\calA=2^{\calX}$) is an exponential family called the categorical family which generalizes the Bernoulli family and is subsumed by the multinomial family.

%%%%%%%%%%%%%%%%%%%%%%%%
\subsection{Equivalence}\label{sec:equiv}
%%%%%%%%%%%%%%%%%%%%%%%%
Let $\calT=\{\theta_i\}\subset\Theta$ be a set of $n$ parameters and $B_F$ a Bregman divergence~\cite{Bregman-1967} induced by the Bregman generator\footnote{In this paper, a Bregman generator is a smooth and strictly convex function defined over an open convex domain. In optimization, Bregman functions~\cite{reem2019re} are carefully axiomatized to guarantee alternating projection minimization algorithms.} $F:\Theta\rightarrow\bbR$:
$$
B_F(\theta:\theta') \eqdef F(\theta)-F(\theta')-\inner{\theta-\theta'}{\nabla F(\theta')},
$$
where $\inner{x}{y}=\sum_i x_iy_i$ is the dot product (Euclidean inner product).
For example, the squared Euclidean distance $\|\theta-\theta'\|^2$ is obtained  by choosing the Bregman generator $F(\theta)=\inner{\theta}{\theta}$ defined on the domain $\Theta=\bbR^d$, and the relative entropy/Kullback-Leibler divergence $D_\KL(\theta:\theta')=\sum_i \theta_i\log\frac{\theta_i}{\theta_i'}$ is recovered by setting $F(\theta)=\sum_i \theta_i\log\theta_i$ on the open standard simplex domain $\Theta=\Delta_d$.

The Bregman MEB~\cite{ASEBB-2005,SEBB-2008} asks for the left Bregman ball $\Sigma^L_F=\ball_F(\theta_L,R_L)$ of  circumcenter $\theta_L$ and minimal radius $R_L$:
$$
R_L = \min_\theta \max_{i\in [n]} B_F(\theta:\theta_i).
$$

The left Bregman circumcenter  is proven unique in~\cite{ASEBB-2005}, and the Euclidean MEB is recovered as a special case of Bregman MEBs by choosing the generator $F(\theta)=\inner{\theta}{\theta}$ (with $R_L=r^2$). 
 
Various proofs of uniqueness of the MEBs with respect to other dissimilarities have been studied in the literature:
For example, see~\cite{lang2012math} for the uniqueness of MEBs in Bruhat-Tits spaces.

When $F$ is  Legendre-type~\cite{rockafellar1997convex}\footnote{Informally speaking, it means that $\|\nabla F(\theta)\|\rightarrow +\infty$ when $\theta\leftarrow\partial\Theta$ where $\partial$ denotes boundary operator.} the Bregman divergence $B_F(\theta:\theta')$ amounts equivalently to a dual Bregman divergence on the dual gradient parameters 
$B_F(\theta:\theta')=B_{F^*}(\eta':\eta)$,
where the convex conjugate $F^*(\eta)=\inner{(\nabla F)^{-1}(\eta)}{\eta}-F((\nabla F)^{-1}(\eta))$  is defined on the gradient domain $\calH=\{\eta=\nabla F(\theta) \st \theta\in\Theta\}$.
We get a dual parameterization $\eta(\theta)=\nabla F(\theta)=(\nabla F^*)^{-1}(\eta)$ with $\theta(\eta)=\nabla F^*(\eta)=(\nabla F)^{-1}(\eta)$ (with involutive biconjugation ${F^*}^*=F$), and we can rewrite the Bregman divergence $B_F$ using the mixed parameterization as an equivalent Fenchel-Young divergence~\cite{acharyya2013learning} $Y_F$:

$$
B_F(\theta:\theta')=Y_F(\theta:\eta')\eqdef F(\theta)+F^*(\eta')-\inner{\theta}{\eta'}.
$$

Bregman divergences are canonical divergences of dually flat spaces~\cite{IG-2016,nielsen2022many} (Figure~\ref{fig:bdduality}).
The core convex duality which reveals itself in dual Bregman divergences also link Bregman divergences with other dualities:
Duality of   BDs with regular exponential families~\cite{CBD-2005}, duality of BDs with quasi-arithmetic centers~\cite{ASEBB-2005}, etc.

\begin{figure}%
\centering
\includegraphics[width=0.8\columnwidth]{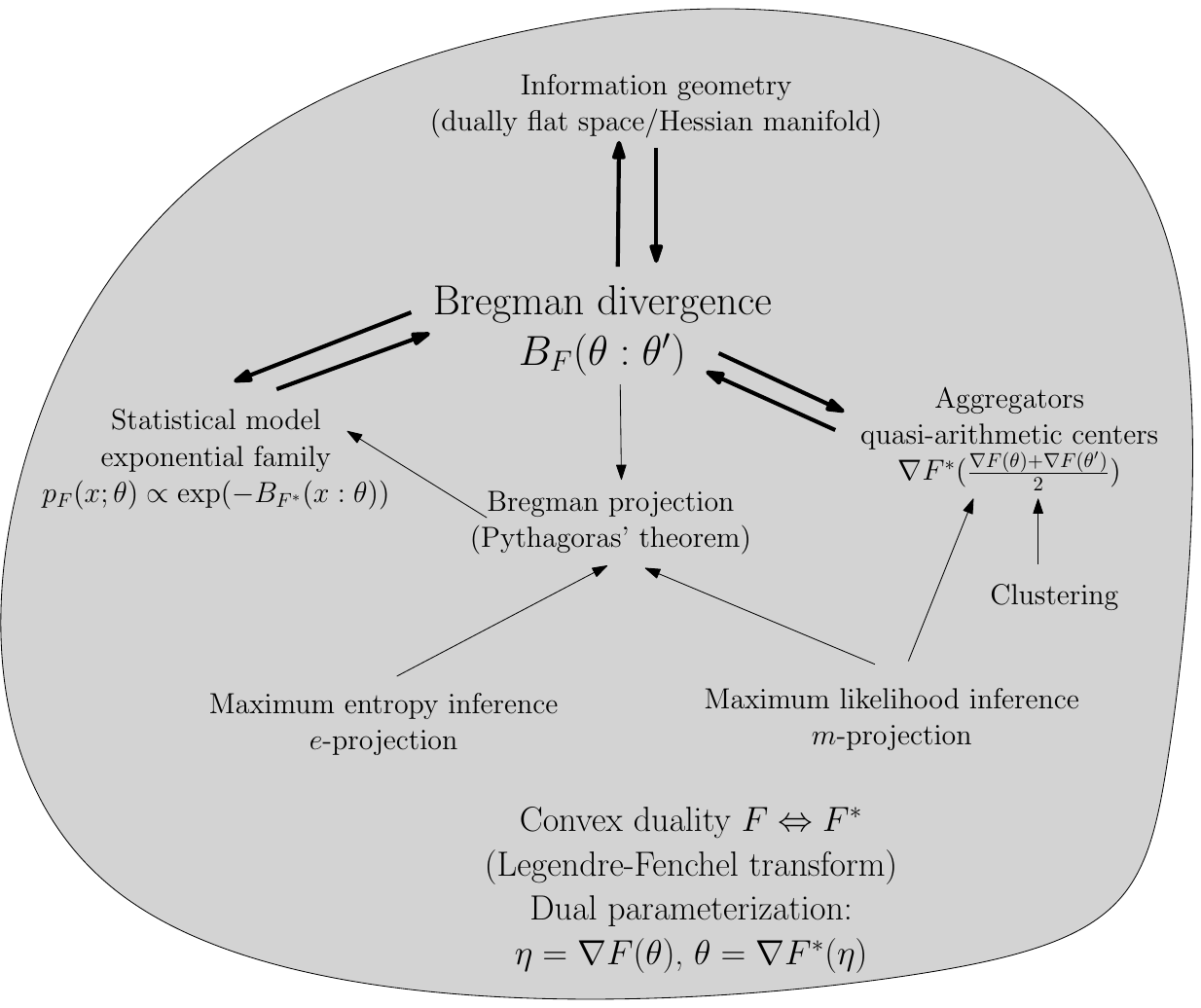}%
\caption{Connections of Bregman divergences with fundamental information-theoretic concepts highlighting dualities of Bregman divergences with exponential families and quasi-arithmetic centers induced by the fundamental convex duality.}%
\label{fig:bdduality}%
\end{figure}
 
Following~\cite{BW-2025} (see also~\cite{guo2017ambiguity}), by observing that $\|\theta-\eta'\|^2=\|\theta\|^2+\|\eta'\|^2-2\inner{\theta}{\eta'}$, we replace the term
$-\inner{\theta}{\eta'}$ by $\frac{1}{2} \|\theta-\eta'\|^2 - \frac{1}{2}\|\theta\|^2-\frac{1}{2}\|\eta\|^2$ 
in the expression of the Fenchel-Young divergence expression to get a weighted mixed-parameterized squared Euclidean divergence  as follows:
\begin{equation}
Y_F(\theta:\eta') = E_F(\theta:\eta') \eqdef \frac{1}{2}
\left(  \|\theta-\eta'\|^2 - \omega_F(\theta) - \omega_{F^*}(\eta')
\right) = E_{F^*}(\eta':\theta)=Y_{F^*}(\eta':\theta),
\end{equation}
where 
$$
\omega_F(\theta)\eqdef   \|\theta\|^2-2\, F(\theta), \quad
\omega_{F^*}(\eta)\eqdef \|\eta\|^2-2\, F^*(\eta). 
$$

Here, the parameters $\theta$ and $\eta'$ in the weighted squared Euclidean divergence $E_F(\theta:\eta')$ are viewed as lying in the same ambient space $\Xi=\Theta\cup\calH\subset\bbR^d$.

We can thus rewrite the Bregman/Fenchel-Young divergence between parameters as an equivalent {\em power distance}~\cite{cheng2013delaunay} between two mixed-parameterized weighted points:
\begin{equation}\label{eq:mpeucl}
B_F(\theta:\theta')=Y_F(\theta:\eta')=\frac{1}{2}\, \sigma(\wtheta,\weta'),
\end{equation} 
where $\wtheta=(\theta,w=\|\theta\|^2 - 2\, F(\theta))$, $\weta'=(\eta',w=\|\eta'\|^2 - 2\, {F^*}(\eta')$ with the power distance\footnote{The power distance has further been consider in infinite-dimensional RKHS~\cite{PowerMEB-ESA-2024}.}~\cite{gardenfors2001reasoning} between two weighted points $\hat p=(p,w)$ and $\hat p'=(p',w')$ defined by 
$$
\sigma(\hat p,\hat p') \eqdef \|p-p'\|^2 -  w - w' = \sigma(\hat p',\hat p). 
$$
We have $\sigma((p,w),(p',w'))=\sigma((p,w'),(p',w))=\sigma((p,0),(p',w+w'))=\sigma((p,w+w'),(p',0))$.
Notice that for non-quadratic Bregman generators $F$, the Bregman divergences are asymmetric (i.e., $B_F(\theta:\theta')\not= B_F(\theta:\theta')$).
Thus although the power distance is symmetric, the way to get equivalent weighted points for Bregman divergences depend on the side of the argument:

$$
B_F(\theta':\theta)=\frac{1}{2}\, \sigma(\wtheta',\weta) \not = \frac{1}{2}\, \sigma(\wtheta,\weta').
$$ 

See Figure~\ref{fig:charts}. We call this representation of Bregman divergences the mixed-parameterization power distance.
Table~\ref{tab:exbd} lists some common Bregman generators with the corresponding gradient $\eta$-parameterizations and convex conjugates to implement Eq.~\ref{eq:mpeucl}.
Thus power distances/weighted Euclidean distances universally represent Bregman divergences.
Notice that interpreting Bregman divergences from mixed-parameterized Euclidean distances was investigated in~\cite{carlier2007monge,guo2017ambiguity,BW-2025,rankin2026jko}.
Here, we further interpret BDs as power distances or weighted Euclidean distances.

\begin{Property}
The sum $w_F(\theta)+w_{F^*}(\eta')$ of mixed-parameterized weights  satisfy the following inequality:
$$
w_F(\theta)+w_{F^*}(\eta') \leq \|\theta-\eta'\|^2.
$$
\end{Property}
 
\begin{proof}
By Fenchel-Young inequality, we have $F(\theta)+F^*(\eta')\geq \inner{\theta}{\eta'}$ and therefore it holds that
\begin{eqnarray*}
w_F(\theta)+w_{F^*}(\eta') &=& \|\theta\|^2-2\, F(\theta)+\|\eta'\|^2-2\, F^*(\eta'),\\
&\leq& \|\theta\|^2+\|\eta'\|^2-2\inner{\theta}{\eta'},\\
&\leq& \|\theta-\eta'\|^2.
\end{eqnarray*}

Since $B_F(\theta:\theta')\geq 0$ and by Proposition~\ref{prop:BDPD}
$B_F(\theta:\theta')=\frac{1}{2} \sigma(\wtheta,\weta')=\frac{1}{2}  \left( \|\theta-\eta'\|^2-(w_F(\theta)+w_{F^*}(\eta')) \right) \geq 0$,
we get $\|\theta-\eta'\|^2\geq w_F(\theta)+w_{F^*}(\eta')$.

Since the Legendre-Fenchel transform reverse ordering (i.e., $F_1\leq F_2 \Leftrightarrow F_2^*\leq F_1^*$), 
it follows that when we have $F(\theta)\geq \frac{1}{2} \|\theta\|^2$ we also get $F^*(\eta)\leq \frac{1}{2}\|\eta\|^2$.
That is, $w_F(\theta)\geq 0 \Leftrightarrow w_{F^*}(\eta)\leq 0$.
\end{proof}

When $w_F(\theta)+w_{F^*}(\eta')\geq 0$, we may visualize the Bregman divergence/Fenchel-Young divergence from its identity with a power distance between a point $\theta$ and a 
sphere of center $\eta'$ and radius $\sqrt{w_F(\theta)+w_{F^*}(\eta') }$ (Figure~\ref{fig:vizBDPD}).

\begin{figure} 
\centering
\includegraphics[width=0.5\textwidth]{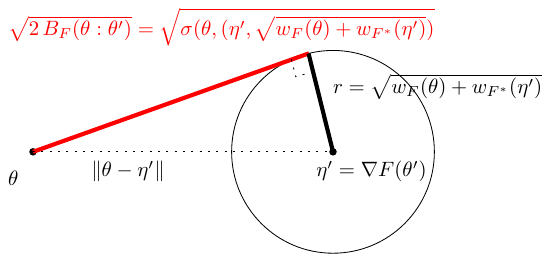}

\caption{Visualizing the Bregman divergence from the correspondence with a power distance.\label{fig:vizBDPD}}
\end{figure}

\begin{figure}
\centering
\includegraphics[width=\textwidth]{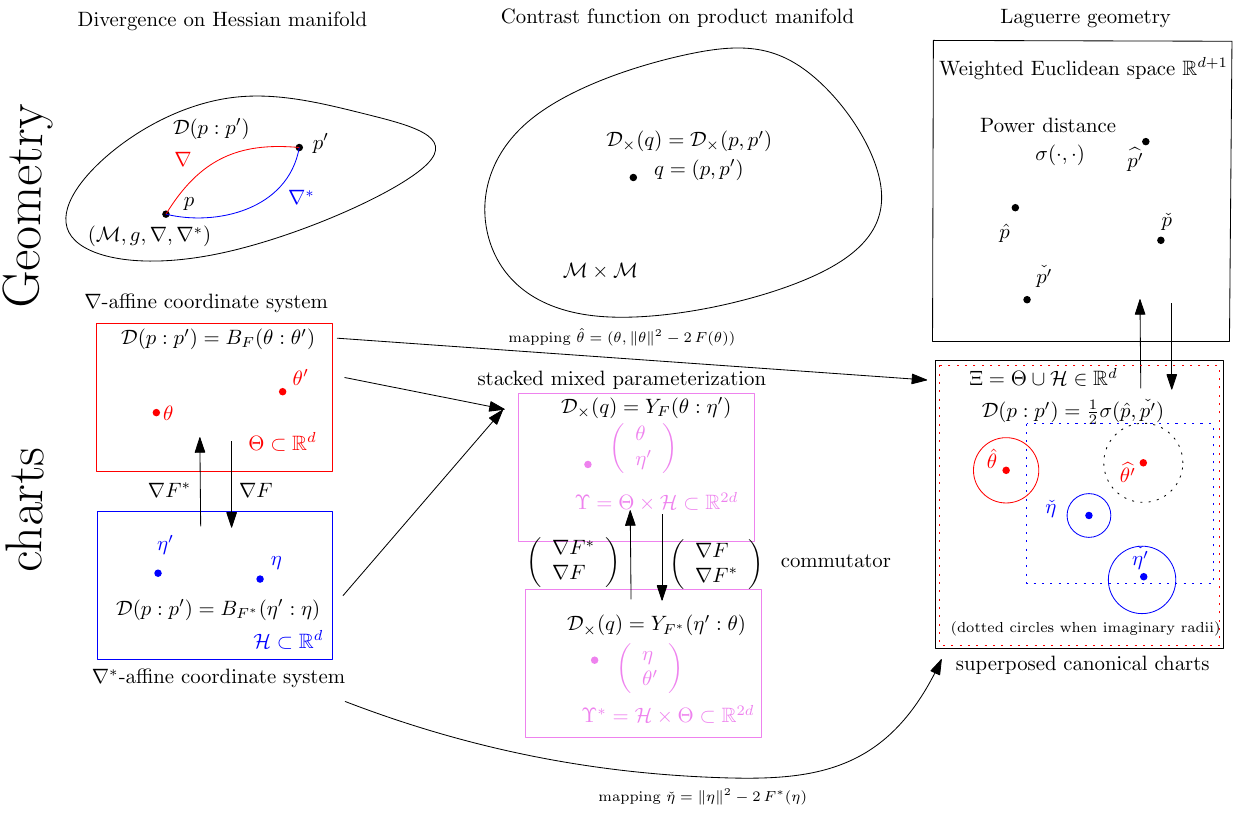}
\caption{Dual Bregman divergences are coordinatized divergences of a dually flat divergence between points on a Bregman Hessian manifold. 
The Fenchel-Young divergences are the corresponding contrast functions on the product manifold using mixed parameterization.
The Bregman/Fenchel-Young divergence amounts to a power distance in the Laguerre geometry of spheres using the weighted point  mappings $\hat{\cdot}$ and $\check{\cdot}$.
In the Euclidean coordinate system, weighted points $\hat{p}=(p,w)$ are visualized as either real circles (plain) when $w\geq 0$ or imaginary circles when $w<0$ (dotted).
}\label{fig:charts}
\end{figure}

\begin{table}
\caption{Some examples of Bregman generators with dual parameterizations and convex conjugates.}\label{tab:exbd}
 
$$
\begin{array}{l|lll}\hline
\mbox{Divergence name} & F(\theta) & \eta=\nabla F(\theta) & F^*(\eta)  \\ \hline
\mbox{Euclidean divergence} & \frac{1}{2}\inner{\theta}{\theta} & \theta & \frac{1}{2}\inner{\eta}{\eta}\\
\mbox{ext. Kullback-Leibler} & \sum_i \theta_i\log\theta_i-\theta_i & [\log\theta_i]_i & \sum_i e^{\theta_i} \\
\mbox{simplex KL divergence} & \sum_i \theta_i\log\theta_i+(1-\bar\theta)\log(1-\bar\theta) & \left[1+\log\frac{\theta_i}{1-\theta_i}\right]_i &  \log(1+\sum_i e{^\theta_i})  \\
& & & \bar\theta=\sum_i \theta_i\\
\mbox{Itakura-Saito divergence} & -\sum_i \log\theta_i & [-\theta_i^{-1}]_i & (\sum_i \log(-\eta_i^{-1}))-d \\ \hline
\end{array}
$$
\end{table}

\begin{Proposition}\label{prop:bdpd}
The Bregman divergence $B_F(\theta:\theta')$ induced by a Legendre-type generator $F$ with convex conjugate $F^*$ is equivalent to a mixed-parameterized power distance between weighted points: $B_F(\theta:\theta')=\frac{1}{2}\, \sigma(\wtheta,\weta')$,
where 
$\wtheta=(\theta,\|\theta\|^2 - 2\, F(\theta))$, $\weta=(\eta',\|\eta'\|^2 - 2\,F^*(\eta')$ with $\eta'=\nabla F(\theta')$.
\end{Proposition}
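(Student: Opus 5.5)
The plan is a direct computation in two steps. First, rewrite the Bregman divergence as the Fenchel--Young divergence in the mixed parameterization. Second, expand the squared Euclidean norm and check that the power distance reproduces exactly that expression.

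\textbf{Step 1: from Bregman to Fenchel--Young.} Since $F$ is of Legendre type, $\nabla F$ is a bijection from $\Theta$ onto $\calH$ with inverse $\nabla F^*$. Write $\eta'=\nabla F(\theta')$. The Fenchel equality case of the Legendre transform gives
$$F^*(\eta')=\inner{\theta'}{\eta'}-F(\theta').$$
Substituting $-F(\theta')+\inner{\theta'}{\eta'}=F^*(\eta')$ into $B_F(\theta:\theta')=F(\theta)-F(\theta')-\inner{\theta}{\eta'}+\inner{\theta'}{\eta'}$ yields
$$B_F(\theta:\theta')=F(\theta)+F^*(\eta')-\inner{\theta}{\eta'}=Y_F(\theta:\eta').$$

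\textbf{Step 2: expand the power distance.} Regard $\theta$ and $\eta'$ as points of the common ambient space $\bbR^d$, and write out
$$\sigma(\wtheta,\weta')=\|\theta-\eta'\|^2-\left(\|\theta\|^2-2F(\theta)\right)-\left(\|\eta'\|^2-2F^*(\eta')\right).$$
Using $\|\theta-\eta'\|^2=\|\theta\|^2+\|\eta'\|^2-2\inner{\theta}{\eta'}$, the quadratic terms $\|\theta\|^2$ and $\|\eta'\|^2$ cancel, leaving
$$\sigma(\wtheta,\weta')=2F(\theta)+2F^*(\eta')-2\inner{\theta}{\eta'}=2\,Y_F(\theta:\eta').$$
Dividing by two and combining with Step 1 gives the claim. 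The same computation also gives the symmetric form $E_{F^*}(\eta':\theta)$ for free.

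\textbf{Main obstacle.} The algebra is routine; the only real care lies in Step 1. There one must justify that $\eta'$ lies in the domain of $F^*$ and that the Legendre transform is attained at $\theta'$. This is precisely where the Legendre-type hypothesis is used: strict convexity, smoothness, and gradient blow-up at the boundary make $\nabla F$ a bijection onto the open convex set $\calH$, and make $F^*$ differentiable there. I would cite Rockafellar's theorem on Legendre-type functions~\cite{rockafellar1997convex} for this.

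A minor point is that the weights may be negative, so the power distance is an algebraic quantity rather than a genuine squared tangent length. No positivity is needed for the identity itself.
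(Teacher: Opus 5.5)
Your proof is correct and follows the paper's own route: rewrite $B_F(\theta:\theta')$ as the Fenchel--Young divergence $Y_F(\theta:\eta')$ via the Fenchel equality at $\eta'=\nabla F(\theta')$, then use $\|\theta-\eta'\|^2=\|\theta\|^2+\|\eta'\|^2-2\inner{\theta}{\eta'}$ to absorb the cross term into the weights $\omega_F(\theta)$ and $\omega_{F^*}(\eta')$. As a minor aside, the Fenchel equality $F^*(\nabla F(\theta'))=\inner{\theta'}{\nabla F(\theta')}-F(\theta')$ already follows from the supporting-hyperplane inequality for any differentiable convex $F$. So the identity itself does not need the Legendre-type hypothesis; that hypothesis is what makes the dual parameterization $\theta\leftrightarrow\eta$ a bijection.
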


We can further define the weighted Bregman divergence:
\begin{Definition}
Let $\sigma_F(\bar\theta=(\theta,w):\bar\theta'=(\theta',w'))\eqdef B_F(\theta:\theta') - w-w'$ define the weighted Bregman divergence.
\end{Definition}

\begin{Property}
The  weighted Bregman divergence $\sigma_F(\bar\theta:\bar\theta')$ amounts to a power distance 
$\sigma(\theta,(\eta',w_F(\theta)+w_{F^*}(\eta')+w+w'))$.
\end{Property}

\begin{proof}
We have
\begin{eqnarray*}
\sigma_F(\bar\theta:\bar\theta')&=&B_F(\theta:\theta')- w-w' ,\\
&=& \sigma(\wtheta,\weta')-w-w',\\
&=& \sigma(\theta,(\eta',w_F(\theta)+w_{F^*}(\eta')+w+w')).
\end{eqnarray*}
\end{proof}

Let $\calE=\{\eta_i=\nabla F(\theta_i)\}\subset\calH$ be the set of dual gradient parameters.

The Bregman MEB can now be reinterpreted as an equivalent MEB with respect to the power distance~\cite{aurenhammer1987power} as follows:

\begin{eqnarray*}
\argmin_{\theta\in\Theta} \max_{i\in [n]} B_F(\theta:\theta_i) &=& \argmin_\theta \max_{i} Y_F(\theta:\eta_i),\\
&=&\argmin_\theta \max_i E_F(\theta:\eta_i),\\
&\equiv& \argmin_\theta \max_i \|\theta-\eta_i\|^2 -  \omega_{F^*}(\eta_i),\\
&=&\argmin_\theta \max_i \|\theta-\eta_i\|^2 -\|\eta_i\|^2 + 2\,F^*(\eta_i),\\
&=& \argmin_\theta \max_i  \sigma(\theta,\hat{p}_i),
\end{eqnarray*}
where $\sigma(x,\hat p)\eqdef\|x-p\|^2-w$ is the power distance of $x$ to a weighted point $\hat{p}=(p,w)$, and 
$\hat\calS\eqdef\{\hat{p}_i=(\eta_i,\|\eta_i\|^2 - 2\,F^*(\eta_i)\})$ is the corresponding set of weighted points.

The power distance of a point $x$ to a weighted point $\hat{p}$ can be visualized using Pythagoras' theorem as illustrated in Figure~\ref{fig:powerdist}.
In 2D, the tangent points $a$ when they exists satisfy $\|a-p\|=r$ and $(a-p)^\top (x-a)=0$, and have coordinates 
$$
a= p+\frac{r}{\|p-x\|^2}(x-p)\pm r\sqrt{\|p-x\|^2-w}\frac{x-p}{\|x-p\|^2}.
$$

\begin{figure} 
\centering
\includegraphics[width=0.5\textwidth]{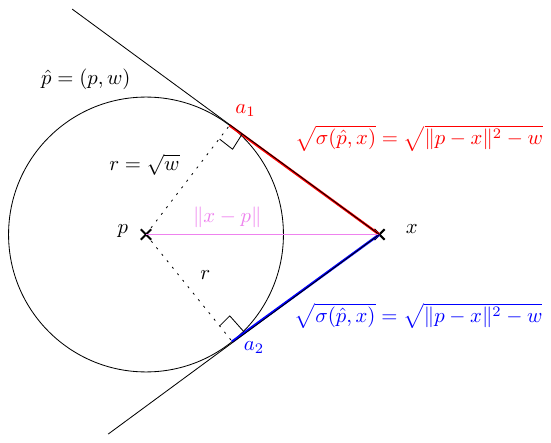}

\caption{The power distance $\sigma(\hat{p},x)$ of a weighted point $\hat{p}=(p,w)$ shown as a circle of center $p$ and radius $r=\sqrt{w}$ can be visualized as the squared length of the   point $x$ to the points $a_i$
 of the circle $(p,r=\sqrt{w})$ such that lines $(xa_i)$ are tangent lines:  $\sigma(\hat{p},x)=\|x-a_i\|^2$.}
\label{fig:powerdist}

\end{figure}

Let $\Sigma_\sigma(c^*,w^*)$ be the optimal power MEB where $c^*$ is the optimal power circumcenter and $w^*$ is the power radius (although it can be negative).
Then we have $\theta_L=c^*$.
Since power diagrams are also called Laguerre Voronoi diagrams~\cite{imai1985voronoi,aurenhammer1987power,wejrzanowski2013structure}, we also call the weighted points the Laguerre parameters and the Laguerre weights.

\begin{Proposition}[Bregman MEBs as power MEBs]\label{prop:equivMEB}
The left Bregman MEB circumcenter $\theta^*_L=c(\MEB_F(\calT))$ on a finite parameter set $\calT=\{\theta_i\}\subset\Theta$ is equivalent to the circumcenter of the power MEB on the corresponding Laguerre weighted point set $\hat\calP=\{\hat{p}_i\}$ where 
$\hat{p}_i=\left(p_i=\eta_i=\nabla F(\theta_i),w_i=\|\eta_i\|^2-2\,F^*(\eta_i)\right)$:
$$
c(\MEB_F^L(\calT))=c(\MEB_\sigma(\hat\calP)).
$$
\end{Proposition}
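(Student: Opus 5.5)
The plan is to show that the objective functions of the two min-max problems differ only by a positive factor and a term that does not depend on $i$. Their minimizers then coincide.

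\textbf{Step 1: rewrite each term.} Fix $\theta\in\Theta$ and $i\in[n]$, and put $\eta_i=\nabla F(\theta_i)$. By Proposition~\ref{prop:bdpd}, $B_F(\theta:\theta_i)=Y_F(\theta:\eta_i)$, which equals
$$
\frac{1}{2}\left(\|\theta-\eta_i\|^2-\omega_F(\theta)-\omega_{F^*}(\eta_i)\right).
$$
The power distance of the point $\theta$ to the weighted point $\hat p_i=(\eta_i,w_i)$ with $w_i=\omega_{F^*}(\eta_i)$ is $\sigma(\theta,\hat p_i)=\|\theta-\eta_i\|^2-w_i$. Comparing the two expressions gives the pointwise identity
$$
B_F(\theta:\theta_i)=\frac{1}{2}\,\sigma(\theta,\hat p_i)-\frac{1}{2}\,\omega_F(\theta).
$$
Here the offset $\frac12\omega_F(\theta)=\frac12\|\theta\|^2-F(\theta)$ depends on $\theta$ but not on $i$.

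\textbf{Step 2: take the maximum over $i$.} Because the offset is independent of $i$ and the factor $\frac12$ is positive, we get
$$
\max_i B_F(\theta:\theta_i)=\frac12\max_i\sigma(\theta,\hat p_i)-\frac12\omega_F(\theta).
$$

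\textbf{Step 3: compare the minimizers (the main obstacle).} The offset $-\frac12\omega_F(\theta)$ still varies with $\theta$, so the identity in Step 2 alone does not show that the two min-max problems have the same minimizer. The derivation displayed before the proposition hides this in the ``$\equiv$'' step. I would handle it explicitly. Define $g(\theta)=\max_i\sigma(\theta,\hat p_i)$. Each $\sigma(\theta,\hat p_i)$ is $\|\theta\|^2$ plus a function affine in $\theta$, so $g(\theta)=\|\theta\|^2+\max_i\ell_i(\theta)$ with $\ell_i(\theta)=-2\inner{\theta}{\eta_i}+\|\eta_i\|^2-w_i$. Likewise $\max_i B_F(\theta:\theta_i)=F(\theta)+\max_i\ell'_i(\theta)$ with affine $\ell'_i(\theta)=-\inner{\theta}{\eta_i}+F^*(\eta_i)$. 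Since $\|\eta_i\|^2-w_i=2F^*(\eta_i)$, we have $\ell_i=2\ell'_i$. Writing $h=\max_i\ell'_i$, the two problems become $\min_\theta F(\theta)+h(\theta)$ and $\min_\theta \|\theta\|^2+2h(\theta)$, that is, $\min_\theta \frac12\|\theta\|^2+h(\theta)$.

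\textbf{Step 4: optimality conditions.} Both objectives are strictly convex, so each has a unique minimizer characterized by $0\in\partial(\cdot)$. Here $\partial h(\theta)=\mathrm{conv}\{-\eta_i : i \text{ active at } \theta\}$. This gives $\nabla F(\theta_L)=\sum_i\lambda_i\eta_i$ for the Bregman problem and $c^*=\sum_i\mu_i\eta_i$ for the power problem, with $\lambda,\mu$ in the simplex and supported on active indices.

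These two conditions coincide exactly when $\nabla F(\theta)=\theta$ on the relevant region, i.e. when $F$ is quadratic, the case in which $\omega_F$ is constant. So the step I expect to fail is matching the minimizers for a general Legendre $F$. If I cannot close it, I would restate the proposition in one of two ways. The first is for the modified objective $\max_i E_F(\theta:\eta_i)+\frac12\omega_F(\theta)$, to which the identity of Step 1 applies exactly. The second is to read the ``$\equiv$'' step as equality of the minimax \emph{values} up to the offset, and to check on a small example, for instance an extended-KL instance from Table~\ref{tab:exbd}, whether the two centers really agree.
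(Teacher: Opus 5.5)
You have not proved the proposition, but no proof can exist: the statement is false, and your diagnosis is exactly right. The paper's own argument is the displayed chain just before the statement, and its ``$\equiv$'' step silently drops $-\frac12\omega_F(\theta)$, a term that depends on the optimization variable. Your Steps 3--4 correctly reduce the two problems to $\min_\theta F(\theta)+h(\theta)$ and $\min_x\frac12\|x\|^2+h(x)$, with the same max-affine function $h(x)=\max_i\bigl(F^*(\eta_i)-\inner{x}{\eta_i}\bigr)$. Their optimality conditions place $\nabla F(\theta_L)$, respectively $c^*$, in the convex hull of the active $\eta_i$'s, and these differ.

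The literal statement already fails for $n=1$: the Bregman center is $\theta_1$, while the power center is $\eta_1=\nabla F(\theta_1)$. So the centers agree on all inputs iff $\nabla F=\mathrm{id}$, i.e.\ $\omega_F$ is constant and $F=\frac12\|\cdot\|^2+\mathrm{const}$. Being quadratic is not enough: the paper's own Euclidean normalization $F=\inner{\theta}{\theta}$ has $\nabla F=2\,\mathrm{id}$ and fails. Coincidence on particular inputs can of course still happen by accident, so your ``exactly when'' should be read as ``for all inputs''.

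The weaker reading $c^*=\nabla F(\theta_L)$, which is what Algorithm~\ref{alg:bregpowerMEB} actually uses, also fails. The following example settles the check you left open.
\begin{itemize}
\item Setup: take the one-dimensional extended KL generator $F(\theta)=\theta\log\theta-\theta$, so $\eta=\log\theta$ and $F^*(\eta)=e^\eta$. With $\theta_1=1$ and $\theta_2=e$ we get $\eta_1=0$, $\eta_2=1$ and $h(x)=\max(1,e-x)$.
\item Bregman side: $F(\theta)+h(\theta)$ decreases on $(0,e-1]$ and increases afterwards. Hence $\theta_L=e-1$ and $R_L=(e-1)\log(e-1)-e+2\approx 0.212$.
\item Power side: the Laguerre points are $\hat p_1=(0,-2)$ and $\hat p_2=(1,1-2e)$, so $\max_i\sigma(x,\hat p_i)=x^2+\max(2,2e-2x)$. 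This is minimized at the vertex $c^*=1$, where only $\hat p_2$ is active.
\item Conclusion: $c^*\neq\theta_L$ and $c^*\neq\nabla F(\theta_L)=\log(e-1)$. The algorithm returns $\nabla F^*(c^*)=e=\theta_2$, whose Bregman radius is $B_F(e:1)=1$ instead of $0.212$.
\end{itemize}

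Your fallback restatements do not rescue anything.
\begin{itemize}
\item The first is a tautology: $\max_i E_F(\theta:\eta_i)+\frac12\omega_F(\theta)=\frac12\max_i\sigma(\theta,\hat p_i)$ no longer involves $B_F$.
\item The second is ill-posed: the offset is a function evaluated at two different points. Your Step~2 identity at $c^*$ only returns $\max_i B_F(1:\theta_i)=e-2\approx 0.718>R_L$.
\end{itemize}
What genuinely survives is the pointwise content of your Step~2. At every $\theta$, the same index maximizes $B_F(\theta:\theta_i)$ and $\sigma(\theta,\hat p_i)$. So the farthest left Bregman Voronoi diagram is the farthest power diagram of $\hat\calP$ clipped to $\Theta$, and $\theta_L$ lies on it. Locating $\theta_L$ on that diagram, however, requires the $F$-dependent condition on $\nabla F(\theta_L)$, not the power condition on $c^*$.
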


Thus we build Bregman MEBs following algorithm~\ref{alg:bregpowerMEB}.

\begin{algorithm}
\caption{{\sc LeftBregmanMEB}($\calT=\{\theta_1,\ldots,\theta_n\}$,F)}\label{alg:bregpowerMEB}
Let $\hat\calP=\{\hat{p}_i=(\nabla F(\theta_i),\|\eta_i\|^2-2\,F^*(\eta_i))\}$\;
$c^*\leftarrow${\sc PowerMEB}($\hat\calP$)\;
$\theta^*_L=(\nabla F^*)(c^*)$\;
$R^*_L=\max_{i\in [n]} B_F(\theta^*_L:\theta_i)$\;
Return $\Sigma^*_L=\Sigma_F(\theta^*_L,R^*_L)$
\end{algorithm}

The right Bregman MEB circumcenter with respect to Bregman generator $F$ amounts to a left Bregman MEB circumcenter with respect to the convex conjugate $F^*$:
$$
c(\MEB_F^R(\calT))=c(\MEB_{F^*}^L(\calE))=c(\MEB_\sigma(\hat\calQ)),
$$
where $\hat\calQ=\{\hat{q}_i\}$ with $\hat{q}_i=\left(p_i=\theta_i,w_i=\|\theta_i\|^2-2\,F(\eta_i)\right)$.

Notice that the Bregman left/right circumcenters $\theta^*_L$ and $\theta^*_R$ and their corresponding radii $R^*_L$ and $R^*_R$ are in general different.

Notice that the power MEB has been proven unique in~\cite{PowerMEB-TR-2023} (Lemma~6): Thus this yields another proof of uniqueness of the Bregman MEB~\cite{ASEBB-2005}.

Laguerre geometry considers oriented spheres: Two spheres may admit different configurations like being disjoint, concentric, intersecting, internally tangent, externally tangent, subsumption~\cite{ma2026laguerre}.

Next, we consider the exact computation of power MEBs and a Frank--Wolfe-type approximation of the power MEB.

%%%%%%%%%%%%
\subsection{Exact power MEB from Welzl's {\sc MiniBall} extension}

\begin{figure}
\centering
\begin{tabular}{ccc}
 & $2$ support circles & $3$ support circles\\
$n=3$ & \includegraphics[width=0.34\textwidth]{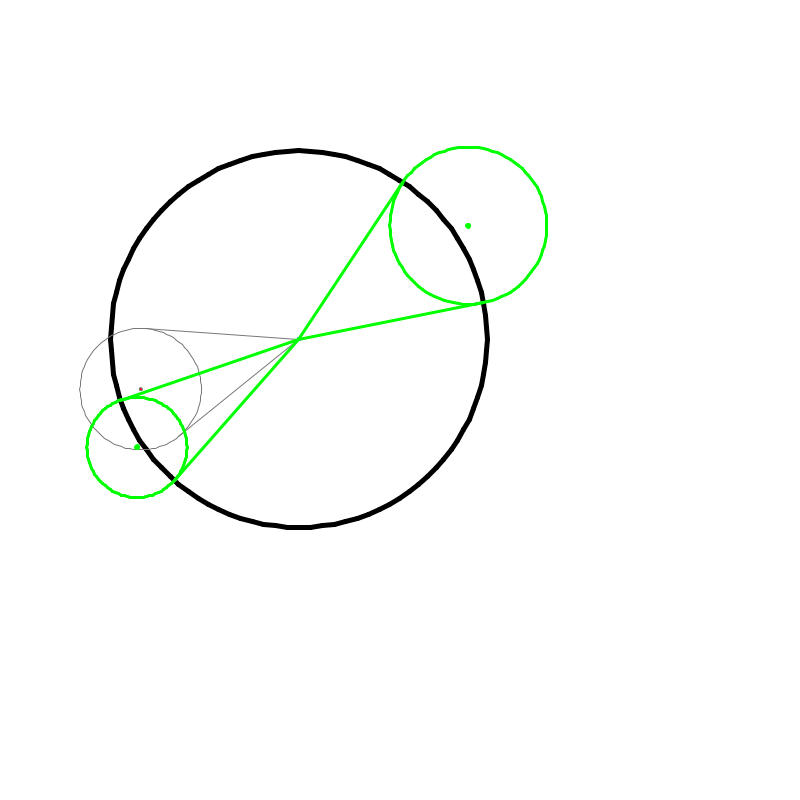} &\includegraphics[width=0.34\textwidth]{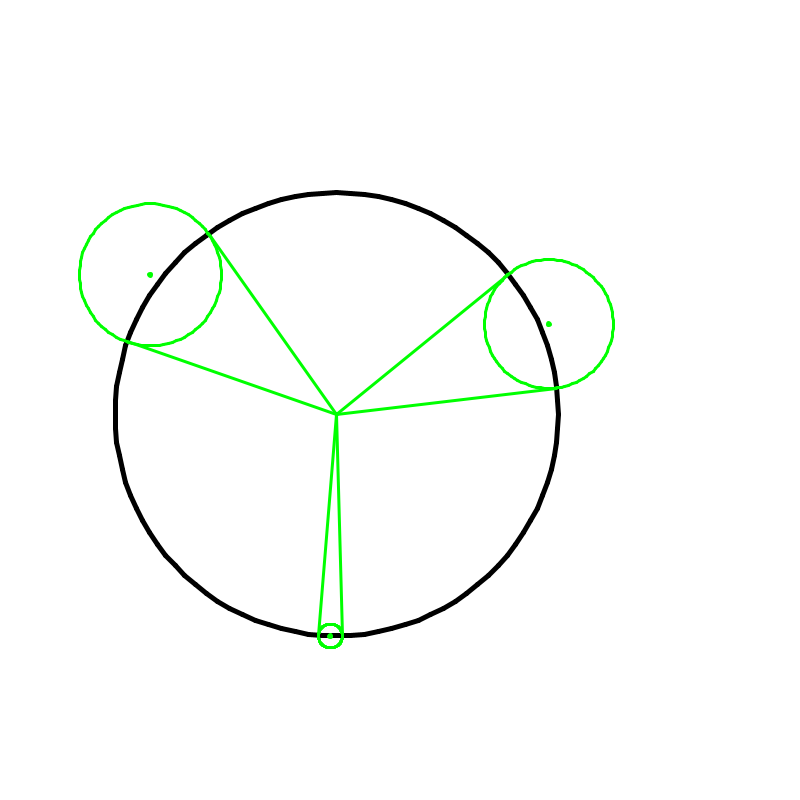}\\
%$n=8$ & \includegraphics[width=0.34\textwidth]{PMEB8-2.png} &\includegraphics[width=0.34\textwidth]{PMEB8-3.png}\\
$n=8$ & \includegraphics[width=0.34\textwidth]{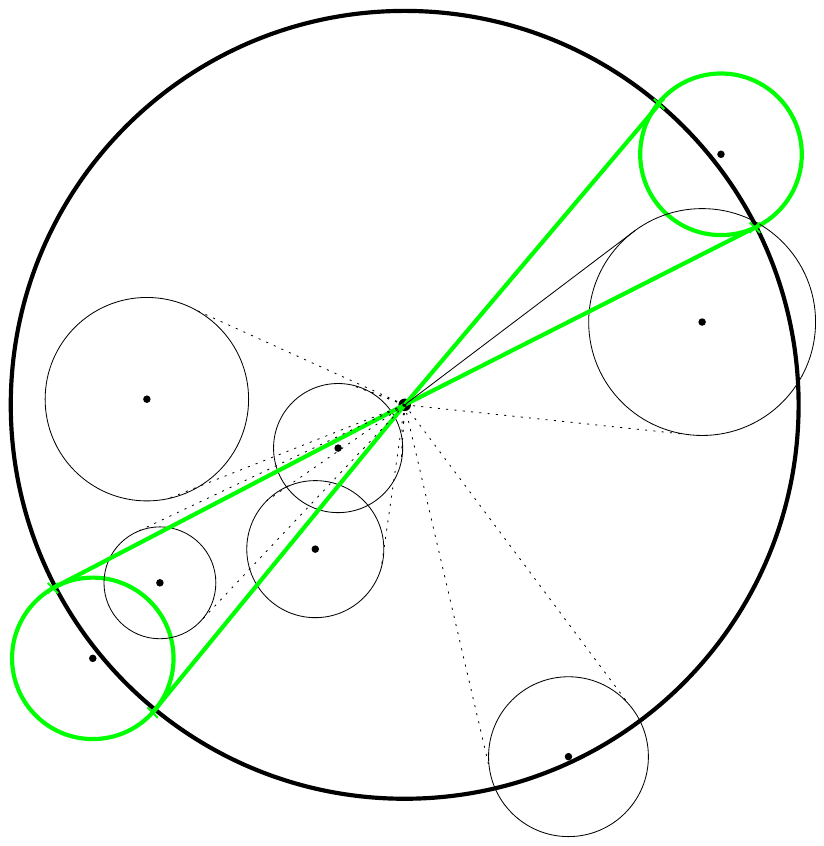} & \includegraphics[width=0.34\textwidth]{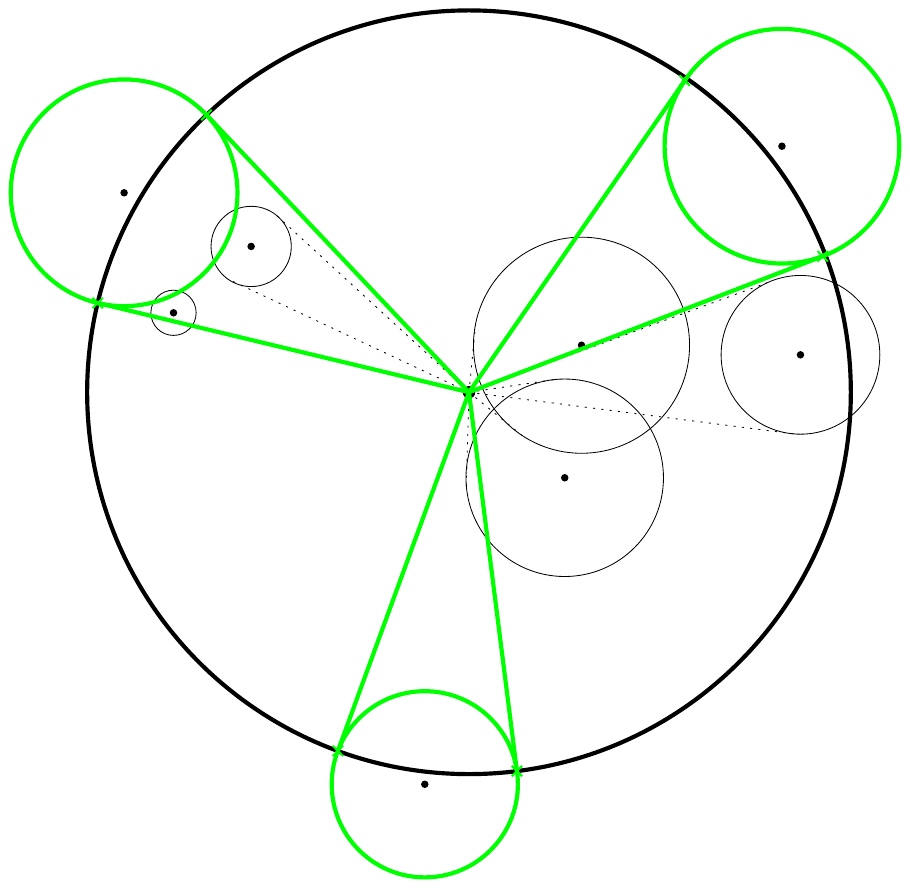}\\
$n=128$ & \includegraphics[width=0.34\textwidth]{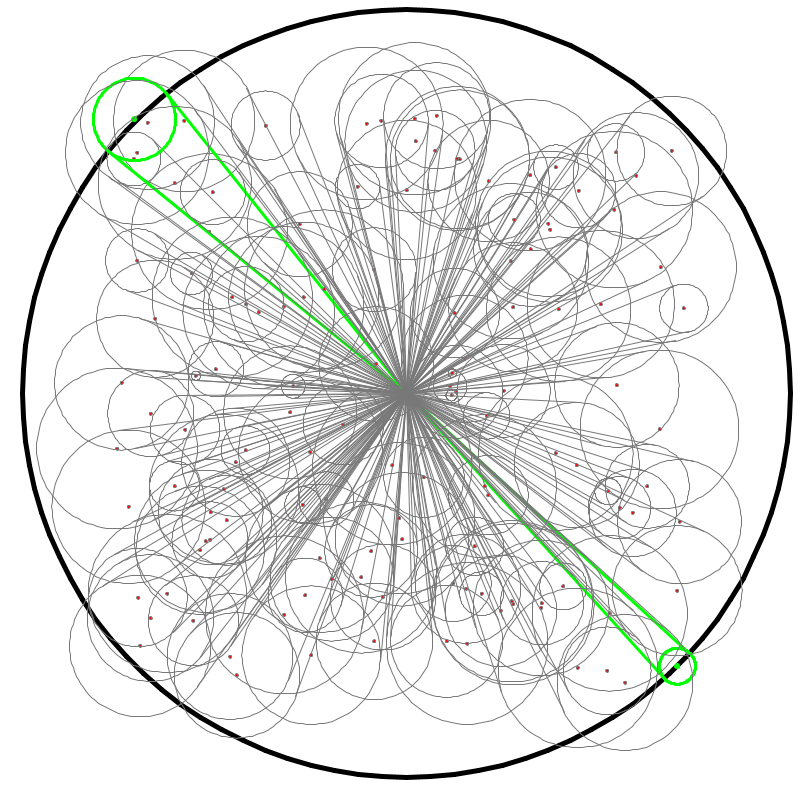} &\includegraphics[width=0.34\textwidth]{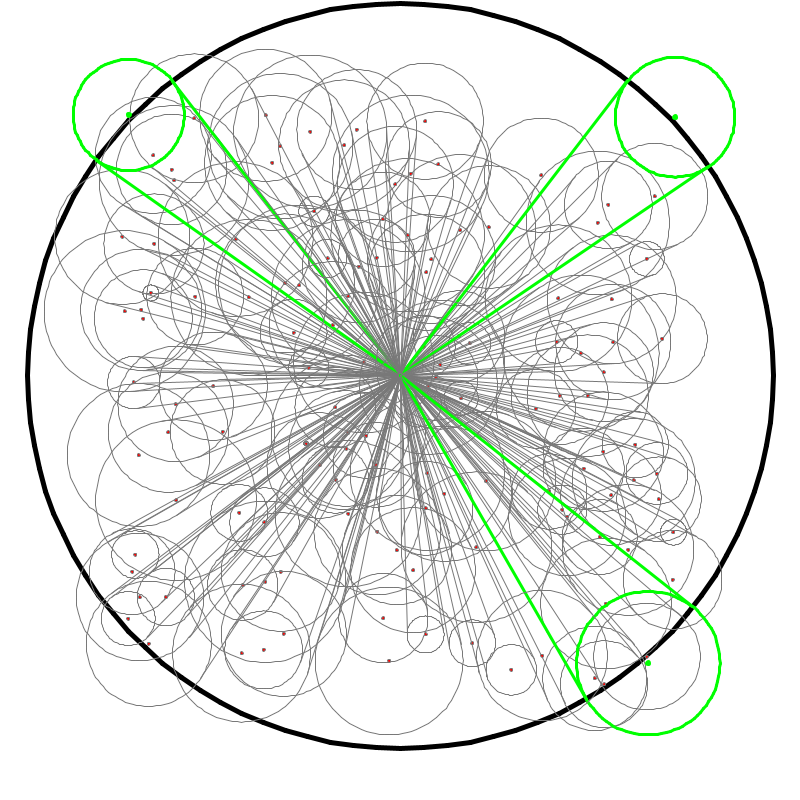} 
\end{tabular}

\caption{Examples of power MEBs $\hat{p}^*=(c_\sigma^*,w^*)$ of weighted point sets $\{\hat{p}_i=(p_i,w_i)\}$ represented by corresponding $n$ circles $\{S_i=\mathrm{circle}(p_i,r_i=\sqrt{w_i})\}$ for $n\in\{2,8,128\}$ with either $2$ or $3$ support circles.
The power distance from the power circumcenter $c_\sigma^*$ is indicated by lines joining $c_\sigma^*$ to the two tangent points to circles $S_i$.
We visually check that those lines are shorter than the power ball radius $r^*_\sigma=\sqrt{w^*}$ and equal only when the circles support the boundary of the power MEB (shown in green).
}\label{fig:powermebex}

\end{figure}

Figure~\ref{fig:powermebex} displays some experimental results of Welzl/LP-type randomized algorithm {\sc PowerMiniBall} implementing the calculation of the exact power MEB.
The LP-type MEB extends to (weak) metric spaces with Heine-Borel property~\cite{banerjee2024heine}.

Given a basis set $\calB=\{\hat{p}_1,\ldots, \hat{p}_k\}$ of $k$ spheres/weighted points  with $k\in\{2,\ldots, d+1\}$, we calculate the power sphere $S_\calB=\partial\Sigma_\sigma(\calB)$ passing through the $\hat{p}_i$'s as follows:
Let $A_\calB$ be the $(k-1)$-dimensional affine space passing through the sphere centers $p_1,\ldots, p_k$:
$$
A_\calB \eqdef \span\{p_1,\ldots,p_k\} =\left\{p_1+\sum_{i=2}^k \alpha_i (p_i-p_1)\right\}.
$$
A point $x$
Then consider the intersection $\cap_{i=2}^{k} H_{1,i}$ of the $(k-1)$ radical axis $H_{1,i}= \RadicalAxis(\hat{p}_1,\hat{p}_i)$ for $i \in\{2,\ldots, k\}$. Then the power circumcenter of $\calB$ is given by $A_\calB \cap \cap_{i=2}^{k} H_{1,i}\left(\right)$.
Let $H_{1,i}: \inner{a_i}{x}=b_i$ and define matrix $A=(a_2^\top,\ldots,a_k^\top)^\top$ and vector $b=(b_2,\ldots,b_{k})^\top$.
Then the intersection point is $x=A^{-1}b$.
Let $\hat{p}_\calB$ be the power ball induced by $\calB$. Then for any $i\in [k]$, we have $\sigma(\hat{p}_\calB,p_i)=0$, i.e., sphere $S_\calB$ is orthogonal to all spheres $S_i$ of the basis (Figure~\ref{fig:orthogonalspheres}). This sphere is called the orthosphere~\cite{cheng2013delaunay} (or orthoball) and its radius is termed the orthoradius.

\begin{figure}
\centering
\includegraphics[width=0.5\textwidth]{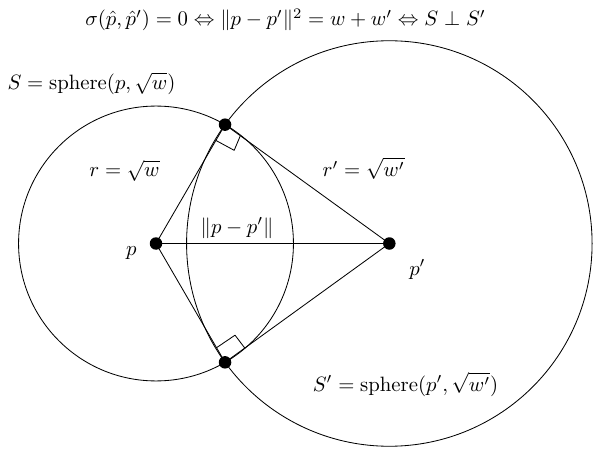}
\caption{Two weighted points $\hat p=(p,w)$ and $\hat p'=(p',w')$ with vanishing power distance have their corresponding spheres $S=\sphere(p,\sqrt{w})$ and $S'=\sphere(p',\sqrt{w'})$ orthogonal to each other (Pythagoras' theorem).}\label{fig:orthogonalspheres}
\end{figure}

When $\calT=\{\theta_1,\theta_2\}$ ($n=2$), the left Bregman MEB circumcenter amounts to the intersection of the radical axis (Figure~\ref{fig:powerbisector}) of 
 spheres $S_1=\sphere(p_1, \sqrt{w_1+W})$ and $S_2=\sphere(p_2, \sqrt{w_2+W})$ for any $W\geq -\min\{w_1,w_2\}$ with the line segment $[\theta_1\theta_2]$.

To implement the LP-type power MEB, we need to define the {\sc InOrthosphere}$(\calB=\{\hat{p}_1,\ldots,\hat{p}_{d+1}\}; \hat{q})$  (also called power test~\cite{devillers2011perturbations,gavrilova2000exact}) as follows:
First, we compute the following determinant of $(d+1)\times (d+1)$ matrix:
$$
\mathrm{Orient}(p_1,\ldots,p_{d+1})=\left|\begin{array}{ll}p_1 & 1\\ \vdots & \vdots \\ p_{d+1} & 1\end{array}\right|,
$$
and then we define
\begin{eqnarray*}
\lefteqn{\mathrm{InOrthosphere}(\calB=\{\hat{p}_1=(p_1,w_1),\ldots,\hat{p}_{d+1}=(p_{d+1},w_{d+1}))\}; \hat{q})
= }\\
&&\mathrm{sign}\left(\mathrm{Orient}(p_1,\ldots,p_{d+1}) \, 
\left|\begin{array}{lll}p_1 & \hat p_1^+ & 1\\ \vdots & \vdots &\vdots \\ 
p_{d+1} & \hat p_{d+1}^+ & 1\\
q & \hat{q}^+ & 1\end{array}\right|
\right),
\end{eqnarray*}
where $\hat p_i^+=\|p_i\|^2-w_i$ is the power lifting operator.
Then weighted point $\hat{q}$ is orthogonal to all weighted points of $\calB$ when $\mathrm{InOrthosphere}(\calB=\{\hat{p}_1=(p_1,w_1),\ldots,\hat{p}_{d+1}=(p_{d+1},w_{d+1}))\}; \hat{q})=0$, is suborthogonal when $>0$, and superorthogonal when $<0$.

The power MEB was considered in~\cite{PowerMEB-ESA-2024} in topological data analysis.
It can be solved as a LP-type problem~\cite{SEBB-2008} similar to Welzl's {\sc MiniBall} randomized algorithm~\cite{Miniball-1991}. 
(See also~\cite{banerjee2024heine} for the LP-type solving of MEBs with respect to metric spaces with Heine-Borel property.)

When all weighted points $\hat{p}_i=(p,w_i)$ coincide but have different weights (i.e., concentric spheres), 
we have $c^*=p$ and $w^*=\max_i -w_i=-\min w_i$.

Figure~\ref{fig:exactPMEB2} displays an example of the exact power MEB for two weighted points:
Let $\hat{p}_1=(p_1,w_1)$ and $\hat{p}_2=(p_2,w_2)$ be two weighted points, and denote by $\Sigma^*_\sigma=(c^*,w^*)$ the optimal power MEB.
We solve for $\lambda_{12}$ in the following equations
$$
\|x-p_1\|^2-w_1 = \|x-p_1\|^2-w_1, \quad x=p_1+\lambda_{12}\, (p_2-p_1), 
$$
and get
$$
\lambda_{12}=\frac{\|p_2-p_1\|^2+w_1-w_2}{2\, \|p_2-p_1\|^2},
$$
and thus $p^*=p_1+\lambda_{12}\, (p_2-p_1)$ and $w^*=\sigma(p^*,\hat{p}_1)=\sigma(p^*,\hat{p}_2)$.
Notice that when $w_1=w_2=w$, we get $c^*=\frac{p_1+p_2}{2}$ but $w^*$ may be negative depending on whether $w>\|p_1-p_2\|^2$ or not.

\begin{figure}
\centering
\includegraphics[width=0.5\textwidth]{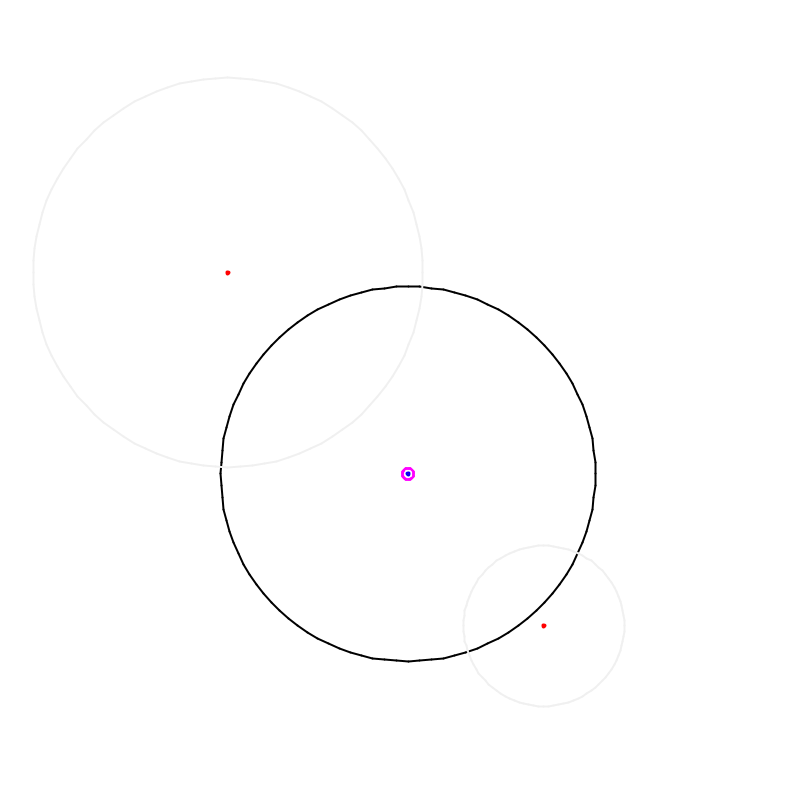}

\caption{Power MEB of two weighted points $\hat{p}_i$ visualized as corresponding spheres $S_i$: 
Exact power circumcenter $c^*$ displayed in purple and approximated FW circumcenter $\tilde{c}$ shown in blue. 
Notice that the power MEB sphere $\Sigma^*_\sigma(c^*,R^*)$ is orthogonal to the two input spheres: $\sigma(\hat{p}^*,\hat{p}_i)=0$.\label{fig:exactPMEB2}}
\end{figure}

For $n=3$ weighted points with distinct points, note that when all weights are $0$, the Euclidean MEB has those $3$ points on its boundary but not necessarily the power ball which may have only two depending on the weights.

\begin{Remark}
Chernoff information~\cite{nielsen2022revisiting} between two densities $p_{\theta_0}$ and $p_{\theta_1}$ of an exponential family amounts to find the Chernoff coefficient $\alpha\in (0,1)$ such that 
$B_F(\theta_0:\theta_\alpha)=B_F(\theta_1:\theta_\alpha)$, where $\theta_\alpha=\alpha\theta_0+(1-\alpha)\theta_1$.
Rewriting the equality with the mixed-parameterized squared Euclidean divergence and letting $\eta_\alpha=\nabla F(\theta_\alpha)$, we get
\begin{eqnarray*}
\frac{1}{2} \left( \|\theta_1-\eta_\alpha \|^2-\omega_F(\theta_1)-\omega_{F^*}({\eta_\alpha}) \right) &=&
% \frac{1}{2} \|\theta_1-\eta_\alpha \|^2-\omega_F(\theta_1)-\omega_{F^*}({\eta_\alpha}),\\
\inner{\theta_2-\theta_1}{\eta_\alpha},\\ &=& F(\theta_2)-F(\theta_1).
\end{eqnarray*}
This last equation corresponds to the optimal Chernoff equation formerly reported in~\cite{nielsen2022revisiting} (Eq. 28).
The Chernoff point $p_{\theta_\alpha}$ corresponds to the circumcenter $\eta_\alpha$ of the (left) minimum enclosing Bregman ball for the convex generator $F^*$.
Notice that the reverse Chernoff point satisfying $B_F(\theta_\beta:\theta_0)=B_F(\theta_{\beta}:\theta_1)$ admits a closed-form formula.
\end{Remark}

Since $\theta^*=c^*$, we express $c^*$ as a convex combination of the support points (with index set $\SVI\subset[n]$):
$$
c^*=\sum_{i\in\SVI} \lambda_i p_i,
$$
where $\lambda_i>0$'s are the barycentric coordinates $\sum_{i\in\SVI} \lambda_i=1$.
Those barycentric coordinates are calculated by solving the following system by Gaussian elimination:
$$
\left[\begin{array}{ccc}p_{i_1} & \ldots p_{i_{d+1}} \cr 1 & \ldots & 1 \end{array}\right] \lambda = \left[\begin{array}{l}c\cr 1\end{array}\right] 
$$

Since $w^*$ decomposes as follows (Lemma~6 of~\cite{PowerMEB-TR-2023}): 
$$
w^*= \frac{1}{2}\, \sum_{i\in \SVI}\sum_{j\in \SVI} \lambda_i\lambda_j\, \sigma(\hat{p}_i,\hat{p}_j),
$$
and since $B_F(\theta_i:\theta_j)=\frac{1}{2} \sigma(\wtheta_i:\weta_j)$, we get the following proposition:

\begin{Proposition}
Let SVI denote the index of the support vectors of the unique Bregman MEB of circumcenter $\theta^*=\sum_{i\in\SVI} \lambda_i\theta_i$ and radius $R^*$.
Then $R^*$ can be expressed as
$$
R^*= \sum_{i\in \SVI}\sum_{j\in \SVI} \lambda_i\lambda_j\, B_F(\theta_i:\theta_j).
$$
\end{Proposition}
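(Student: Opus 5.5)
The plan is to transfer the Euclidean-type radius decomposition of the power MEB (Lemma~6 of~\cite{PowerMEB-TR-2023}) to the Bregman setting. The bridge is Proposition~\ref{prop:equivMEB} together with the mixed-parameterized identity $B_F(\theta:\theta')=\frac{1}{2}\,\sigma(\wtheta,\weta')$ of Proposition~\ref{prop:bdpd}. The argument has three steps: (i) identify the support set and barycentric coordinates of the Bregman MEB with those of the power MEB; (ii) express the Bregman radius $R^*$ through the power radius $w^*$; (iii) rewrite each pairwise power distance $\sigma(\hat p_i,\hat p_j)$ as a Bregman divergence between input parameters and substitute into the decomposition of $w^*$.

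\emph{Step (i).} By Proposition~\ref{prop:equivMEB}, the left Bregman circumcenter $\theta^*$ equals the power circumcenter $c^*$ of $\hat\calP$. The functions $\theta\mapsto B_F(\theta:\theta_i)$ and $\theta\mapsto\sigma(\theta,\hat p_i)$ differ by a term independent of $i$ (namely $-\omega_F(\theta)$, up to the factor $\frac12$). Hence the maximizing indices at the optimum coincide, and $\SVI$ is the same set in both problems. The KKT conditions of the power MEB then give $\lambda\in\Delta_n$ supported on $\SVI$ with $c^*=\sum_{i\in\SVI}\lambda_i p_i$, which is the barycentric representation used in the statement. I will take these $\lambda_i$ as the coefficients in the formula.

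\emph{Step (ii).} For $i\in\SVI$ all distances $B_F(\theta^*:\theta_i)$ equal $R^*$. I will write $2R^*=\sigma(\widehat{\theta^*},\weta_i)$ and compare it with $w^*=\sigma(c^*,\hat p_i)$. The two differ by the common additive term $\omega_F(\theta^*)$, so $R^*$ is an affine function of $w^*$. \emph{Step (iii).} Plug in Lemma~6, $w^*=\frac12\sum_{i,j\in\SVI}\lambda_i\lambda_j\,\sigma(\hat p_i,\hat p_j)$, and replace each $\frac12\sigma(\hat p_i,\hat p_j)$ by $B_F(\theta_i:\theta_j)$ through the relation $B_F(\theta_i:\theta_j)=\frac12\sigma(\wtheta_i,\weta_j)$. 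Collecting terms should then give $R^*=\sum_{i,j}\lambda_i\lambda_j B_F(\theta_i:\theta_j)$.

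\emph{Main obstacle.} Step (iii) is where this argument is most likely to fail. The weighted points $\hat p_i=(\eta_i,\omega_{F^*}(\eta_i))$ entering the power MEB are dual-side points. The pairwise quantity $\sigma(\hat p_i,\hat p_j)$ therefore lives entirely in $\eta$-coordinates, whereas $B_F(\theta_i:\theta_j)$ pairs a primal $\wtheta_i$ with a dual $\weta_j$. I would first try to reconcile the two by using the symmetries $\sigma((p,w),(p',w'))=\sigma((p,0),(p',w+w'))$ to move weights between arguments. I would also use the facts that the $\lambda_i$ are barycentric in the $p_i=\eta_i$ and that $\sum_i\lambda_i=1$, so that the extra terms $\omega_F$, $\omega_{F^*}$ and the offset from step (ii) telescope.

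I expect this bookkeeping to be delicate. The reconciliation may go through only with correction terms: the mismatch between the primal and dual barycenters $\sum\lambda_i\theta_i$ and $\sum\lambda_i\eta_i$, and the leftover terms $\sum_i\lambda_iB_F(\theta_i:\theta^*)$. If so, I would first verify the claimed equality in the quadratic case $F=\frac12\|\cdot\|^2$, where $\theta=\eta$ and all these corrections vanish. I would then identify exactly which hypotheses make the general formula hold as stated.
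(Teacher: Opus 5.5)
\textbf{The statement is false, so this cannot be completed.} Your three steps are exactly the paper's argument: Lemma~6 for $w^*$, then the substitution $\frac{1}{2}\sigma(\hat p_i,\hat p_j)\mapsto B_F(\theta_i:\theta_j)$. The step you distrust is the one the paper takes for granted, and it is not valid.

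A concrete counterexample: take $F(\theta)=\frac{1}{2}\theta^2$ on $\bbR$ (first row of Table~\ref{tab:exbd}), with $\theta_1=0$ and $\theta_2=2$. Then $\theta^*=1$, $\lambda_1=\lambda_2=\frac{1}{2}$ and $R^*=\frac{1}{2}$. But $\sum_{i,j}\lambda_i\lambda_jB_F(\theta_i:\theta_j)=\frac{1}{4}(2+2)=1=2R^*$. This also contradicts the Euclidean identity $(r^*)^2=\frac{1}{2}\sum_{i,j}\lambda_i\lambda_j\|p_i-p_j\|^2$ quoted in the Introduction.

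In general, take any $\lambda\in\Delta$ supported on $\SVI$, and set $\bar\theta=\sum_i\lambda_i\theta_i$ and $\bar\eta=\sum_i\lambda_i\eta_i$. Using only $B_F(\theta^*:\theta_i)=R^*$ for $i\in\SVI$, a direct computation gives
\[
\sum_{i,j}\lambda_i\lambda_j B_F(\theta_i:\theta_j)=R^*+\sum_i\lambda_iF(\theta_i)-F(\theta^*)-\inner{\bar\theta-\theta^*}{\bar\eta}.
\]
\begin{itemize}
\item For the primal weights of the statement ($\bar\theta=\theta^*$), the excess is the Jensen gap $\sum_i\lambda_iF(\theta_i)-F(\theta^*)$.
\item For the KKT weights ($\bar\eta=\nabla F(\theta^*)$), the excess is exactly the leftover $\sum_i\lambda_iB_F(\theta_i:\theta^*)$ you anticipated.
\end{itemize}
Both are strictly positive as soon as two distinct support points carry positive weight.

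Your planned sanity check would not rescue the formula. For quadratic $F$ the leftover does not vanish; by symmetry it equals $R^*$, which is precisely the factor $2$. What does hold, with the KKT weights, is $R^*=\sum_i\lambda_iF^*(\eta_i)-F^*(\sum_i\lambda_i\eta_i)$. For quadratic $F$ this reduces to $R^*=\frac{1}{2}\sum_{i,j}\lambda_i\lambda_jB_F(\theta_i:\theta_j)$.

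\textbf{Step (i) also fails, because it relies on Proposition~\ref{prop:equivMEB}, which you take for granted.} The common term $-\frac{1}{2}\omega_F(\theta)$ does not change which index maximizes at a fixed $\theta$. It does, however, change the minimizer over $\theta$. The Bregman MEB's optimality condition is $\nabla F(\theta^*)=\sum_i\lambda_i\eta_i$, whereas the power MEB of $\hat\calP$ requires $c^*=\sum_i\lambda_i\eta_i$.

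For example, take $F(\theta)=e^\theta$ with $\theta_1=0$ and $\theta_2=1$. The Bregman MEB has center $1/(e-1)$ and two support points. The power MEB of $\hat p_1=(1,3)$ and $\hat p_2=(e,e^2)$ has center $1$ and a single support point, and $\nabla F^*(1)=0$ is not $\theta^*$ either. So neither $c^*=\theta^*$ nor equality of the support sets is available.

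Two further mismatches remain:
\begin{itemize}
\item $c^*=\sum_i\lambda_i\eta_i$ is a barycentric representation in the $\eta_i$, not the primal $\theta^*=\sum_i\lambda_i\theta_i$ of the statement.
\item Even granting $c^*=\theta^*$, your step (ii) gives $R^*=\frac{1}{2}\bigl(w^*-\omega_F(\theta^*)\bigr)$. The substitution in step (iii) would then have to absorb both the factor $\frac{1}{2}$ and the term $\omega_F(\theta^*)$, and it cannot, since the target identity is false.
\end{itemize}
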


Next, we report an efficient approximation algorithm which turns out to be equivalent to a former Bregman MEB approximation algorithm~\cite{ASEBB-2005,nielsen2006approximating} when interpreted in the gradient space $\calH$.

%%%%%%%%%
\section{Frank--Wolfe approximation algorithm for power MEBs}\label{sec:BregmanFW}
%%%%%%%%%

The Bregman MEB was approximated in~\cite{ASEBB-2005} by a generalization of the Bad\u oiu-Clarkson algorithm {\sc BC} (Algorithm~\ref{alg:BC}) called {\sc BregmanBC} (Algorithm~\ref{alg:BBC})  with experimental results.
However, the performance guarantees of {\sc BregmanBC} were not theoretically analyzed in~\cite{ASEBB-2005}.
See Figure~\ref{fig:SEBB} for some illustrating examples of the coresets of {\sc BregmanBC}.

\def\picw{0.27\textwidth}
\begin{figure}
\centering

\begin{tabular}{lll}
% $|\calB|=2$ & $|\calB|=3$  & $|\calB|=3$ \\ \hline
\fbox{\includegraphics[width=\picw]{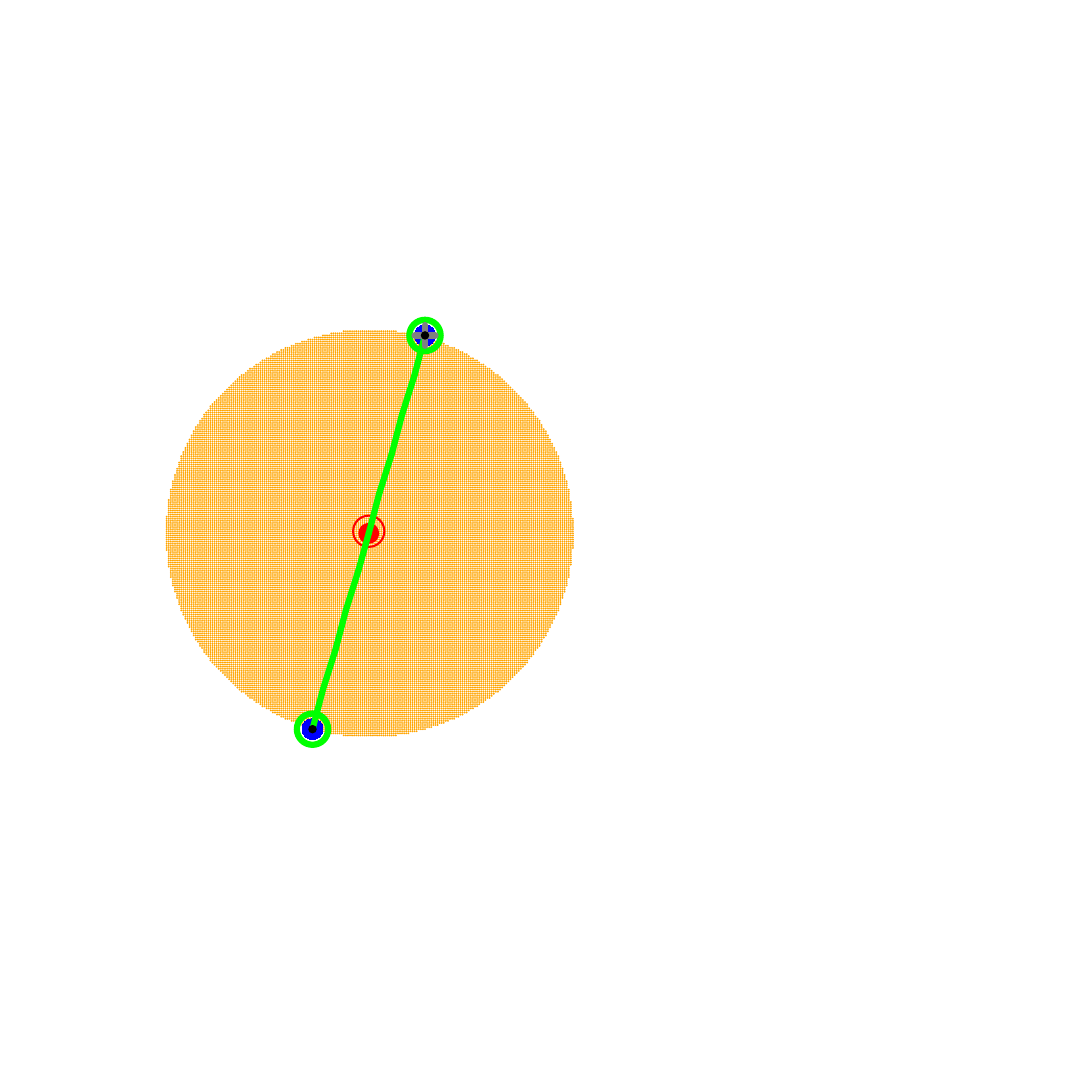}} & 
\fbox{\includegraphics[width=\picw]{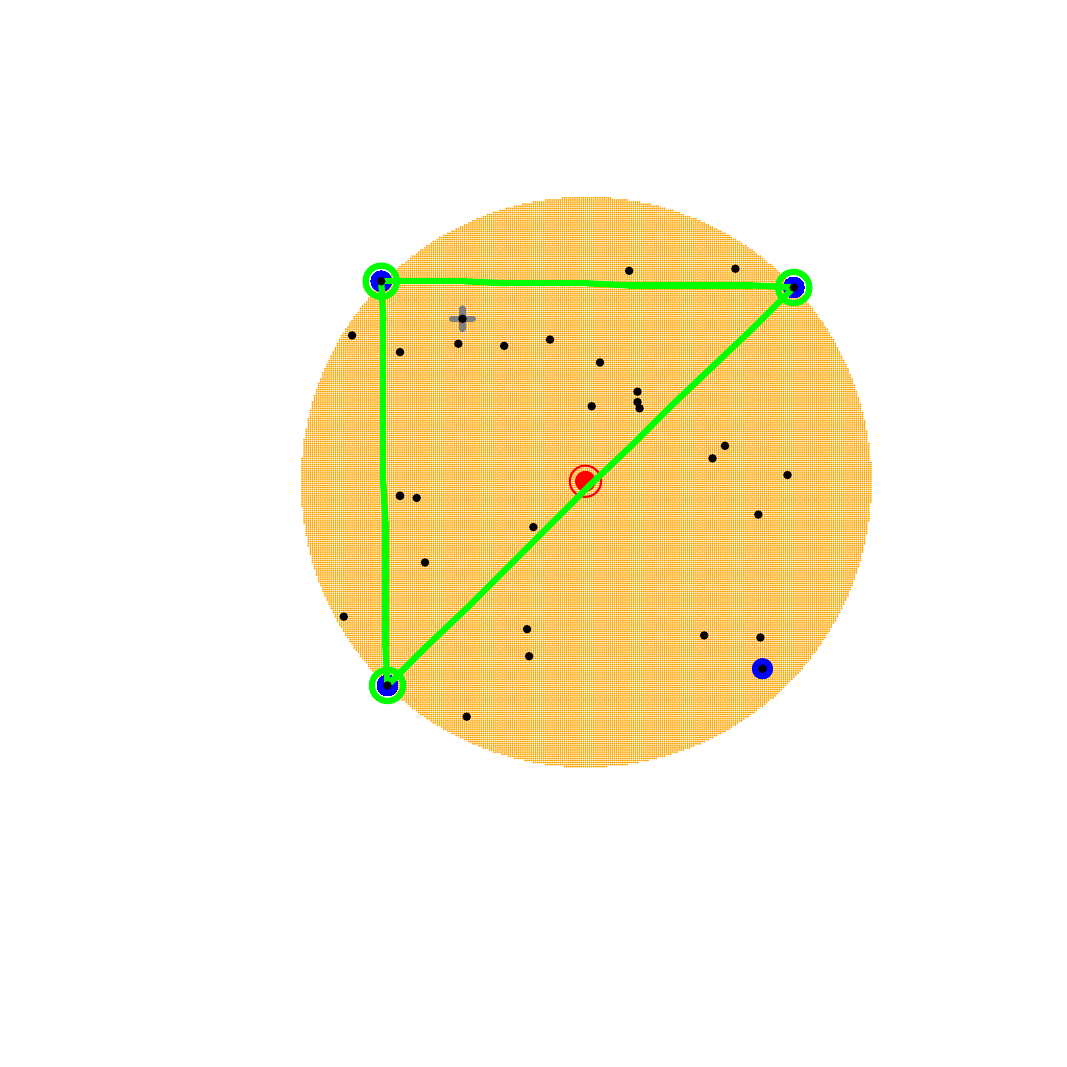}} &
\fbox{\includegraphics[width=\picw]{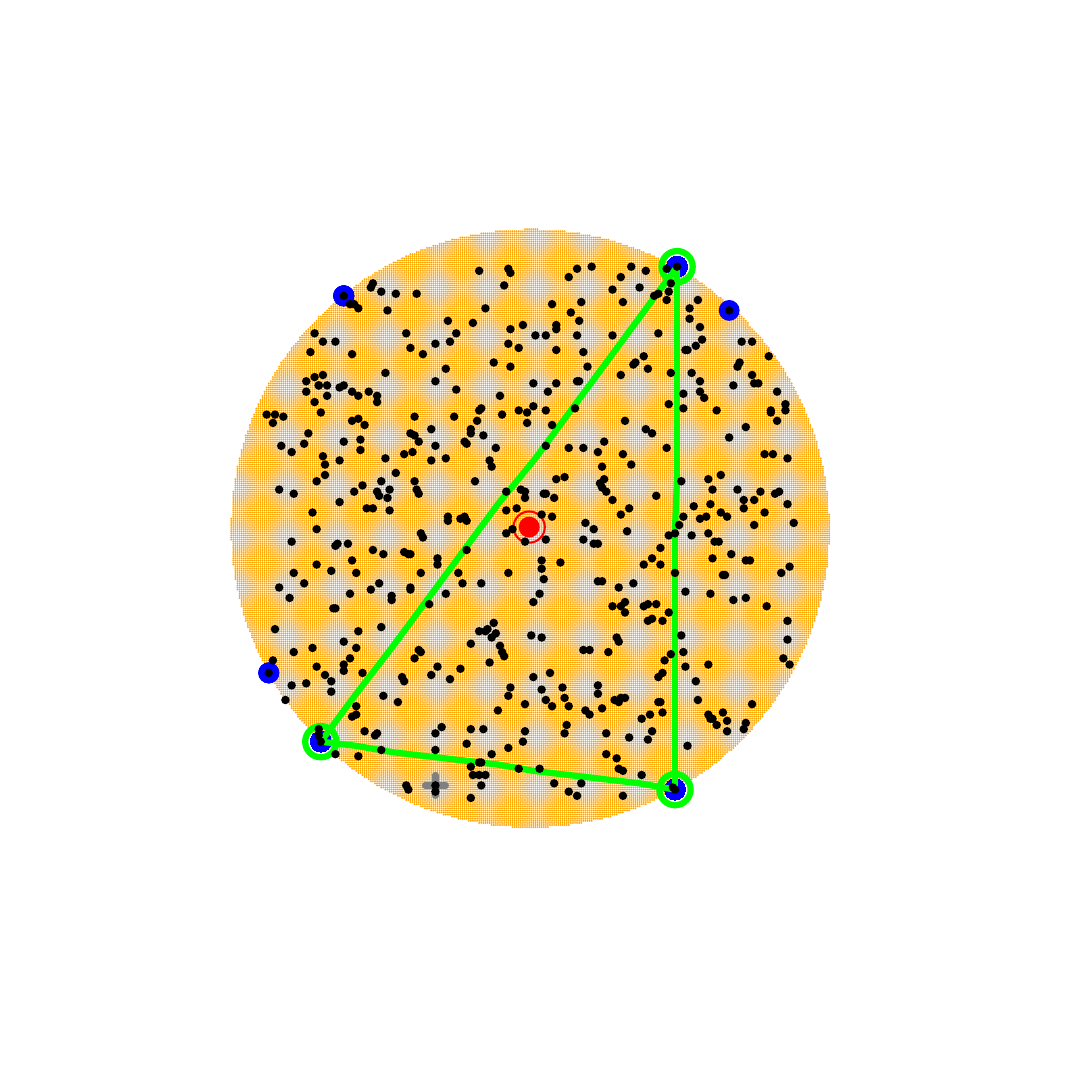}} \\
\fbox{\includegraphics[width=\picw]{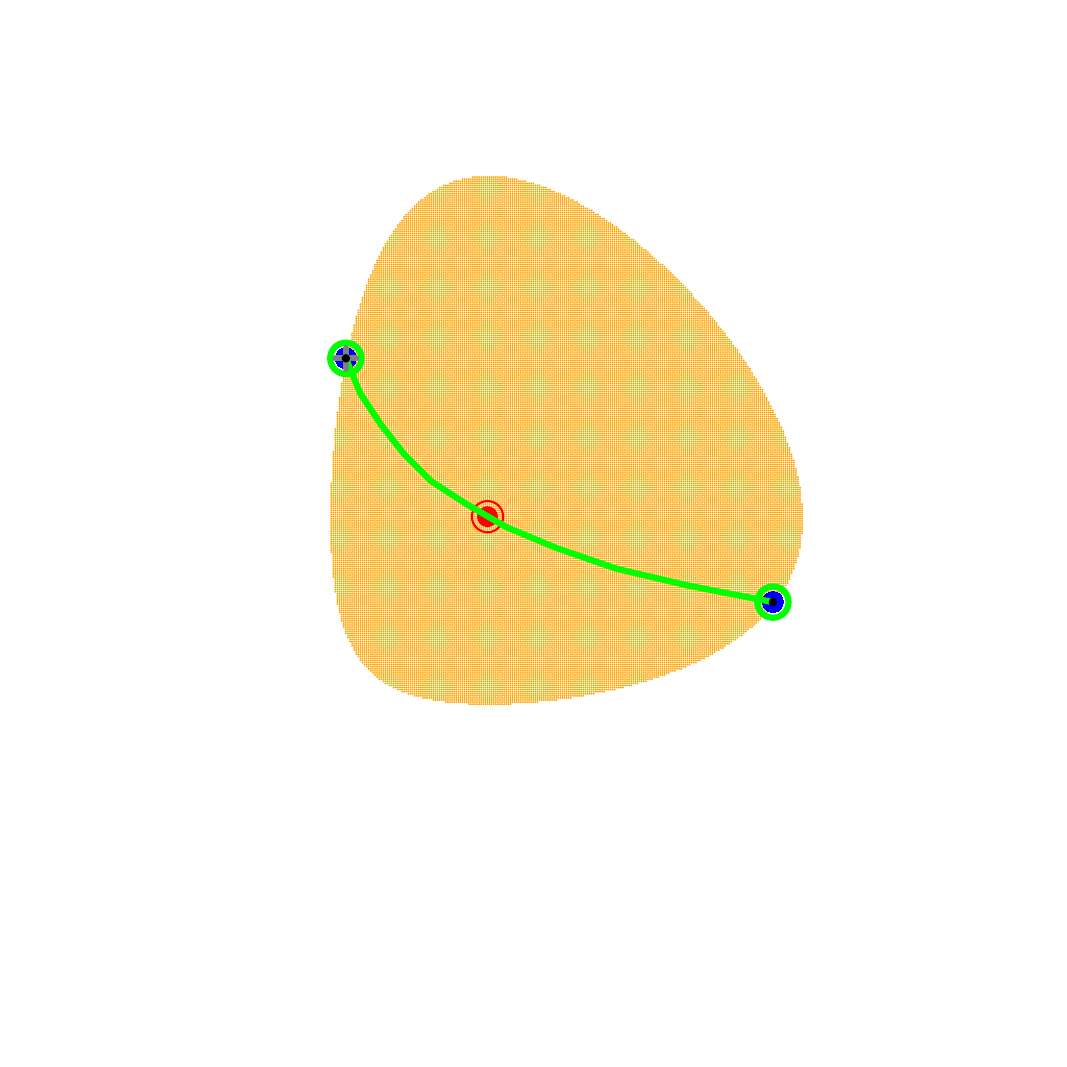}} &
\fbox{\includegraphics[width=\picw]{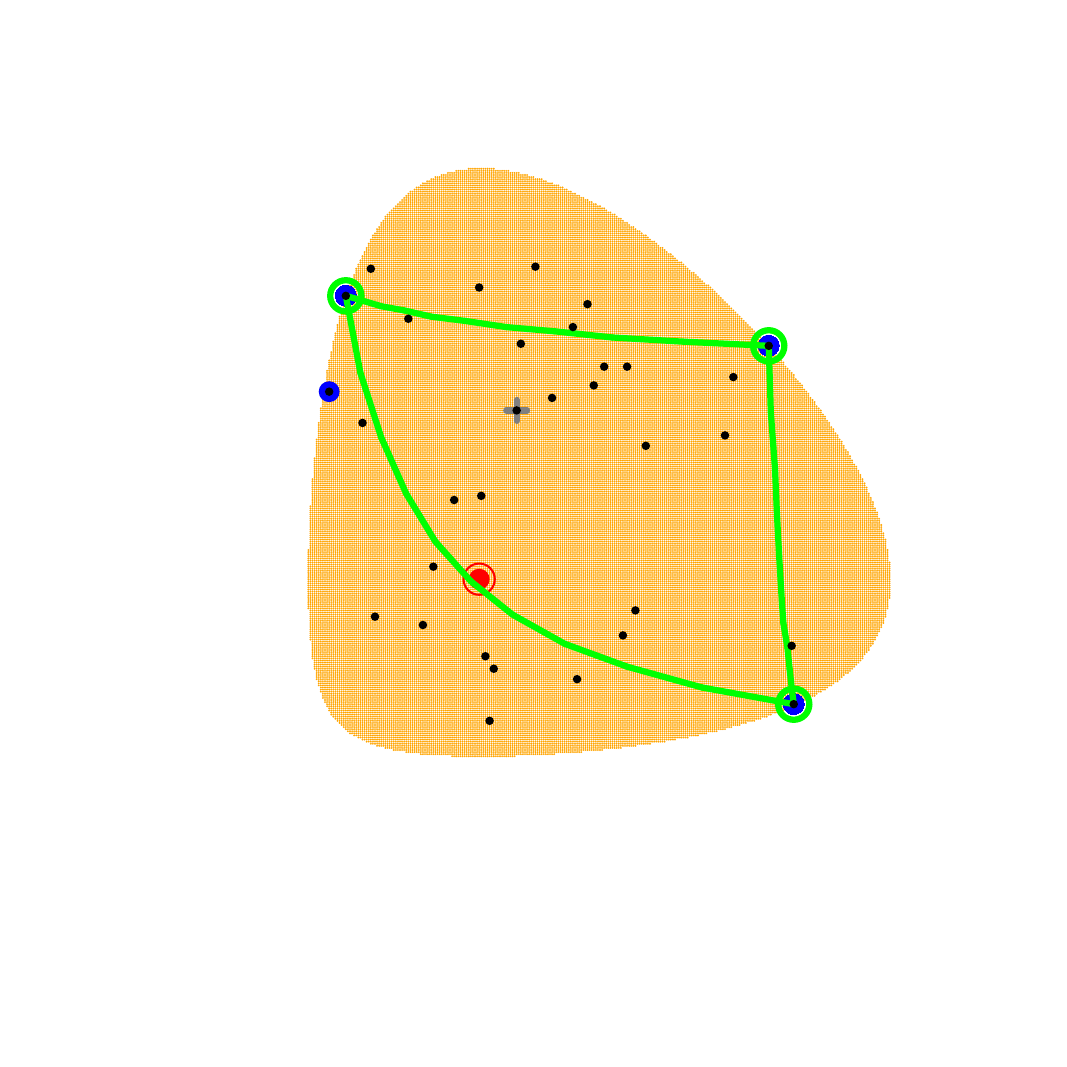}} &
\fbox{\includegraphics[width=\picw]{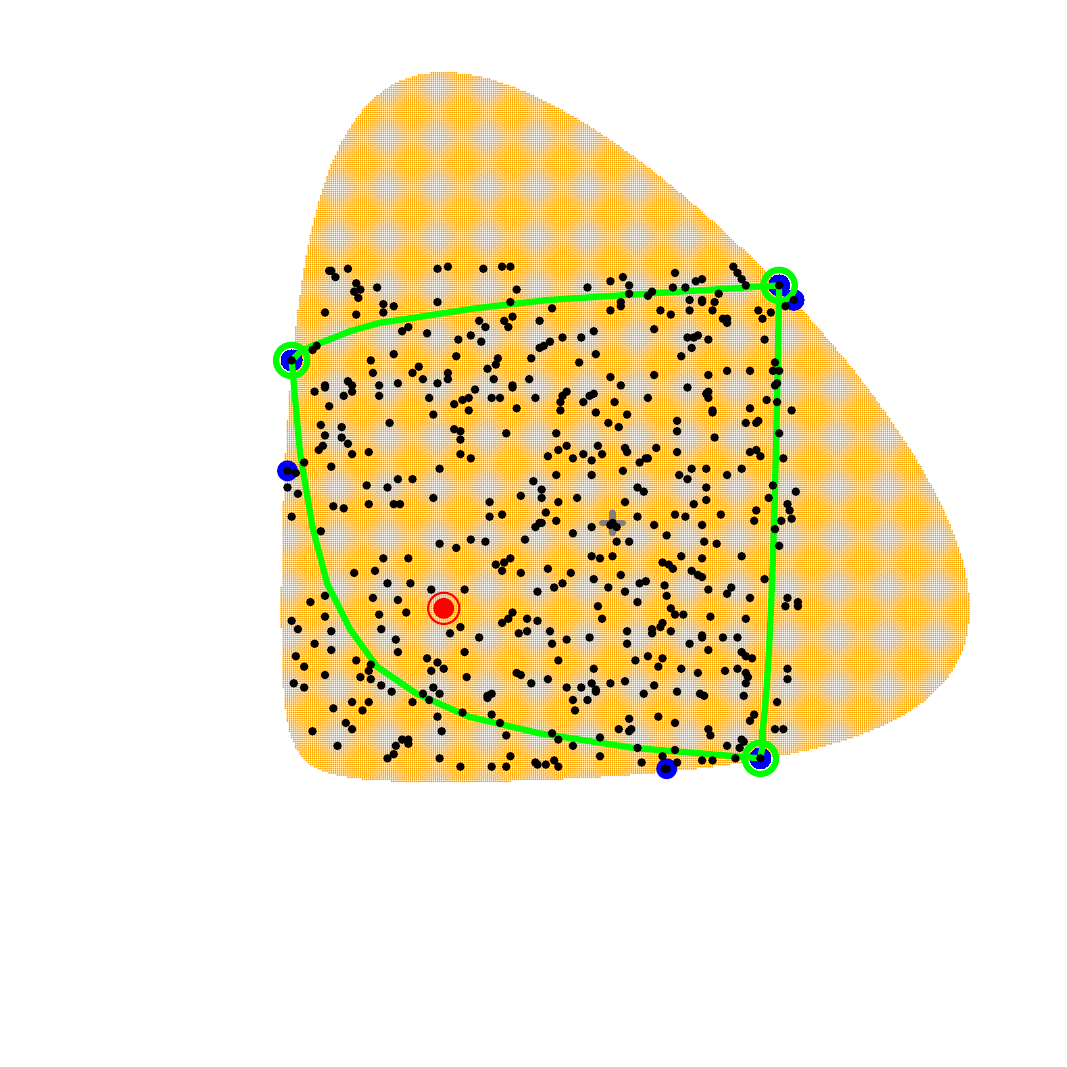}} \\
\fbox{\includegraphics[width=\picw]{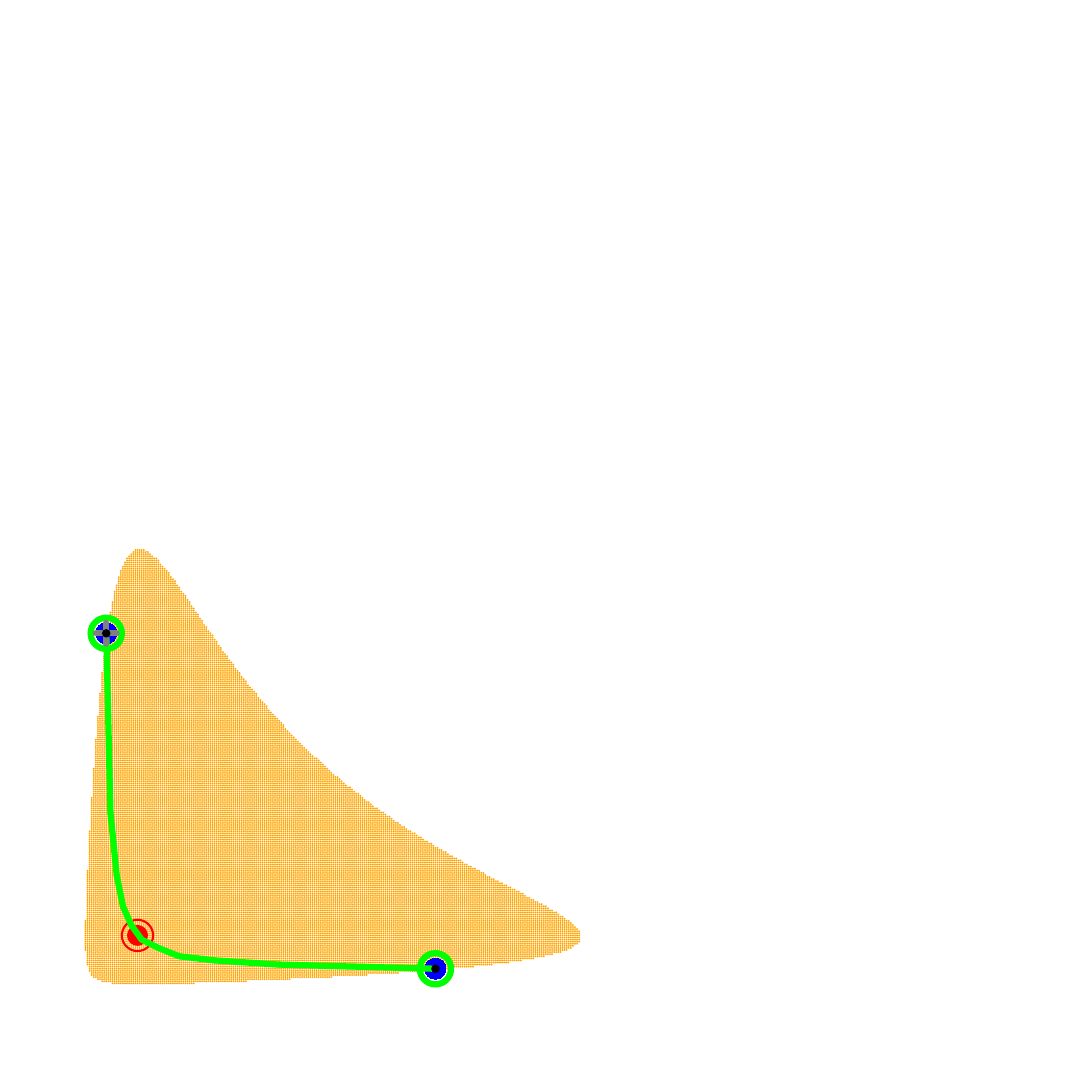}}  
 &
\fbox{\includegraphics[width=\picw]{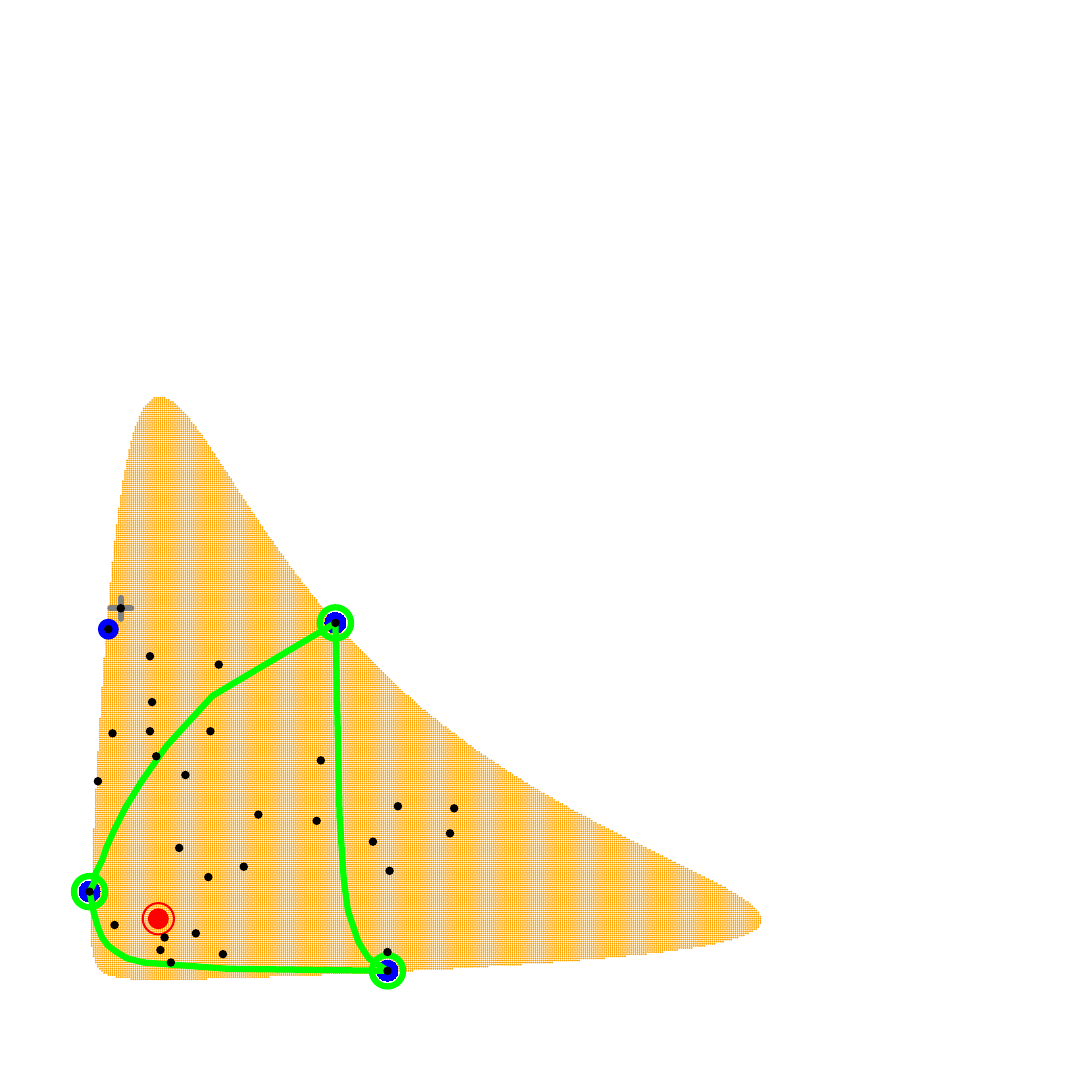}}  
 &
\fbox{\includegraphics[width=\picw]{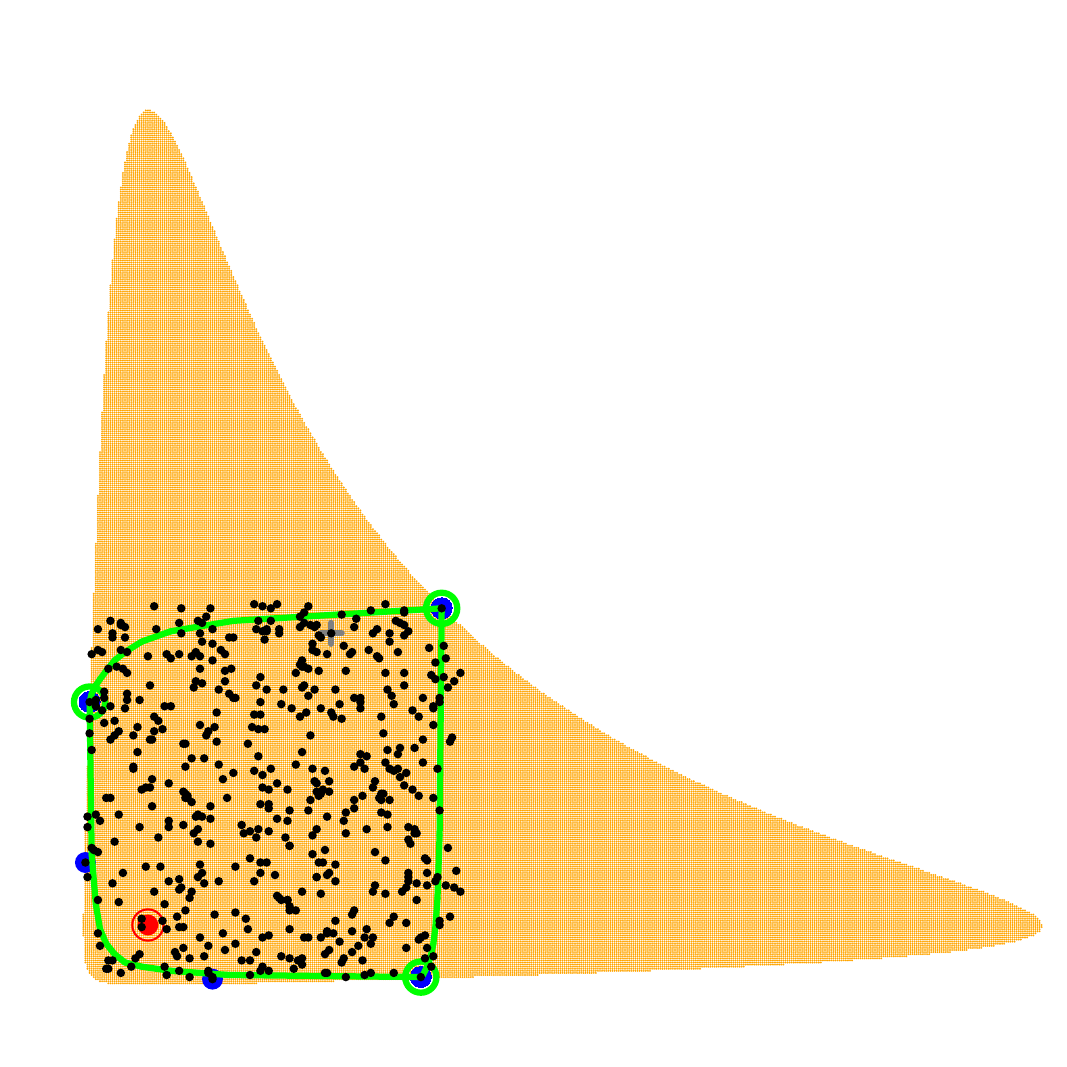} } 
\end{tabular}

\caption{Examples of Bregman exact/approximate MEBs of $n$ points for $n\in\{2,32,512\}$.
From top to bottom: squared Euclidean divergence, extended Kullback-Leibler divergence, and Itakura-Saito divergence.
The exact Bregman MEB is induced by a  basis $\calB$ of size $2$ or $3$ points with corresponding geodesic simplex shown in green.
The core-set of algorithm {\sc BregmanBC} is indicated in blue.  
}\label{fig:SEBB}
\end{figure}

The Bad\u oiu-Clarkson algorithm {\sc BC} is a particular Frank--Wolfe optimization procedure in disguise which can be further refined~\cite{clarkson2010coresets,jaggi2013revisiting}.
 
\begin{algorithm}  
\caption{Left Bregman MEB approximation algorithm {\sc BregmanBC}($\calT=\{\theta_i\}$,$T$)}\label{alg:BBC}
Choose at random $c_1\in\calT$\;
\For{$t=1,2,\ldots,T-1$}{
    $f_t\leftarrow \arg\max_{i\in[n]} B_F(c_t:\theta_i)$\;
    $c_{t+1} \leftarrow (\nabla F)^{-1}\left(
    \frac{t}{t+1} \nabla F(c_t)
    + \frac{1}{t+1} \nabla F\left(\theta_{f_t}\right)
    \right)$\;
}
Return $\tilde{c}=c_T$\;
\end{algorithm}

Algorithm {\sc BregmanBC} can be equivalently considered using the dual gradient parameters $\eta$ instead of the primal parameters $\theta$ as displayed in Algorithm {\sc BregmanBCEta} (Algorithm~\ref{alg:BregmanBCEta}).

\begin{algorithm} 
\caption{{\sc BregmanBCEta}($\calE=\{\eta_i\}$,$T$)}\label{alg:BregmanBCEta}

Choose at random $c_1\in\calE$\;
\For{$t=1,2,\ldots,T-1$}{
    $f_t \leftarrow \arg\max_{i\in[n]} B_{F^*}(\eta_i:c_t)$\;
    $c_{t+1} \leftarrow  
    \frac{t}{t+1}c_t
    + \frac{1}{t+1}{\eta_{f_t}}
 $\;
}
Return $\tilde{c}=c_T$\;
\end{algorithm}

Now, notice that the farthest point with respect to the dual Bregman divergence amounts to the farthest weighted point with respect to the power distance:
\begin{eqnarray*}
\arg\max_{i\in[n]} B_{F^*}(\eta_i:c_t) &=& E_{F^*}(\eta_i:c_t),\\
&\equiv & \arg\max \frac{1}{2} \left( \| \eta_i - c_t\|^2 -\omega_{F^*}(\eta_i) \right),\\
&=& \arg\max  \sigma(c_t,\hat{q}_i),
\end{eqnarray*}
where $\hat{q}_i=(\eta_i,\omega_{F^*}(\eta_i)=\|\eta_i\|^2-2 F^*(\eta_i))$

\begin{Proposition}\label{prop:BBCPowerFW}
The Bregman Bad\u oiu-Clarkson approximation algorithm {\sc BregmanBC} of the left Bregman MEB circumcenter amounts to a Frank-Wolfe power MEB circumcenter approximation algorithm {\sc FWPowerMEB}.
\end{Proposition}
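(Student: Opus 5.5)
The plan is to show that the iterates of {\sc BregmanBC} and of a Frank--Wolfe procedure for the power MEB on $\hat\calQ=\{\hat{q}_i=(\eta_i,\omega_{F^*}(\eta_i))\}$ coincide step by step, after mapping through $\nabla F$. The argument has three parts: move the algorithm to gradient space, match the farthest-point selection, and match the update.

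\textbf{Step 1: gradient coordinates.} Put $\eta$-coordinates on the iterates by setting $e_t=\nabla F(c_t)$. Since $F$ is of Legendre type, $\nabla F$ is a bijection $\Theta\to\calH$ with inverse $\nabla F^*$. The update of Algorithm~\ref{alg:BBC} then reads $e_{t+1}=\frac{t}{t+1}e_t+\frac{1}{t+1}\eta_{f_t}$, which is the update of {\sc BregmanBCEta} (Algorithm~\ref{alg:BregmanBCEta}). The selection rule transfers by the duality $B_F(c_t:\theta_i)=B_{F^*}(\eta_i:e_t)$, so the same index $f_t$ is chosen.

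\textbf{Step 2: selection as a linear minimization oracle.} Write the power MEB as the convex program
\[
\min_x g(x),\qquad g(x)=\max_i \sigma(x,\hat{q}_i)=\max_i\|x-\eta_i\|^2-w_i .
\]
Its Lagrangian dual is the concave quadratic program over the simplex
\[
\max_{\lambda\in\Delta_n}\ \Phi(\lambda)=\sum_i\lambda_i\left(\|\eta_i\|^2-w_i\right)-\Big\|\sum_i\lambda_i\eta_i\Big\|^2 ,
\]
with primal point $x(\lambda)=\sum_i\lambda_i\eta_i$. Compute
\[
\partial\Phi/\partial\lambda_i=\|\eta_i\|^2-w_i-2\inner{\eta_i}{x(\lambda)}=\sigma(x(\lambda),\hat q_i)-\|x(\lambda)\|^2 .
\]
The term $-\|x(\lambda)\|^2$ does not depend on $i$, so the Frank--Wolfe vertex $\arg\max_i\partial_i\Phi$ is the power-farthest weighted point $\arg\max_i\sigma(x(\lambda),\hat q_i)$. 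By the display preceding the proposition, this is exactly $\arg\max_i B_{F^*}(\eta_i:c_t)$.

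\textbf{Step 3: the update as a Frank--Wolfe step.} The Frank--Wolfe step with the standard step size $\gamma_t=\frac{1}{t+1}$ sets $\lambda^{t+1}=(1-\gamma_t)\lambda^t+\gamma_t e_{f_t}$. This induces $x^{t+1}=\frac{t}{t+1}x^t+\frac{1}{t+1}\eta_{f_t}$, identical to the update in Step 1. Initializing $\lambda^1$ at the vertex of the random starting point gives equal iterates for all $t$ by induction. Mapping back through $\nabla F^*$ recovers $c_t$, and Proposition~\ref{prop:equivMEB} identifies the target $c(\MEB_\sigma(\hat\calQ))$ with the left Bregman circumcenter in $\eta$-coordinates.

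The main obstacle is the dual derivation in Step 2. One has to check that strong duality holds and that $x^*=\sum\lambda_i^*\eta_i$ is the power circumcenter, which follows from Lemma~6 of~\cite{PowerMEB-TR-2023}. The weights $w_i$ may be negative, but this does not matter, since they enter $\Phi$ only linearly and $\Phi$ remains concave.
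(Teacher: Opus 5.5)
Your route is the same as the paper's. You rewrite {\sc BregmanBC} in $\eta$-coordinates, identify the Frank--Wolfe vertex of the power-MEB dual with the power-farthest weighted point, and match the averaging update. The step that fails is the last sentence of Step~2. It relies on the display preceding the proposition, and that display (hence your argument) confuses the $\eta$-iterate $c_t$ with the primal iterate $\theta_t\eqdef\nabla F^*(c_t)$. Since $E_{F^*}(\eta_i:\cdot)$ takes its second argument in $\theta$-coordinates, the correct identity is
\[
B_{F^*}(\eta_i:c_t)=B_F(\theta_t:\theta_i)=F^*(\eta_i)-\inner{\theta_t}{\eta_i}+F(\theta_t)=\frac{1}{2}\,\sigma(\theta_t,\hat q_i)-\frac{1}{2}\,\omega_F(\theta_t).
\]
So {\sc BregmanBC} selects $\arg\max_i F^*(\eta_i)-\inner{\eta_i}{\nabla F^*(c_t)}$. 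Your oracle instead selects $\arg\max_i F^*(\eta_i)-\inner{\eta_i}{c_t}$, using $\|\eta_i\|^2-w_i=2F^*(\eta_i)$. For arbitrary inputs these agree only when $\nabla F^*$ is the identity, i.e., the Euclidean case.

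In dual terms, the Lagrangian dual of the left Bregman MEB is $\Psi(\lambda)=\sum_i\lambda_iF^*(\eta_i)-F^*(\sum_i\lambda_i\eta_i)$, while your objective is $\Phi=2\Psi-\omega_{F^*}(\sum_i\lambda_i\eta_i)$. The extra term has $i$-dependent partial derivatives $2\inner{\eta_i}{\nabla F^*(c)-c}$, so it moves the vertex. The appeal to Proposition~\ref{prop:equivMEB} in Step~3 fails for the same reason. Passing from $E_F(\theta:\eta_i)$ to $\|\theta-\eta_i\|^2-\omega_{F^*}(\eta_i)$ discards $-\omega_F(\theta)$, which is constant inside $\max_i$ but not inside $\arg\min_\theta$.

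Here is a concrete check. Take $F(\theta)=\theta\log\theta-\theta$, so $F^*(\eta)=e^\eta$, with $\theta_1=1$ and $\theta_2=e^2$; then $\hat q_1=(0,-2)$ and $\hat q_2=(2,4-2e^2)$.
\begin{itemize}
\item Starting both algorithms at the first site ($\eta=0$), both pick index~$2$ at $t=1,2$ and reach $c_3=4/3$.
\item At $t=3$, {\sc BregmanBC} picks index~$1$, because $B_F(e^{4/3}:1)=\frac{1}{3}e^{4/3}+1\approx 2.26$ exceeds $B_F(e^{4/3}:e^2)=e^2-\frac{5}{3}e^{4/3}\approx 1.07$.
\item At the same step, {\sc FWPowerMEB} picks index~$2$, because $\sigma(\frac{4}{3},\hat q_2)\approx 11.22>\sigma(\frac{4}{3},\hat q_1)\approx 3.78$.
\item In fact $\sigma(x,\hat q_2)\geq 2e^2-4>6\geq\sigma(x,\hat q_1)$ on $[0,2]$. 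So {\sc FWPowerMEB} converges to the power circumcenter $\eta=2$, whereas the left Bregman circumcenter is $\theta_L=\frac{e^2-1}{2}$, i.e., $\eta_L=\log\frac{e^2-1}{2}\approx 1.16$.
\end{itemize}
So the iterate-by-iterate identification cannot be proved outside the Euclidean case. The paper's display has the same flaw, so citing it does not close the step.

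Your Steps~2--3 do establish something true once $\Phi$ is replaced by $\Psi$. Since $\partial\Psi/\partial\lambda_i=B_{F^*}(\eta_i:c)-F(\nabla F^*(c))$ with $c=\sum_i\lambda_i\eta_i$, {\sc BregmanBC} is exactly Frank--Wolfe with step $\frac{1}{t+1}$ on $\max_{\lambda\in\Delta_n}\Psi(\lambda)$. That coincides with {\sc FWPowerMEB} on $\hat\calQ$ only when $\nabla F^*$ is the identity.
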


Let us describe the Frank--Wolfe procedure for the power MEB of a generic weighted point set in algorithm~\ref{alg:FWpowerMEB}.
This is similar to the Bad\u oiu-Clarkson BC algorithm except that the farthest point is selected according to the power distance depending on the point weights.
The analysis of convergence of this procedure is conventional and detailed in Appendix~\ref{sec:app}.

\begin{algorithm}
\caption{{\sc FWPowerMEB}($\hat\calS=\{\hat p_i=(p_i,w_i)\}$, $T$)}\label{alg:FWpowerMEB}

Choose at random $c_1\in \{\xi_i\}$\;
\For{$t=1,2,\ldots,T-1$}{
    $f_t= \leftarrow \arg\max_{i\in [n]} \sigma(c_t,\hat{p}_i)$\;
    $c_{t+1}=
    \frac{t}{t+1}c_t
    + \frac{1}{t+1}p_{f_t}
    $\;
}
Return $\tilde{c}=c_T$\;
\end{algorithm}

Figure~\ref{fig:compare} shows some examples of power MEBs computed by both the exact LP-type algorithm and by the Frank--Wolfe coreset approximation algorithm.

\begin{figure}
\centering
\fbox{\includegraphics[width=0.4\textwidth]{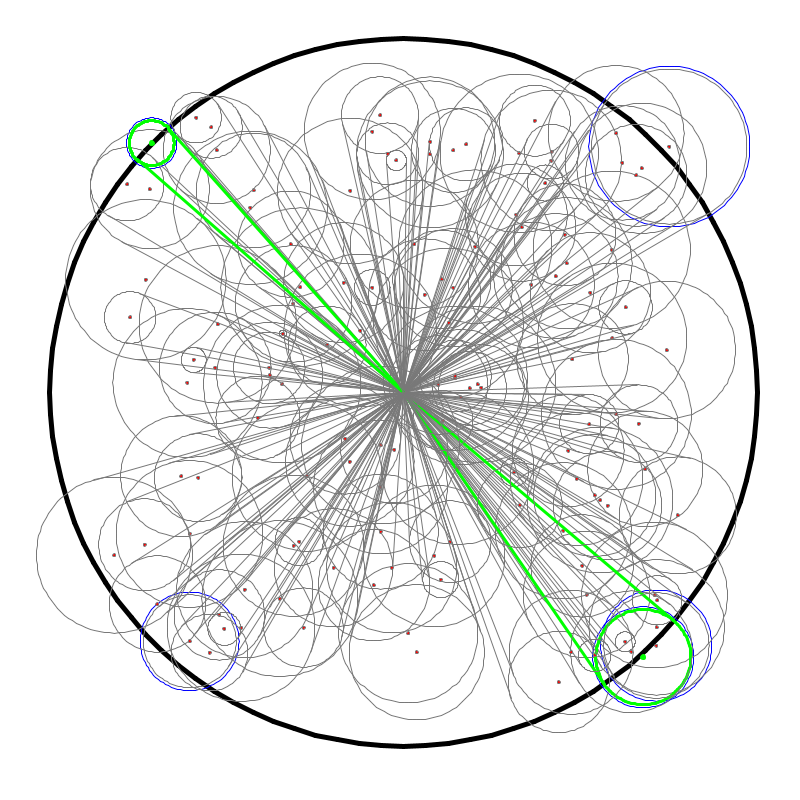}}
\fbox{\includegraphics[width=0.4\textwidth]{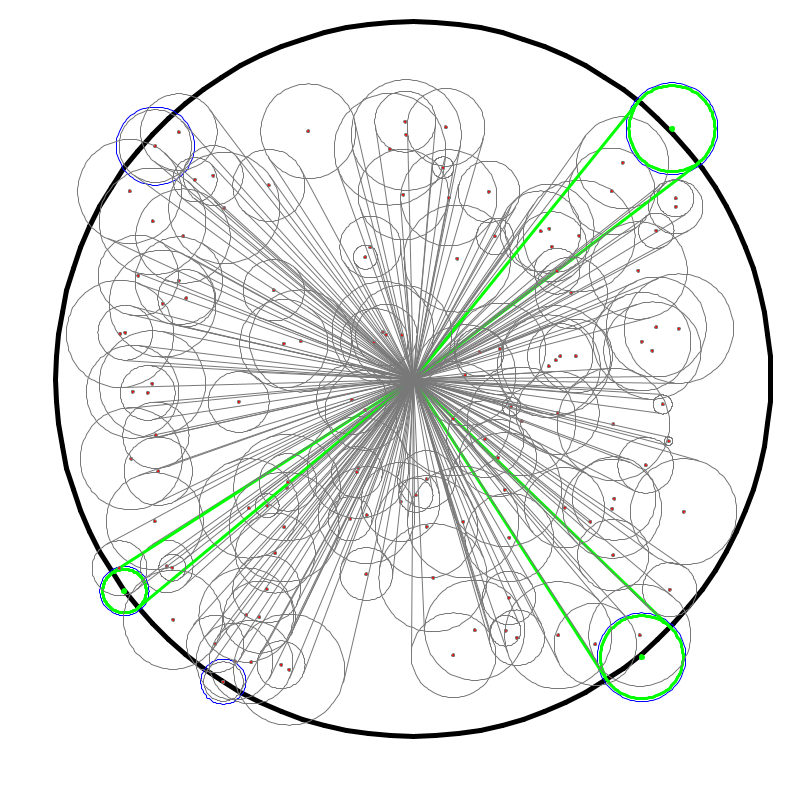}}
\caption{Power MEBs calculated with Welzl/LP-type Miniball algorithm (basis spheres displayed in green) and with Frank--Wolfe approximation algorithm (coreset spheres displayed in blue) for a set of $n=128$ spheres/weighted points.\label{fig:compare}}
\end{figure}

To accelerate {\sc FWPowerMEB}, one needs to perform fast farthest power distance queries 
$\arg\max_{i\in [n]} \sigma(\hat{q}=(q,w),\hat{p}_i=(p_i,w_i))$.
The queries are equivalent to $\arg\max_{i\in [n]} -2\inner{q}{p_i}-w_i+\inner{p_i}{p_i}$, i.e., queries
$\arg\min_{i\in [n]} \inner{q}{p_i} - b_i$ where $b_i=\frac{1}{2}(\inner{p_i}{p_i}-w_i)$.
This problem is studied in the literature as the biased minimum inner product search queries (Min-IP) and can be accelerated using GPU hardware.
Various data-structures have been proposed for Min-IPs based on space partitions of the farthest power diagram, hierarchical decompositions (ball/vantage point trees), etc.

%%%%%%%%%%%%%%%%%%%%%
\section{Equivalences of Bregman  Voronoi  diagrams with power diagrams}\label{sec:bvdpd}
%%%%%%%%%%%%%%%%%%%%%

%\cite{Voronoi-1909}\cite{brown1979geometric}

\subsection{Bisectors}\label{sec:bisector}

\begin{figure}%
\centering
\includegraphics[width=0.53\columnwidth]{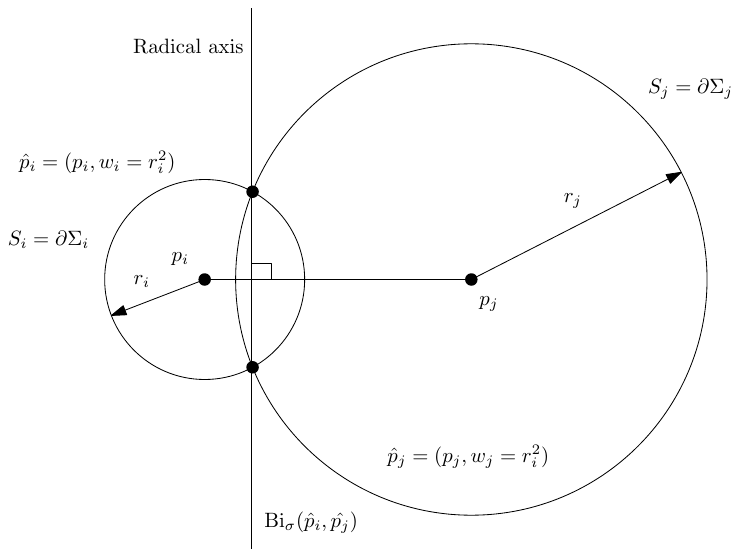}%
\includegraphics[width=0.42\columnwidth]{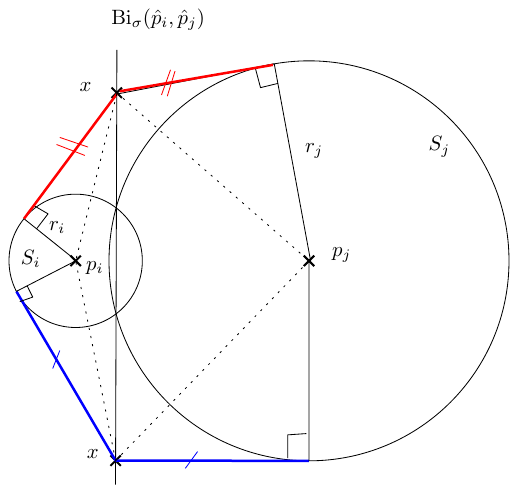
}
\caption{Laguerre geometry. Left: The power bisector between two weighted points with non-negative weights is equivalent to the radical axis of their corresponding spheres. 
The power bisector is perpendicular to the line segment joining the sphere centers.
Right: Power bisector is the locii of points $x$ at equal power distance to spheres $S_i=\hat{p}_i$ and $S_j=\hat{p}_j$. }%
\label{fig:powerbisector}%
\end{figure}

The power bisector $\Bi_\sigma(\hat p_i,\hat p_j)$ between two weighted points~\cite{aurenhammer1987power} $\hat p_i=(p_i,w_i)$ and $\hat p_j=(p_j,w_j)$ of $\bbR^{d+1}$ is given by the locii of points $x\in\bbR^d$ such that $\sigma(x,\hat p_i) = \sigma(x,\hat p_j)$ where $\sigma(x,\hat p)=\|x-p\|^2-w$ denotes the power distance of $x$ to the weighted point $\hat p=(p,w)$:

\begin{eqnarray}
\Bi_\sigma(\hat p_i,\hat p_j):&&  \sigma(x,\hat p_i) = \sigma(x,\hat p_j),\nonumber\\
&&2\,\inner{x}{p_i-p_j} + w_i-\|p_i\|^2 - w_j+\|p_j\|^2 = 0.\label{eq:powbisector}
\end{eqnarray}

The power bisector is perpendicular to the line segment $[p_ip_j]$ (Figure~\ref{fig:powerbisector}).
When both weights $w_i$ and $w_j$ are non-negative, we can interpret $\hat p_i$ and $\hat p_j$ as spheres $S_i=\sphere(p_i,r_i=\sqrt{w_i})$ and 
$S_j=\sphere(p_j,r_j=\sqrt{w_j})$ where $\sphere(c,r)\eqdef\{x\in\bbR^d \st \|c-x\|^2-r=0\}$ denotes a sphere of center $c$ and radius $r\geq 0$.
The power bisector corresponds to the radical axis of the corresponding spheres (Figure~\ref{fig:powerbisector}):
$$
\Bi_\sigma(\hat p_i,\hat p_j) = \Radical(S_i,S_j),
$$
where $\Radical(S,S'): S(x)-S'(x)=0$ where $S(x): \|x-c\|^2=r^2$ is the Cartesian equation of the sphere $S=\sphere(c,r)$.

\begin{figure}%
\centering
\begin{tabular}{lccc}
 &  $w_i=r_i^2$ & $w_i+W$ & $w_i=0$ \\
$n=2$ &\includegraphics[width=0.26\columnwidth]{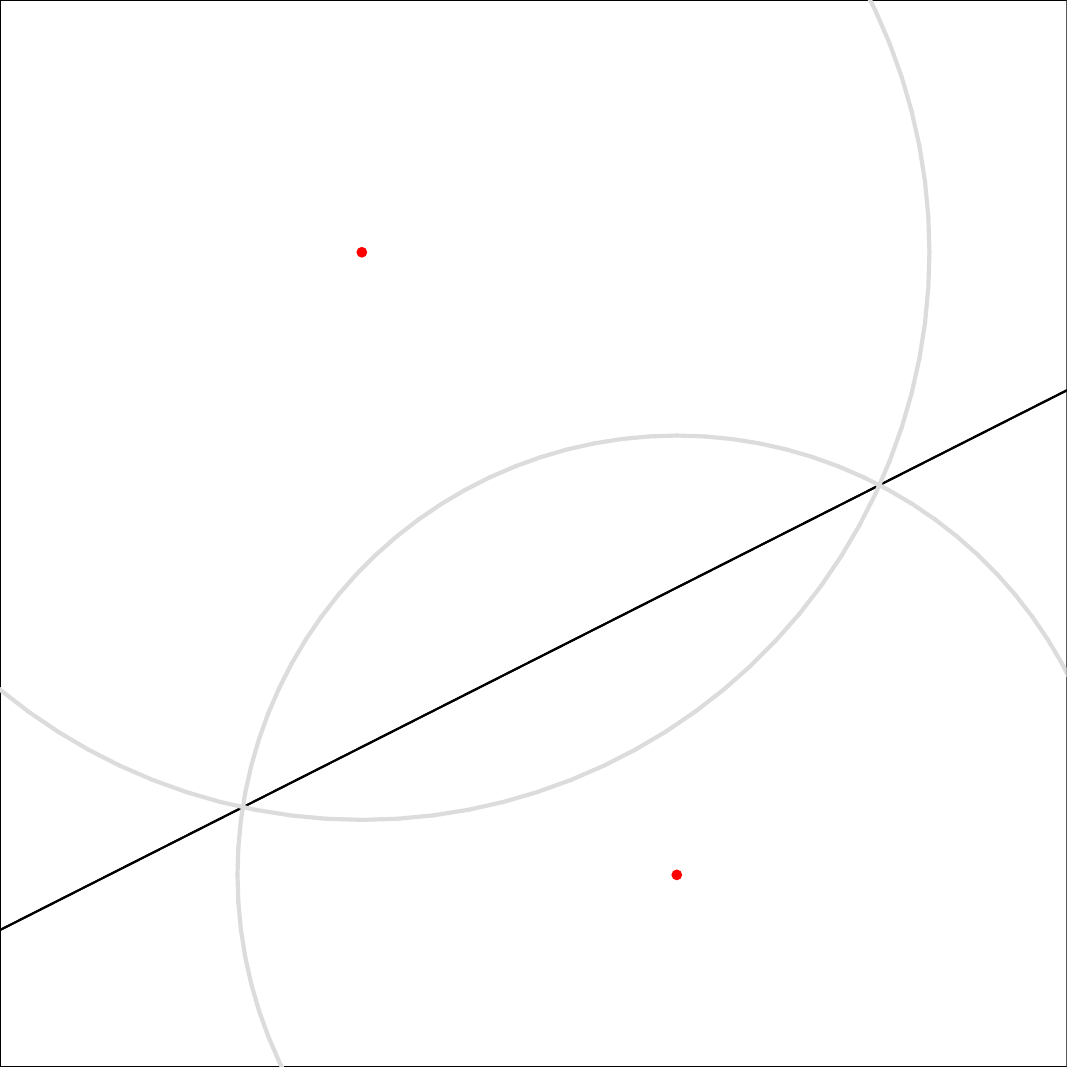} &
\includegraphics[width=0.26\columnwidth]{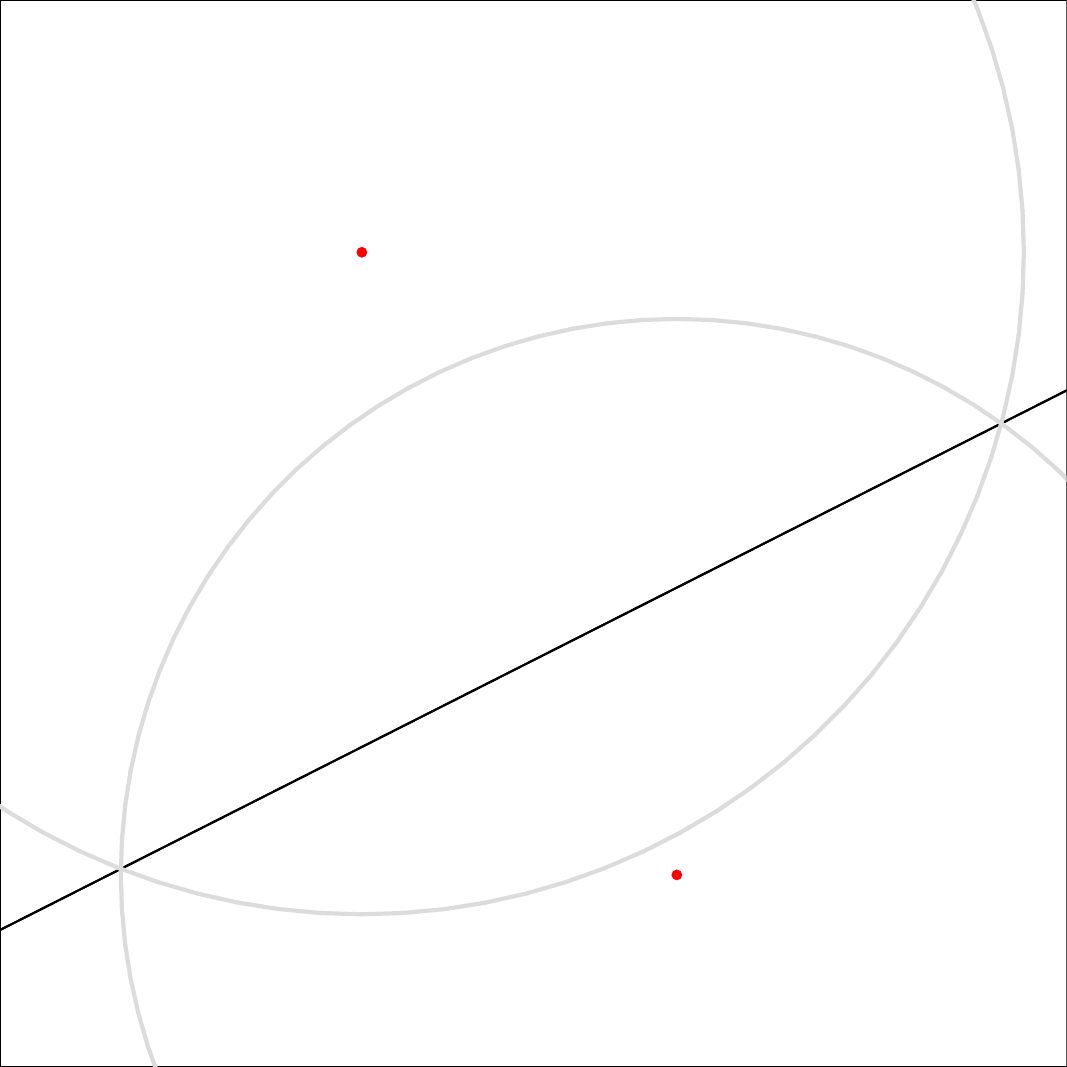} &
\includegraphics[width=0.26\columnwidth]{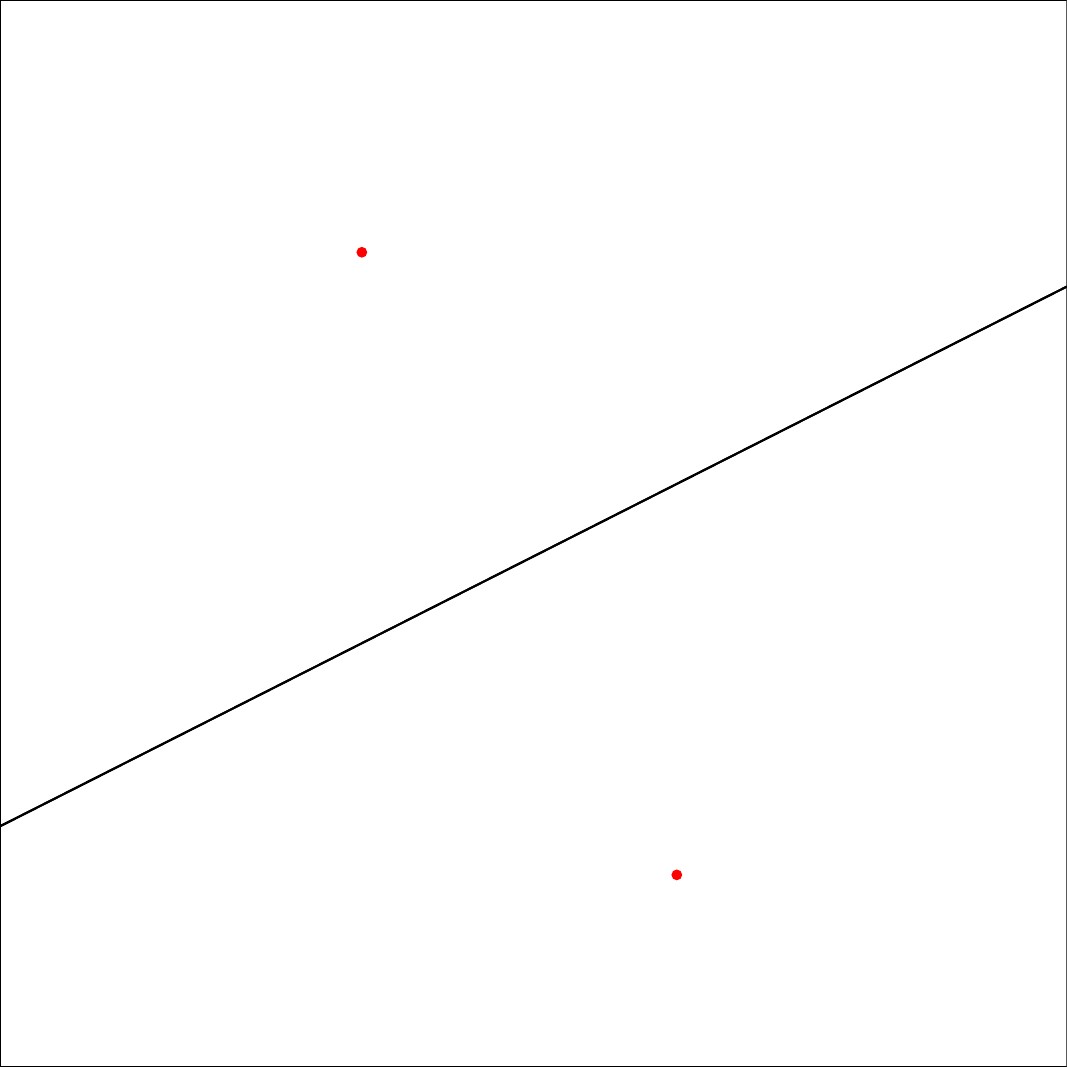} \\
$n=8$ & \includegraphics[width=0.3\columnwidth]{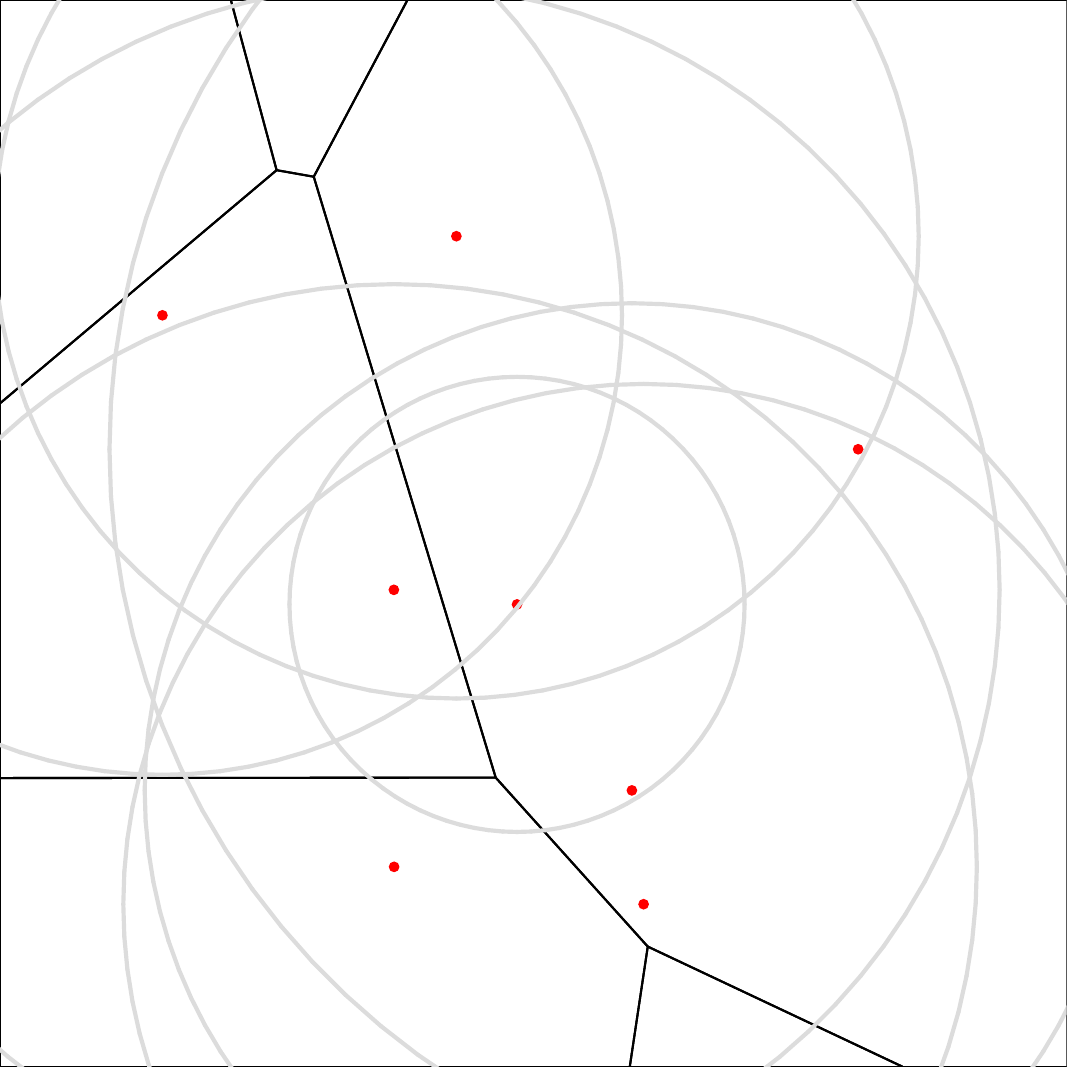} &
\includegraphics[width=0.26\columnwidth]{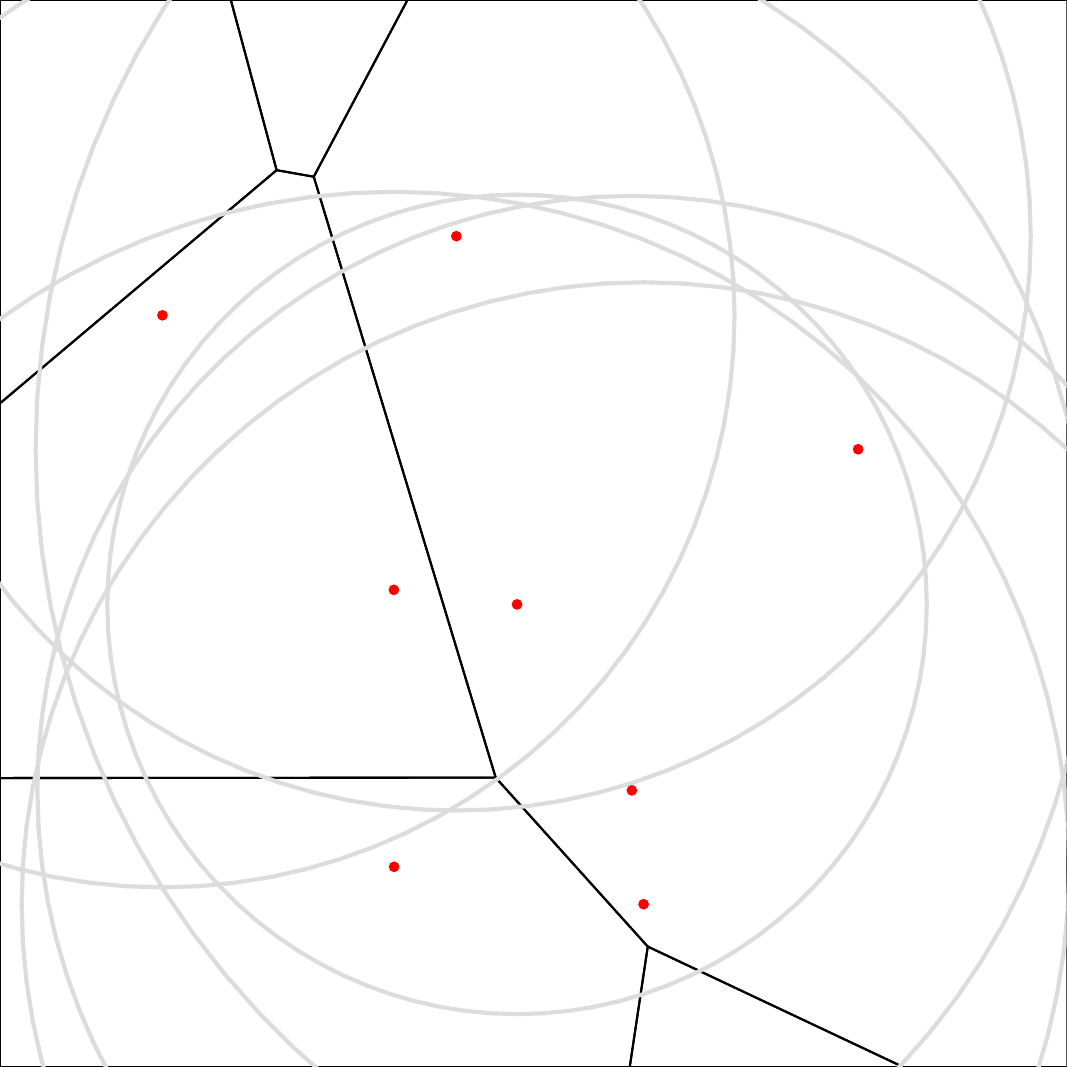} &
\includegraphics[width=0.26\columnwidth]{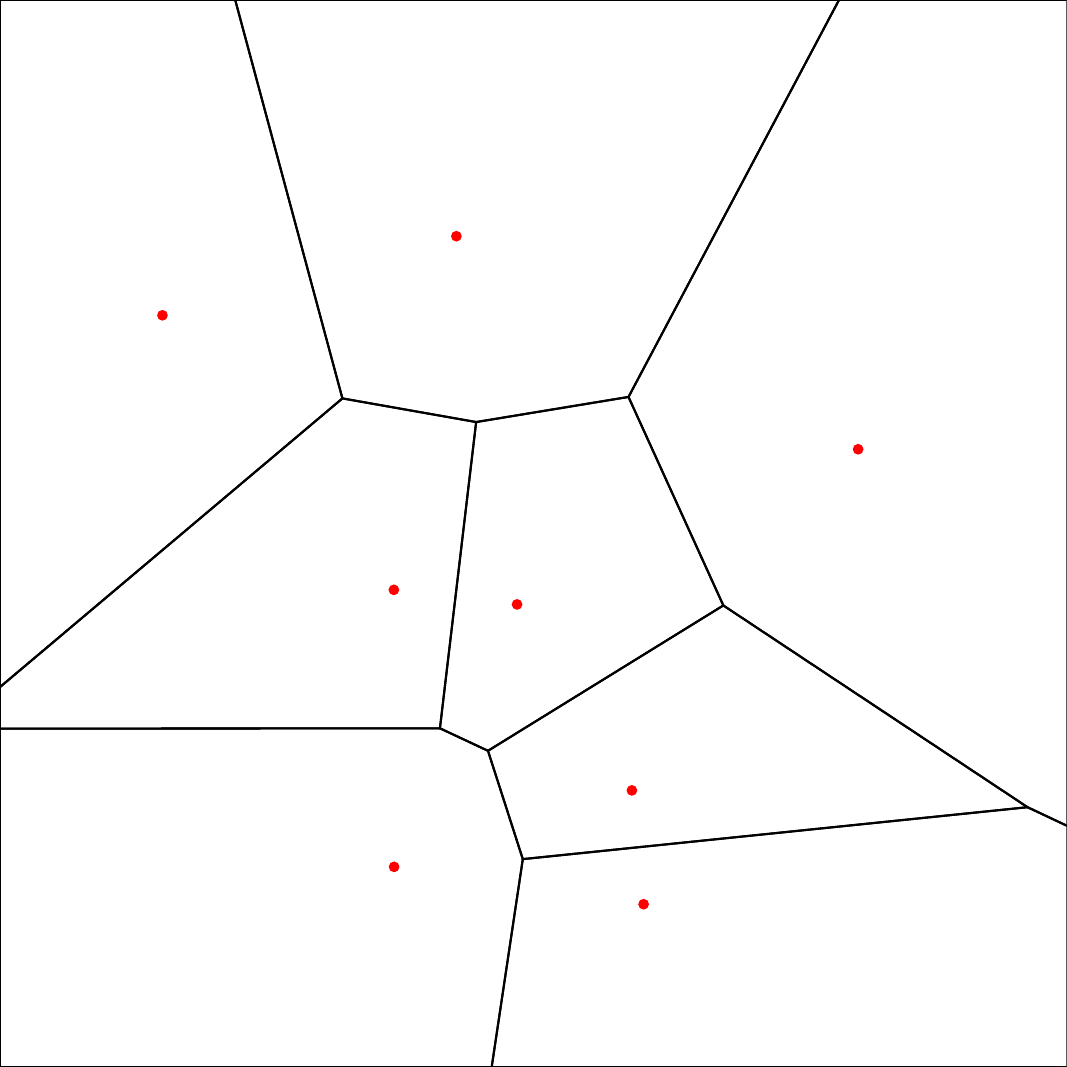} \\
& (a) & (b) & (c)
\end{tabular}

\caption{Top: Power bisector between (a) two $n=2$ weighted points with non-negative weights corresponding to squared radius, (b) weights shifted by some amount $W>0$ showing that power bisectors coincide with (a), 
and (c) vanishing weights yielding the ordinary Euclidean bisector.
Bottom: (a) Power diagram (PD) for $n=8$ points with some empty Voronoi cells, (b) PD with weights shifted by the same amount $W>0$ illustrating the PDs coincide, and (c) PD with all vanishing weights yielding ordinary VD with all non-empty cells.}
\label{fig:powerbisectorex}%
\end{figure}

\begin{figure}%
\centering
\begin{tabular}{ccc}
\fbox{\includegraphics[width=0.28\columnwidth]{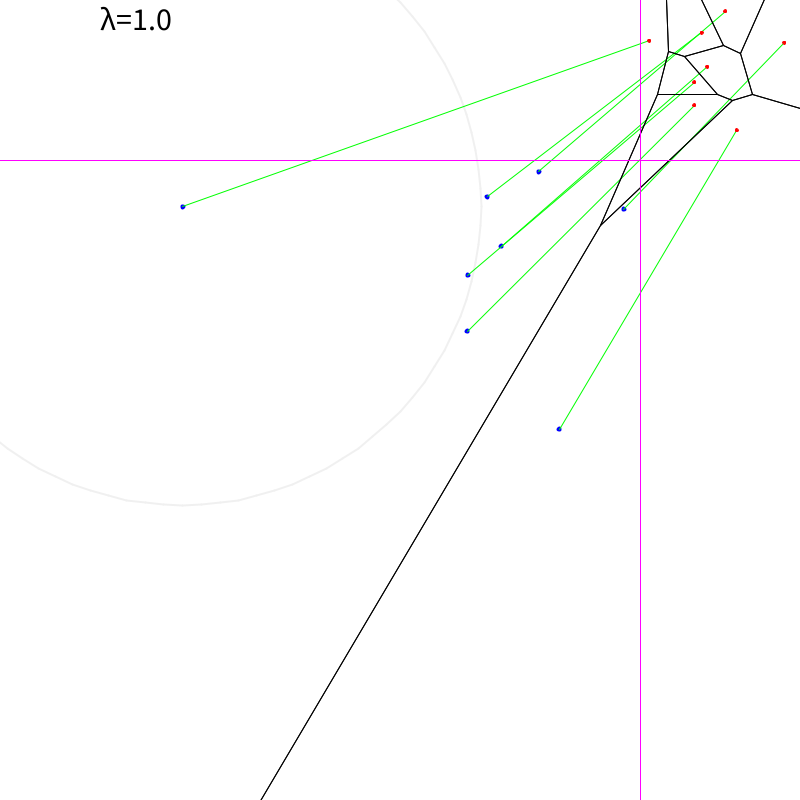}} &
\fbox{\includegraphics[width=0.28\columnwidth]{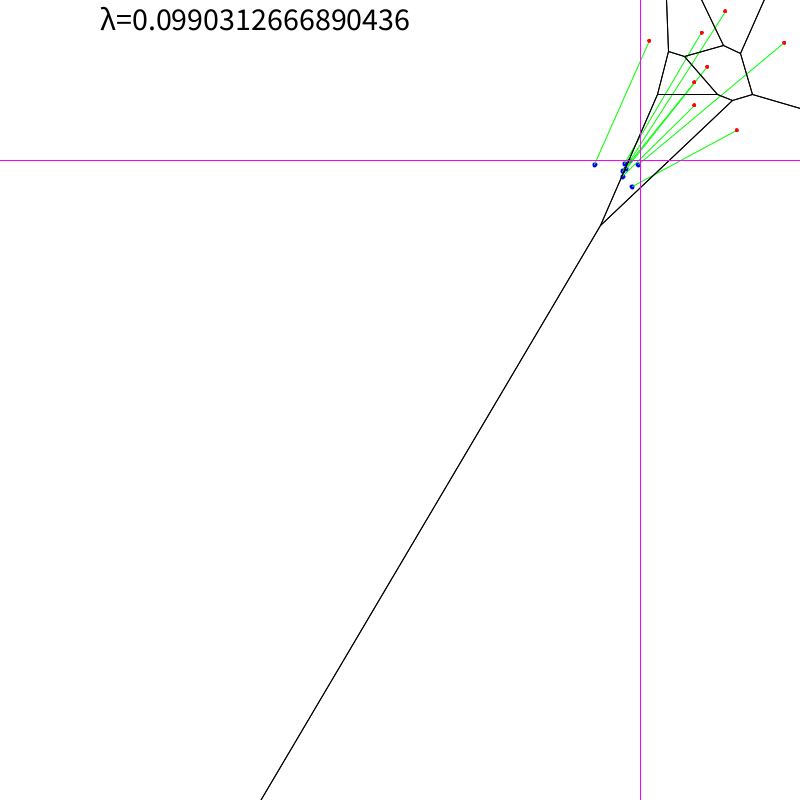}} &
\fbox{\includegraphics[width=0.28\columnwidth]{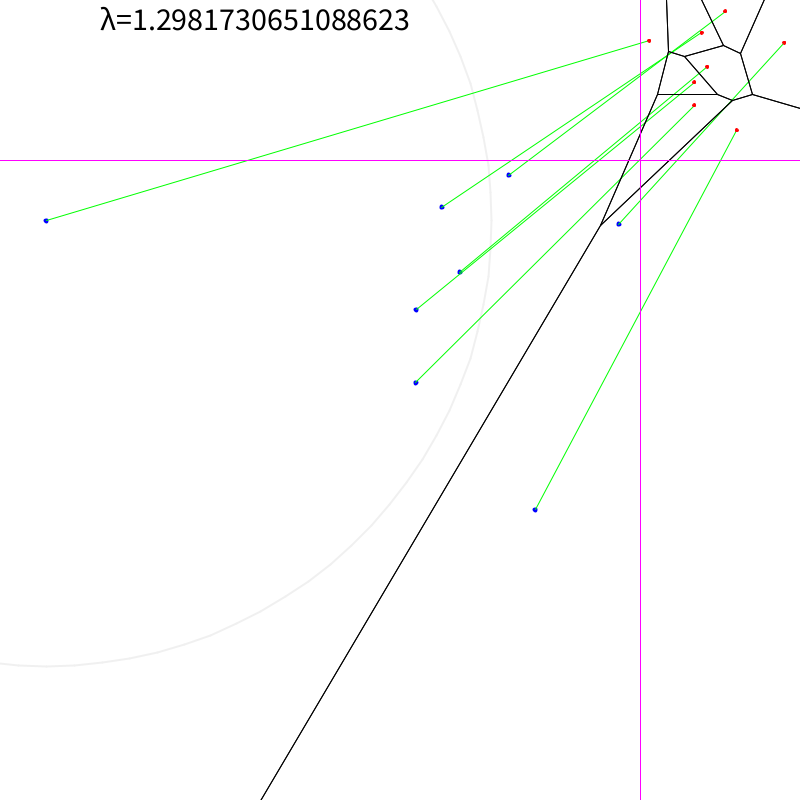}} 
\end{tabular}
\caption{Bregman Voronoi diagrams (eKLD) of $n=8$ points with equivalent power diagrams obtained for different values of $\lambda$.\label{fig:lambda}}
\end{figure}

The left Bregman bisector between two parameters $\theta_i$ and $\theta_j$ of $\Theta$ is defined by $\Bi_F(\theta_i,\theta_j)\eqdef \{\theta\in\Theta \st B_F(\theta:\theta_i) = B_F(\theta:\theta_j)\}$ with equation:
\begin{equation}
\Bi_F(\theta_i,\theta_j): \inner{\theta}{\eta_i-\eta_j}+F(\theta_i)-\inner{\theta_i}{\eta_i}-F(\theta_j)+\inner{\theta_j}{\eta_j}=0,
\end{equation}
where $\eta=\nabla F(\theta)$ denotes the dual gradient parameterization.
The left Bregman bisector corresponds to a left Fenchel-Young bisector 
$\Bi_F(\theta_i,\theta_j)\eqdef \{\theta\in\Theta \st Y_F(\theta:\eta_i) = Y_F(\theta:\eta_j)\}$
 expressed in the gradient parameterization as follows:
\begin{eqnarray}
\Bi_F(\eta_i,\eta_j):&& F^*(\eta_i)-F^*(\eta_j)-\inner{\theta}{\eta_i-\eta_j}=0,\nonumber\\
&&\inner{\theta}{\eta_i-\eta_j}-F^*(\eta_i)+F^*(\eta_j)=0.
\end{eqnarray}

The power diagram (PD) of a weighted point set $\{\hat p_i\}$ is the Voronoi diagram with respect to the power distance where the Voronoi cell of $\hat{p_i}$ is defined by 
$$
\Vor_\sigma(\hat p_i)=\left\{x\in\bbR^d \st \sigma(x,\hat p_i)\leq \sigma(x,\hat p_j)\ \forall j\in [n] \right\}.
$$
Notice that some cells $\Vor(\hat p_i)=\{x\in\bbR^d \st \sigma(x,\hat p_i)\leq \sigma(x,\hat p_j) \ \forall i\in[n]\}$ of the PD may be empty depending on the weights (Figure~\ref{fig:powerbisectorex}).
In general, Voronoi diagrams with respect to an arbitrary dissimilarity meanse may be obtained from  corresponding minimization diagrams~\cite{CVD-2006}.

\begin{Remark}\label{rk:rescale}
Without loss of generality, when computing power diagrams, we may 
assume that all weights $w_i$'s  are non-negative by  reweighting them using a shift $W=\min_i w_i$: Indeed, the term $w_i+W - (w_j+W)=w_i-w_j$ is invariant to any $W\in\bbR$. Thus we let $r_i=\sqrt{w_i+W} \geq 0$ denote the radius of the sphere $\Sigma_i=\sphere(p_i,r_i)$ corresponding to the weighted point$\hat{p}_i=(p_i,w_i+W)$.
Notice that this reweighting does not hold for computing power distances since $\sigma((p,w),(p',w'))=\|p-p'\|^2-w-w'$ and thus shifting weights by $W$ will incur a shift by $-2W$ in the power distances. Thus in general $w\in\bbR$ and when $w<0$, we consider the radius $r=\sqrt{w}$ corresponding to the weight to be imaginary: $r=i\sqrt{|w|}$. 
\end{Remark}

\begin{Remark}
Any Voronoi diagram with respect to some dissimilarity $D(\cdot:\cdot)$ which has affine bisectors amounts to a power diagram (under mild conditions~\cite{CVD-2006}).
\end{Remark}

%%%%%%%%%%%%%%
\subsection{Bregman Voronoi Diagrams as power diagrams}\label{sec:BVDPD}
 
Using the equivalence of a Bregman divergence with its mixed primal-gradient squared Euclidean divergence, we prove that left Bregman bisector amounts
 to a corresponding power bisector:
\begin{eqnarray*}
 && B_F(\theta:\theta_1) = B_F(\theta:\theta_2),\\
\Leftrightarrow && \frac{1}{2}\sigma(\wtheta:\weta_1) = \frac{1}{2}\sigma(\wtheta:\weta_2),\\
\Leftrightarrow &&  \sigma(\theta:\weta_1) = \sigma(\theta:\weta_2),
\end{eqnarray*}
where $\weta=(\eta,\|\eta\|-2\, F^*(\eta))$.  % check
Thus the left Bregman bisector is equivalent to a power bisector.

Let $\hat q_i= \left(\eta_i,\|\eta_i\|^2+ 2\,(F(\theta_i)-\inner{\theta_i}{\eta_i})\right)$ with $\eta_i=\nabla F(\theta_i)$ and 
$\hat q_j=\left(\eta_j,\|\eta_j\|^2+2(F(\theta_j)-\inner{\theta_j}{\eta_j})\right)$ with $\eta_j=\nabla F(\theta_j)$. 
Then we can reinterpret the left Bregman bisector as an equivalent power bisector clipped to the primal domain $\Theta$:

\begin{equation}
\Bi_F(\theta_i,\theta_j) = \Bi_\sigma(\hat q_i,\hat q_j) \cap\Theta.
\end{equation}

It follows that the left Bregman VD is a PD clipped to the domain $\Theta$ (Figure~\ref{fig:correspondenceBVDPD}):

\begin{Proposition}[\cite{BVD-2010}]\label{prop:equivBVD-PD}
The left Bregman Voronoi diagram of a set $\{\theta_i\}$ of $n$ parameters  amounts to the power diagram of a set $\{\hat q_i\}$ of $n$ corresponding weighted points  clipped to the domain $\Theta$:
$$
\Vor_F(\{\theta_i\})=\Vor_\sigma(\{\hat q_i\})\cap\Theta,
$$
where $\hat q_i=(\eta_i,\|\eta_i\|^2+2(F(\theta_i)-\inner{\theta_i}{\eta_i}))$ with $\eta_i=\nabla F(\theta_i)$.
\end{Proposition}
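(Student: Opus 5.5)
We prove the equality of diagrams cell by cell: for each $i\in[n]$ we show that the left Bregman Voronoi cell $\Vor_F(\theta_i)=\{\theta\in\Theta \st B_F(\theta:\theta_i)\leq B_F(\theta:\theta_j)\ \forall j\in[n]\}$ coincides with $\Vor_\sigma(\hat q_i)\cap\Theta$. Since both diagrams are the collections of these cells (and their faces, obtained as intersections of cells), equality of all cells yields $\Vor_F(\{\theta_i\})=\Vor_\sigma(\{\hat q_i\})\cap\Theta$.

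The core step is a pointwise identity. For a fixed $\theta\in\Theta$, we expand the power distance $\sigma(\theta,\hat q_i)=\|\theta-\eta_i\|^2-\|\eta_i\|^2-2(F(\theta_i)-\inner{\theta_i}{\eta_i})=\|\theta\|^2-2\inner{\theta}{\eta_i}-2F(\theta_i)+2\inner{\theta_i}{\eta_i}$. Comparing with $B_F(\theta:\theta_i)=F(\theta)-F(\theta_i)-\inner{\theta-\theta_i}{\eta_i}$, we obtain
$$
\sigma(\theta,\hat q_i)=2\,B_F(\theta:\theta_i)+\|\theta\|^2-2F(\theta)=2\,B_F(\theta:\theta_i)+\omega_F(\theta).
$$
Alternatively, we observe that $\|\eta_i\|^2+2(F(\theta_i)-\inner{\theta_i}{\eta_i})=\|\eta_i\|^2-2F^*(\eta_i)=\omega_{F^*}(\eta_i)$ by the Legendre identity $F^*(\eta_i)=\inner{\theta_i}{\eta_i}-F(\theta_i)$. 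Then $\hat q_i=\weta_i$, and Proposition~\ref{prop:bdpd} gives $2B_F(\theta:\theta_i)=\sigma(\wtheta,\weta_i)=\sigma(\theta,\hat q_i)-\omega_F(\theta)$ directly.

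The key point is that the additive term $\omega_F(\theta)$ depends only on the query $\theta$ and not on the index $i$. Hence, for every $\theta\in\Theta$ and every pair $i,j$, we have $B_F(\theta:\theta_i)\leq B_F(\theta:\theta_j)$ if and only if $\sigma(\theta,\hat q_i)\leq\sigma(\theta,\hat q_j)$, because the map $x\mapsto 2x+\omega_F(\theta)$ is strictly increasing. Taking the intersection over $j$ gives $\Vor_F(\theta_i)=\Vor_\sigma(\hat q_i)\cap\Theta$, and the same argument with equalities recovers the bisector relation $\Bi_F(\theta_i,\theta_j)=\Bi_\sigma(\hat q_i,\hat q_j)\cap\Theta$ stated just before.

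The only delicate point is the domain issue. The power diagram lives in all of $\bbR^d$, with generators $\eta_i\in\calH$, whereas $B_F(\theta:\theta_i)$ is defined only for $\theta\in\Theta$, so the identity is valid only on $\Theta$. This is exactly why the statement clips to $\Theta$. We must also make sure to use the Legendre-type assumption so that $F^*$ and the identity $F^*(\eta_i)=\inner{\theta_i}{\eta_i}-F(\theta_i)$ are available. We expect no real obstacle beyond keeping the weight conventions consistent: the text elsewhere writes $\|\eta\|$ instead of $\|\eta\|^2$, and we will check the sign of the weight term carefully.
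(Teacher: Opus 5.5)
Your proof is correct and follows the paper's route: both rest on the identity $\sigma(\theta,\hat q_i)=2B_F(\theta:\theta_i)+\omega_F(\theta)$ (equivalently $\hat q_i=\weta_i$ together with Proposition~\ref{prop:bdpd}), where the $\theta$-dependent weight is common to all sites. The paper argues only at the level of bisectors and then says the diagram equality ``follows''; your cell-wise inequality argument makes that last step explicit. Your direct expansion also shows that the Legendre-type assumption is not needed for this statement.
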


\begin{Corollary}\label{cor:wBVD}
The weighted Bregman Voronoi diagram of a set of $n$ weighted parameters $\{(\theta_i,w_i)\}$ amounts to a clipped power diagram. 
\end{Corollary}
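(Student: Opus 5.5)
The proof reduces Corollary~\ref{cor:wBVD} to Proposition~\ref{prop:equivBVD-PD} by showing that additive weights only add constants to the power weights. We first fix the convention: the weighted Bregman Voronoi cell of $(\theta_i,w_i)$ is
\[
\{\theta\in\Theta \st \sigma_F((\theta,0):(\theta_i,w_i))\leq \sigma_F((\theta,0):(\theta_j,w_j))\ \forall j\},
\]
that is, $B_F(\theta:\theta_i)-w_i\leq B_F(\theta:\theta_j)-w_j$. The query point gets weight zero. If it carried a weight $w$, that weight would cancel on both sides of every inequality.

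The key step is to expand the weighted Bregman divergence using the mixed parameterization. By Proposition~\ref{prop:bdpd},
\[
B_F(\theta:\theta_i)-w_i=\tfrac12\left(\|\theta-\eta_i\|^2-\omega_F(\theta)-\omega_{F^*}(\eta_i)\right)-w_i .
\]
Multiply by $2$ and drop the term $-\omega_F(\theta)$, which depends on $\theta$ but not on $i$. The comparison of cells is then the comparison of
\[
\|\theta-\eta_i\|^2-\left(\omega_{F^*}(\eta_i)+2w_i\right)=\sigma(\theta,\hat q_i^{w}),
\qquad
\hat q_i^{w}\eqdef\left(\eta_i,\ \|\eta_i\|^2-2F^*(\eta_i)+2w_i\right).
\]
By the Fenchel--Young equality $F^*(\eta_i)=\inner{\theta_i}{\eta_i}-F(\theta_i)$, these weights equal the weights of $\hat q_i$ in Proposition~\ref{prop:equivBVD-PD}, each shifted by $2w_i$.

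It follows that for every $\theta\in\Theta$ and all indices $i,j$,
\[
\sigma_F((\theta,0):(\theta_i,w_i))\leq\sigma_F((\theta,0):(\theta_j,w_j))
\iff
\sigma(\theta,\hat q_i^w)\leq\sigma(\theta,\hat q_j^w).
\]
Hence each weighted Bregman cell equals the corresponding power cell intersected with $\Theta$, and $\Vor_F(\{(\theta_i,w_i)\})=\Vor_\sigma(\{\hat q_i^w\})\cap\Theta$. The same cancellation applied to bisectors shows that the weighted Bregman bisectors are power bisectors clipped to $\Theta$.

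There is no real obstacle. The only point needing care is that the term $\omega_F(\theta)$, which depends on the query point, is common to all sites and so cancels in every comparison. This is exactly the reason the diagram is clipped to $\Theta$ rather than defined on the whole ambient space: the identity from Proposition~\ref{prop:bdpd} only holds for $\theta\in\Theta$. I would also note that the weights may be negative, which is harmless for power diagrams by Remark~\ref{rk:rescale}.
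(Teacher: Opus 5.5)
Your proof is correct and takes essentially the paper's route: the corollary follows by combining the mixed-parameterization identity of Proposition~\ref{prop:bdpd} (via the Property expressing the weighted Bregman divergence $\sigma_F$ as a power distance) with Proposition~\ref{prop:equivBVD-PD}, where the additive weights are absorbed into the Laguerre weights of the sites $\eta_i$ and the site-independent term $\omega_F(\theta)$ cancels. Your bookkeeping is slightly more careful than the paper's: since $B_F=\frac{1}{2}\sigma$, the correct weight shift is $2w_i$ as you have it, whereas the paper's Property writes it as $w+w'$, dropping the factor $\frac{1}{2}$.
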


\begin{Remark}
The Voronoi diagram (VD) with respect to an arbitrary dissimilarity $D(\cdot:\cdot)$ coincides with the VD wrt. scaled dissimilarity $\lambda D(\cdot:\cdot)$ for any $\lambda>0$.
Since a scaled Bregman divergence is a Bregman divergence for the scaled generator, $\lambda B_F=B_{\lambda F}$, we get $\Vor_F(\{\theta_i\})=\Vor_{\lambda F}(\{\theta_i\})$ for any $\lambda>0$.
Thus  the conversion of $\theta$-parameters to equivalent Laguerre parameters $\hat{q}$ is not unique (Figure~\ref{fig:lambda}) as we can choose any prescribed $\lambda>0$ and get the corresponding Laguerre parameters:
\begin{eqnarray}
\hat{q}_i &=& \left(\lambda\eta_i, \lambda^2 \|\eta_i\|^2-2\lambda\, (F(\theta_i)-\inner{\theta_i}{\eta_i})\right),\\
&=&\left(\lambda\eta_i, \lambda^2 \|\eta_i\|^2-2\lambda\,F^*(\eta_i)\right).
\end{eqnarray}

More generally, the Bregman divergence $B_F(\theta:\theta')$ encodes a geometric divergence $\calD(p_{\theta}:p_{\theta'})$ between corresponding points $p_{\theta}$ and $p_{\theta'}$ (contrast function) on a Hessian manifold $(M,g,\nabla,\nabla^*)$ where $\theta(\cdot)$ denotes a $\nabla$-affine coordinate system.
The Bregman generator can be written in any $\nabla$-affine coordinate system. Thus for the Bregman BVDs/MEBs, we may consider the following generic parameterization (gauge freedom~\cite{nielsen2025generalized}): $\lambda (F(A\theta+b)+\inner{c}{\theta}+d)$ where $\lambda>0$, $A\in\GL(d)$, $b,c\in\bbR^d$ and $d\in\bbR$.
\end{Remark}

\begin{figure}%
\centering
\begin{tabular}{cc}
 $n=2$ (Bisector) & $n=8$ \\
\fbox{\includegraphics[width=0.45\columnwidth]{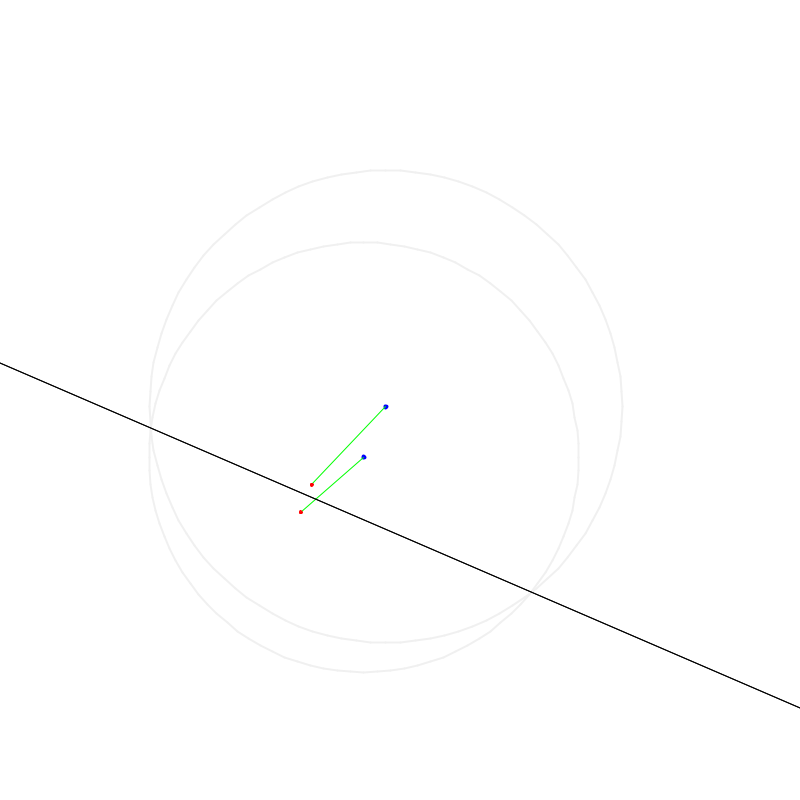}} &
\fbox{\includegraphics[width=0.45\columnwidth]{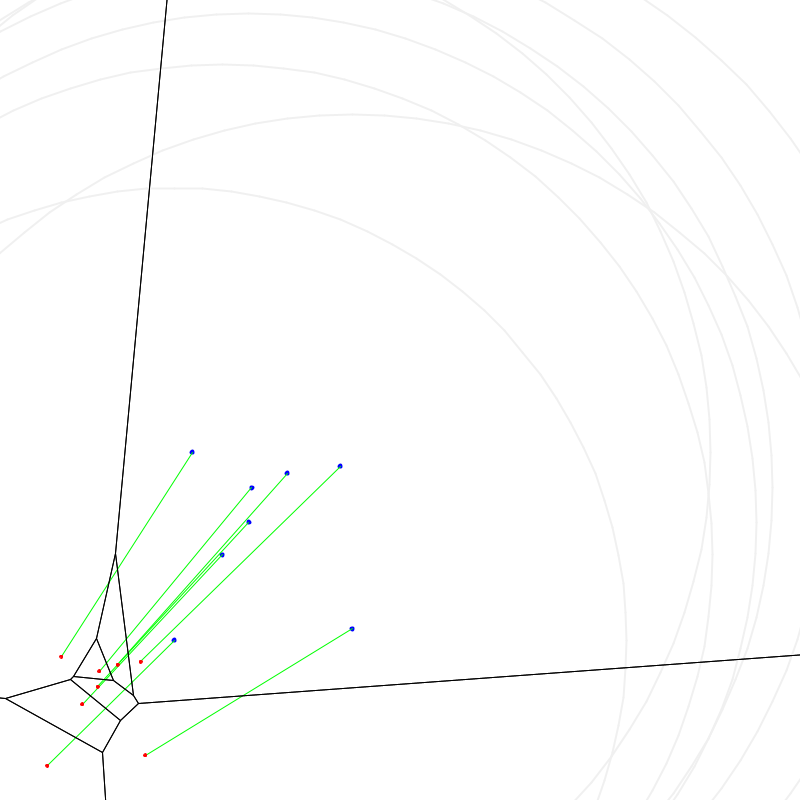}}  \\
\fbox{\includegraphics[width=0.45\columnwidth]{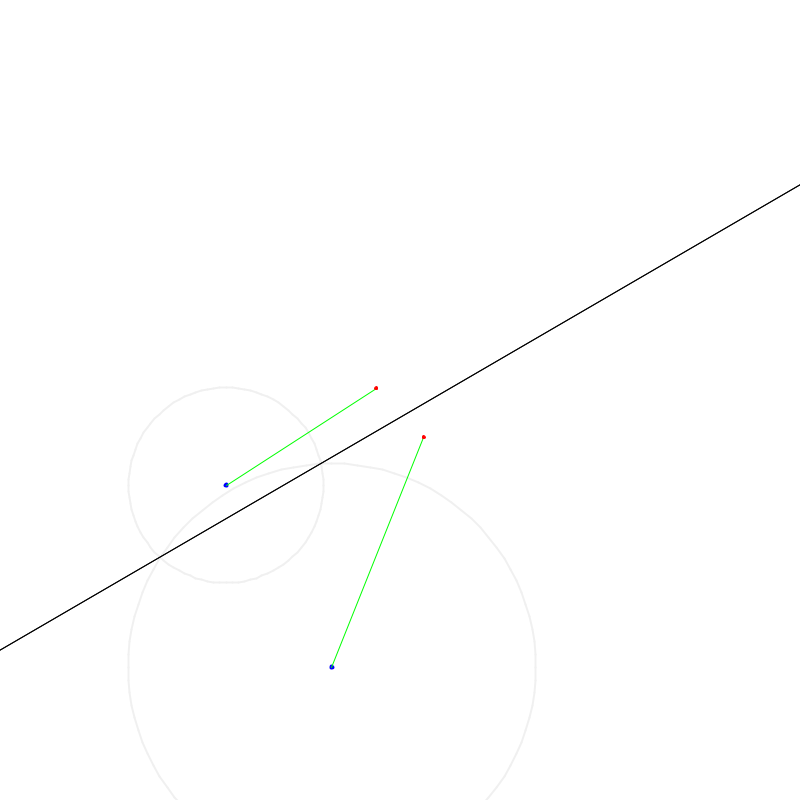}} &
\fbox{\includegraphics[width=0.45\columnwidth]{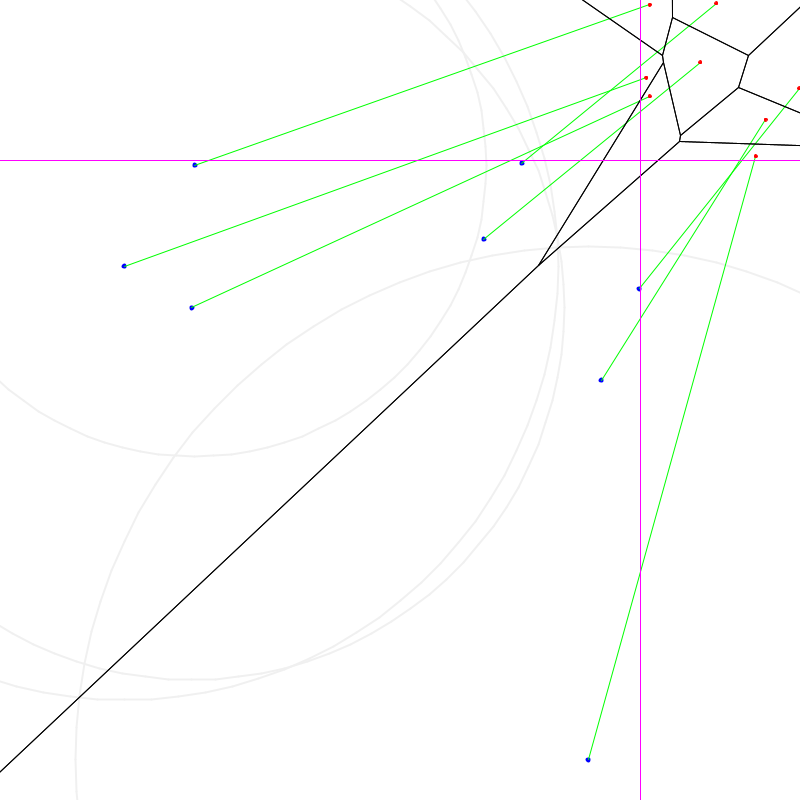}}  
\end{tabular}
\caption{Correspondence between (left) Bregman Voronoi diagrams (BVDs) and power diagrams (PDs).
Left: Bregman bisector corresponds to a power bisector. Right: BVD and equivalent PD for $n=8$. 
Corresponding Bregman Voronoi sites (red) to weighted points (blue) are indicated by (green) edges.
Observe that some PD cells may be empty but all Bregman Voronoi cells are non-empty.
Top: BVD for the exponential divergence $F(\theta)=e^\theta$. Bottom: BVD for the extended KL divergence $F(\theta)=\theta\log\theta-\theta$.
}%
\label{fig:correspondenceBVDPD}%
\end{figure}

In general, the $k$-order weighted Bregman Voronoi diagram is equivalent to a power diagram~\cite{BVD-2010}. 
Thus the farthest Bregman Voronoi diagram is a weighted Bregman Voronoi diagram and therefore also equivalent to a power diagram~\cite{aurenhammer1987power}.

%%%%%%%%%%%%%%%%%%%%%
\section{Equivalence of Bregman potential/paraboloid liftings}\label{sec:potential}
%%%%%%%%%%%%%%%%%%%%%

Let us consider the link between the paraboloid~\cite{CVD-2006} and the Bregman potential transforms~\cite{BVD-2010}.
Let $\calP: \{(x,y) \st y=\inner{x}{x}, x\in\bbR^d\}$ denote the graph of the paraboloid sitting in $\bbR^{d+1}$. 
A weighted point $\hat p=(p,w)$ is mapped into the hyperplane~\cite{aurenhammer1987power} of equation:

\begin{equation}\label{eq:paraboloidlift}
H_{\hat p}^\calP: y=\inner{x}{2p}-\|p\|^2 + w.
\end{equation}

That is, $H_{\hat p}^\calP$ is the supporting hyperplane of $\calP$ at $(p,\inner{p}{p})$ shifted vertically by $w$. 
Denote by $H_{\hat p}^{\uparrow\calP}: \inner{x}{2p}-\|p\|^2 + w\geq 0$ the upper halfspace delimited by $H_{\hat p}^\calP$.
Then the power diagram can be calculated as from a polytope obtained by the intersection of upper halfspaces as follows:
$$
\PD(\{\hat p_i\}) = \left(\left( \cap_{i=1}^n H_{\hat p_i}^{\uparrow\calP} \right) \cap \calP\right)\downarrow,
$$
where the vertical projection operation $(x,y)\downarrow=x$ is denoted by the down arrow $\downarrow$.

\begin{figure}
\centering 
\includegraphics[width=0.8\columnwidth]{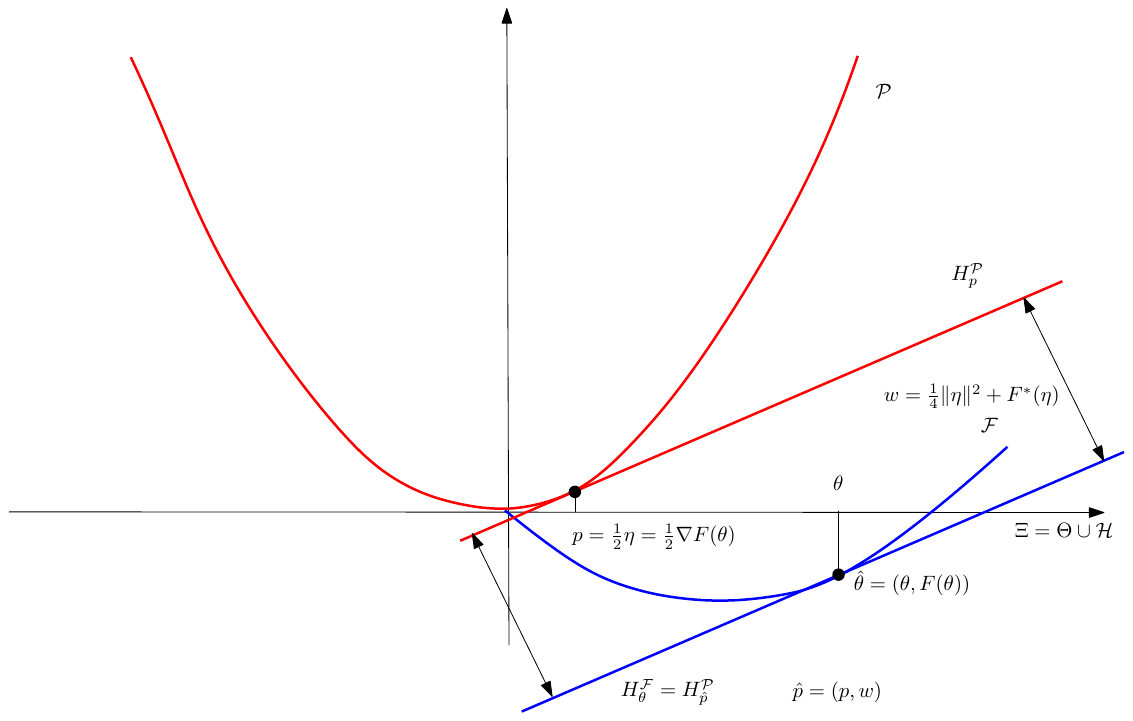}

\caption{The supporting hyperplane $H^\calF_\theta$ of the potential graph $\calF$ at $\theta^\uparrow=(\theta,F(\theta))$ is parallel to the
 lifted hyperplane $H^\calP_{p}$ at $p=\frac{1}{2}\eta$ of the paraboloid graph $\calP$.
By shifting vertically $H^\calP_{p}$ by the amounts $w=\frac{1}{4}\|\eta\|^2-\inner{\theta}{\eta}+F(\theta)$, the hyperplanes match: 
$H^\calP_{\hat p}=H^\calF_\theta$.  
 }
\label{fig:equivalencetangenthyperplane}
\end{figure}

Similarly, let $\calF = \{ (\theta,F(\theta)) \st \theta\in\Theta\subset\bbR^d\}$ be the graph of the Bregman potential function sitting in $\bbR^{d+1}$.
A parameter $\theta\in\Theta$ is lifted to the potential function $\calF$ as $\theta^\uparrow=(\theta,F(\theta))$ and the tangent hyperplane of $\calF$ at $\theta^\uparrow$ has equation:

\begin{equation}\label{eq:potentiallift}
H_\theta^\calF: y=\inner{x-\theta}{\eta}+F(\theta).
\end{equation}
%That is, $H_\theta^\calF$ is the supporting hyperplane of $\calF$ at $\theta^{\uparrow\calF}=(\theta,F(\theta))$.
Notice that when $F(\theta)=Q(\theta)\eqdef\inner{\theta}{\theta}$, $\calF=\calP$ and the supporting hyperplanes match for zero-weighted points $\hat p=(\theta,0)$: $H_{\hat p}^{\calP}=H_\theta^{\calF}$.
Denote by $H_\theta^{\uparrow\calF}: \inner{x-\theta}{\eta}+F(\theta)\geq 0$ the upper halfspace delimited by $H_\theta^\calF$.
Then the left Bregman Voronoi diagram~\cite{BVD-2010} is computed as follows:
$$
\Vor_F(\{\theta_i\})= \left(\left( \cap_{i=1}^n H_{\hat p_i}^{\uparrow\calF} \right) \cap \calF\right)\downarrow.
$$

To interpret Bregman upper halfspaces $H_\theta^\calF$ as equivalent power upper halfspaces $H_{\hat p}^\calP$, 
 we choose $\hat p$ in order to match Eq.~\ref{eq:potentiallift} with  Eq.~\ref{eq:paraboloidlift} by taking $\hat{p}$ as follows:
%$p=\frac{1}{2}\eta$ and 
%$w=\frac{1}{4}\|\eta\|^2+\inner{\theta}{\eta}-F(\theta)=\frac{1}{4}\|\eta\|^2+ F^*(\eta)$:
\begin{eqnarray*}
%H_\theta^\calF &=& H_{\hat p}^\calP,\\
\hat p &=& \left(\frac{1}{2}\eta,\frac{1}{4}\|\eta\|^2+\inner{\theta}{\eta}-F(\theta)\right),\\
 &=& \left(\frac{1}{2}\eta,\frac{1}{4}\|\eta\|^2+F^*(\eta)\right).
\end{eqnarray*}

Graphically speaking, the hyperplane tangent at $\calF$ at $\theta^\uparrow$ coincides the hyperplane tangent at $\calP$ at $(\eta,\inner{\eta}{\eta})$ shifted vertically by amount $w$ (Figure~\ref{fig:equivalencetangenthyperplane}). Figure~\ref{fig:equivalencetangenthyperplaneH} illustrates the matching of those hyperplanes when choosing the Bregman generator as the negative Shannon entropy.

\begin{Proposition}\label{prop:liftequiv}
The tangent hyperplane at $\theta^\uparrow=(\theta,F(\theta))$ to $\calF$ coincides with the tangent hyperplane at  $(\eta,\inner{\eta}{\eta})$ of the paraboloid $\calP$ 
shifted vertically by amount $w=\frac{1}{4}\|\eta\|^2+F^*(\eta)$.
\end{Proposition}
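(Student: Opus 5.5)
The plan is to compare the two affine hyperplanes directly in slope--intercept form over $x\in\bbR^d$: two non-vertical hyperplanes of $\bbR^{d+1}$ coincide exactly when they have the same gradient (slope vector) and the same value at $x=0$ (intercept). So I will write both in the form $y=\inner{x}{a}+b$ and match $a$ first, then $b$.

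\textbf{Step 1: the Bregman tangent hyperplane.} By Eq.~\ref{eq:potentiallift}, the tangent hyperplane to $\calF$ at $\theta^\uparrow$ is $H_\theta^\calF: y=\inner{x}{\eta}+\bigl(F(\theta)-\inner{\theta}{\eta}\bigr)$ with $\eta=\nabla F(\theta)$. Since $F$ is of Legendre type, the Fenchel--Young equality $F(\theta)+F^*(\eta)=\inner{\theta}{\eta}$ holds at $\eta=\nabla F(\theta)$. Hence the intercept is $-F^*(\eta)$, and
$$H_\theta^\calF: y=\inner{x}{\eta}-F^*(\eta).$$

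\textbf{Step 2: the paraboloid hyperplane and the slopes.} By Eq.~\ref{eq:paraboloidlift}, the weighted point $\hat p=(p,w)$ gives $H_{\hat p}^\calP: y=\inner{x}{2p}-\|p\|^2+w$, whose slope is $2p$. Matching slopes forces $2p=\eta$, i.e.\ $p=\frac{1}{2}\eta$, exactly as in Figure~\ref{fig:equivalencetangenthyperplane}.

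\textbf{Step 3: the intercepts and the shift.} With $p=\frac12\eta$, the intercept of $H_{\hat p}^\calP$ is $-\frac14\|\eta\|^2+w$. Equating it with $-F^*(\eta)$ gives
$$w=\tfrac14\|\eta\|^2-F^*(\eta)=\tfrac14\|\eta\|^2-\inner{\theta}{\eta}+F(\theta),$$
which is the shift stated in the caption of Figure~\ref{fig:equivalencetangenthyperplane}. Conversely, with these choices of $p$ and $w$ the two equations are literally identical, which proves the coincidence. A geometric restatement is a useful check: the vertical gap between $\calF$'s tangent plane and the paraboloid's tangent plane with the same slope $\eta$ equals $F_{\calP}^*(\eta)-F^*(\eta)$, where $F_\calP^*(\eta)=\frac14\|\eta\|^2$ is the conjugate of $\inner{x}{x}$.

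\textbf{Main obstacle: conventions in the statement.} The computation itself is routine; the delicate part is that the statement as written does not match this computation in two places.
\begin{itemize}
\item \emph{Point of tangency.} For $\calP: y=\inner{x}{x}$, the tangent point on the paraboloid must be $(\frac12\eta,\frac14\|\eta\|^2)$, not $(\eta,\inner{\eta}{\eta})$.
\item \emph{Sign of the shift.} The sign in front of $F^*(\eta)$ must be negative: the shift is $\frac14\|\eta\|^2-F^*(\eta)$, not $\frac14\|\eta\|^2+F^*(\eta)$.
\end{itemize}
I would state the proposition with $p=\frac12\eta$ and $w=\frac14\|\eta\|^2-F^*(\eta)$, and note that the displayed expression for $\hat p$ just before the proposition should be corrected accordingly.

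Finally, I would verify the result on the quadratic case. If $F=Q=\inner{\theta}{\theta}$, then $\eta=2\theta$ and $F^*(\eta)=\frac14\|\eta\|^2$, so $p=\theta$ and $w=0$. This recovers the earlier observation that $H^\calP_{(\theta,0)}=H^\calF_\theta$ when $\calF=\calP$.
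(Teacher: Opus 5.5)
Your proof is correct and follows the paper's route: the paper also matches Eq.~\ref{eq:potentiallift} against Eq.~\ref{eq:paraboloidlift} term by term, with the slopes giving $p=\frac{1}{2}\eta$ and the intercepts giving $w$. Your two corrections are also right, as the paper's own figure caption uses $p=\frac12\eta$ and $w=\frac14\|\eta\|^2-\inner{\theta}{\eta}+F(\theta)=\frac14\|\eta\|^2-F^*(\eta)$, and its quadratic sanity check $\hat p=(\theta,0)$ holds only with that sign, so the tangency point $(\eta,\inner{\eta}{\eta})$ and the $+F^*(\eta)$ in the statement (and in the display just before it) are typos.
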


In particular, when $F(\theta)=Q(\theta)=\inner{\theta}{\theta}$, we check that we have $\eta=2\theta$ and $\hat{p}=(\theta,0)$, and the quadratic Bregman Voronoi diagram amounts to a power diagram with zero weights, i.e., an ordinary Voronoi diagram.

%\begin{Remark}
More generally, instead of the paraboloid reference $\calP$, we may consider the reference potential $\calF_0=\{(\theta,F_0(\theta)) \theta\in\Theta_0\}$ for a Bregman generator $F_0$.
The tangent hyperplane to $\calF_0$ at $(\theta,F_0(\theta))$ shifted by $w$ has equation:
$$
H_{\theta}^{\calF_0}: \inner{x-\theta}{\nabla F_0(\theta)} + F_0(\theta) +w.
$$ 
We then seek to express $\theta_0$ and $w$ in order to match $H_{\theta_0}^{\calF_0}$ with $H_\theta^\calF$.
We get $\theta_0=\nabla F_0^*(\nabla F(\theta))=\nabla F_0^*(\eta)$ (with $\nabla F_0^*=(\nabla F_0)^{-1}$) and 
$w=F^*(\eta)+\inner{\theta_0}{\nabla F_0(\theta_0)}+F_0(\theta_0)$ (Figure~\ref{fig:equivalencetangenthyperplaneH}, right).
Thus the BVD with respect to $F$ can be obtained as an additively weighted BVD with respect to a reference Bregman generator $F_0$ 
provided that $\calH=\{\nabla F(\theta)\st\theta\in\Theta\} \subset \calH_0=\{\nabla F_0(\theta)\st\theta\in\Theta_0\}$.
Notice that in the case when $F_0(\theta)=Q(\theta)$ (paraboloid), we have $\calH_0=\bbR^m$, the full domain.
%\end{Remark}

\begin{Corollary}
The Bregman Voronoi diagram with generator $F$ of a set of  $\calT=\{\theta_i\}$ $n$ parameters can be calculated from the additively weighted Bregman Voronoi diagram  with generator $F_0$ of a set of weighted parameters $\calT_0=\{(\nabla F_0^*(\nabla F(\theta_i)),w_i)\}$.
\end{Corollary}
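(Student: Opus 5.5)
The plan is to reduce both diagrams to the same upper envelope of affine functions, using the hyperplane-matching argument given just before the statement.

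First, I will rewrite the left Bregman Voronoi cells as a maximization diagram. For $x\in\Theta$ write $B_F(x:\theta_i)=F(x)-\ell_i(x)$, where $\ell_i(x)\eqdef\inner{x-\theta_i}{\eta_i}+F(\theta_i)$ is the function whose graph is the tangent hyperplane $H^\calF_{\theta_i}$ of Eq.~\ref{eq:potentiallift}. Since $F(x)$ does not depend on $i$, we get $x\in\Vor_F(\theta_i)$ iff $\ell_i(x)\geq \ell_j(x)$ for all $j$. This is exactly the lifting description $\left(\left(\cap_i H^{\uparrow\calF}_{\theta_i}\right)\cap\calF\right)\downarrow$ recalled earlier.

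Second, I will express each $\ell_i$ as a vertically shifted tangent hyperplane of the reference potential $\calF_0$.
\begin{itemize}
\item Matching the slopes forces $\nabla F_0(\theta_{0,i})=\eta_i$, that is, $\theta_{0,i}=\nabla F_0^*(\nabla F(\theta_i))$. This point exists precisely because of the hypothesis $\calH\subset\calH_0$. This is the step where the inclusion of gradient domains is genuinely used.
\item Matching the intercepts gives $F_0(\theta_{0,i})-\inner{\theta_{0,i}}{\eta_i}+w_i=F(\theta_i)-\inner{\theta_i}{\eta_i}$.
\item By the Fenchel--Young equality, this becomes $w_i=F_0^*(\eta_i)-F^*(\eta_i)$.
\end{itemize}
This is the weight $w_i$ of the statement, up to the paper's sign and notation conventions in the preceding paragraph. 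I will recompute it carefully, since the displayed expression there appears to mix terms. So $\ell_i(x)=\inner{x-\theta_{0,i}}{\nabla F_0(\theta_{0,i})}+F_0(\theta_{0,i})+w_i$ identically in $x$.

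Third, I will read the additively weighted $F_0$-diagram the same way. For $x\in\Theta_0$,
\[
B_{F_0}(x:\theta_{0,i})-w_i=F_0(x)-\ell_i(x).
\]
This is the weighted Bregman divergence $\sigma_{F_0}$ of the earlier Definition, with the weight placed on the site. Again $F_0(x)$ is common to all $i$, so the weighted $F_0$-cell of $(\theta_{0,i},w_i)$ is also $\{x:\ell_i(x)\geq\ell_j(x)\ \forall j\}$. Both diagrams are therefore the projection of the same upper envelope of the $n$ hyperplanes, restricted to the respective domains. Hence they coincide on $\Theta\cap\Theta_0$. Equivalently, the $F$-diagram is the maximization diagram of the $\ell_i$ clipped to $\Theta$, as in Proposition~\ref{prop:equivBVD-PD}.

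The main obstacle is not the algebra but the bookkeeping of domains and signs. First, $\Theta$ need not lie inside $\Theta_0$. I would handle this by defining the weighted $F_0$-diagram through its affine bisectors, which extend to all of $\bbR^d$, and then clipping to $\Theta$, exactly as done for power diagrams. Second, the sign convention for the additive weight must be fixed consistently with the Definition of $\sigma_F$. As a sanity check, taking $F_0=Q$ should recover Proposition~\ref{prop:liftequiv} and the power-diagram statement of Proposition~\ref{prop:equivBVD-PD}.
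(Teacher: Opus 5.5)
Your proposal is correct and follows the paper's own route: match each tangent hyperplane of $\calF$ at $\theta_i$ with a vertically shifted tangent hyperplane of $\calF_0$ at $\nabla F_0^*(\eta_i)$ (using $\calH\subset\calH_0$), then read both diagrams off the same upper envelope. Your weight $w_i=F_0^*(\eta_i)-F^*(\eta_i)$ is the right one; the paper's displayed formula for $w$ has sign slips, since it does not give $w=0$ when $F_0=F$. Your handling of the case $\Theta\not\subset\Theta_0$, by extending the affine bisectors and then clipping to $\Theta$, fills in a point the paper leaves implicit.
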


\begin{figure}%
\centering
\begin{tabular}{cc}
\fbox{\includegraphics[width=0.4\columnwidth]{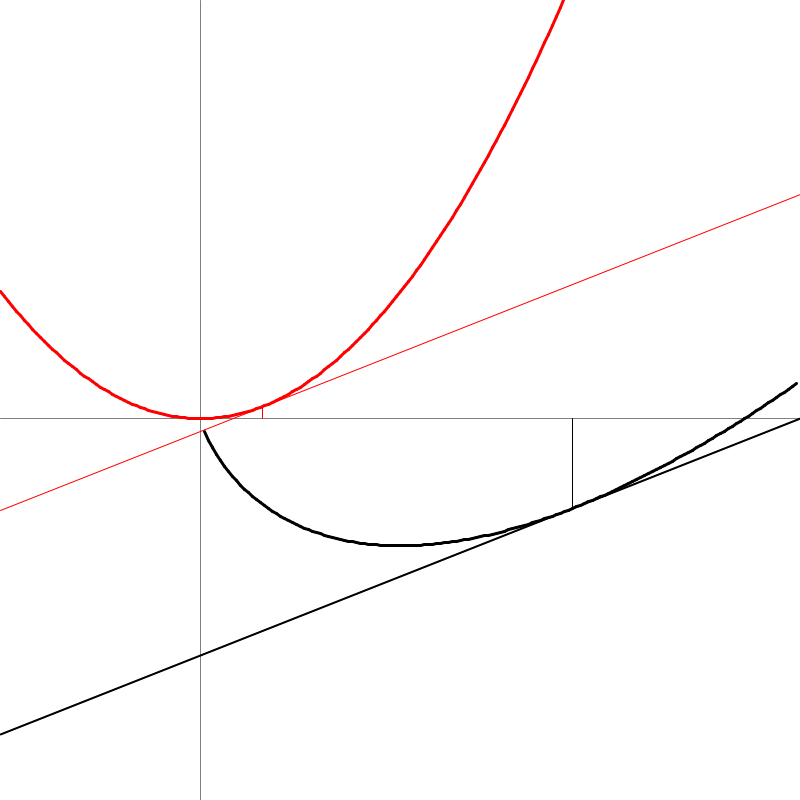}} &
\fbox{\includegraphics[width=0.4\columnwidth]{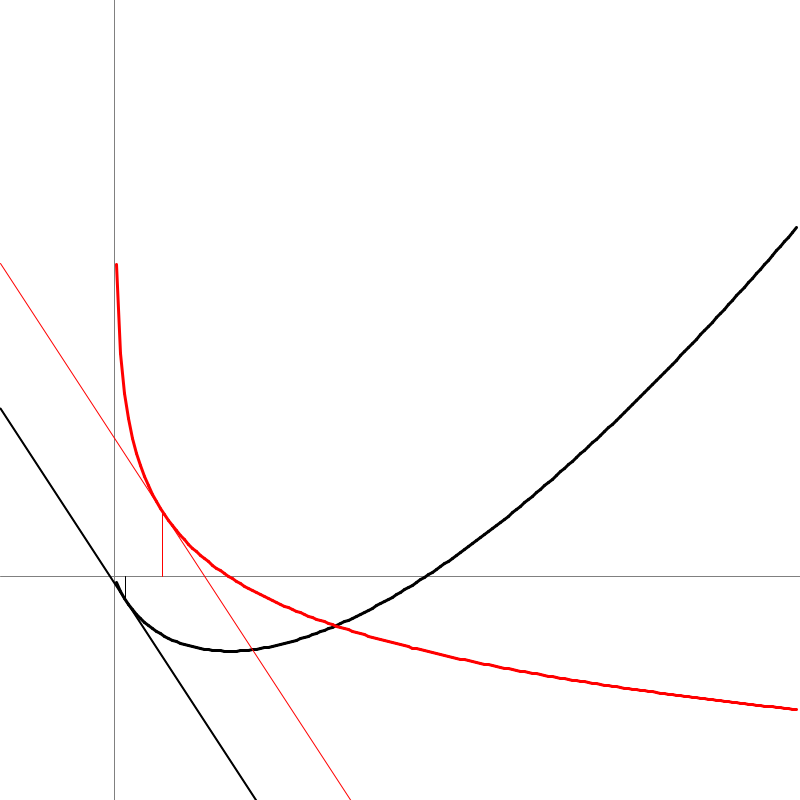}}
\end{tabular}
\caption{Left: Equivalence of hyperplanes $H^\calP_{\hat p}=H^\calF_\theta$ illustrated for the case of the Shannon negentropy Bregman generator: $F(\theta)=\theta\log\theta-\theta$. Right: Equivalent of hyperplanes $H^{\calF_0}_{\hat p}=H^\calF_\theta$ for the case of $F_0(\theta)=-\log\theta$ (Burg negentropy) and $F(\theta)=\theta\log\theta-\theta$. 
 }%
\label{fig:equivalencetangenthyperplaneH}%
\end{figure}

\begin{Remark}
The symmetrized Bregman divergence
\begin{eqnarray}
S_F(\theta,\theta') &\eqdef& B_F(\theta:\theta')+B_F(\theta':\theta),\\
&=&\inner{\theta-\theta'}{\nabla F(\theta)-\nabla F(\theta')}
\end{eqnarray} 
for a $d$-dimensional Bregman generator $F(\theta)$ is  generally not a Bregman divergence (except for quadratic Bregman generators) but a curved Bregman divergence~\cite{nielsen2025curved} for the $2d$-dimensional Bregman generator $\bar F(\xi)=\bar F(\theta,\eta')=F(\theta)+F^*(\eta')$ for $\xi=(\theta,\eta')\subset\bbR^{2d}$.
The symmetrized Bregman Voronoi diagram can thus be calculated from a $2d$-dimensional BVD (or equivalent PD) cut by the $d$-dimensional set $\calU=\{\xi=(\theta,\nabla F(\theta))\st \theta\in\Theta\}$:
$$
\Vor_F^{\mathrm{sym}}(\{\theta_i\})=\Vor_{\bar{F}}(\{\xi_i\})\cap\calU, \quad \xi_i=(\theta_i,\eta_i).
$$
\end{Remark}

%%%%%%%%%
\section{Interpreting the power distance from the paraboloid lifting}\label{sec:interpretpd}
%%%%%%%%%

Let $\hat{p}=(p,w)$ and $\hat{p}'=(p',w')$ be two weighted points of $\bbR^d$ with lifted points $p^+$ and ${p'}^+$ on the paraboloid and  corresponding hyperplanes $H_{\hat{p}}: 2\inner{p}{x}-\inner{p}{p}+w$ and $H_{\hat{p}'}: 2\inner{p'}{x}-\inner{p'}{p'}+w'$ of $\bbR^{d+1}$ arising shifting vertically the tangent hyperplanes of $p^+$ and ${p'}^+$ by $w$ and $w'$, respectively.
The power distance $\sigma(\hat{p},\hat{p}')$ can be interpreted as the vertical distance of $p^+$ to $H_{\hat{p}'}$ or equivalently as the vertical distance of ${p'}^+$ above $H_{\hat{p}}$.

Figure~\ref{fig:SqrEucl} displays the squared Euclidean distance between $p$ and $q$ as the vertical distance $q^+-H_p(q)$ or equivalently as the distance 
$p^+-H_q(p)$. Notice that this vertical gap corresponds to the definition of Bregman divergences $B_F(\theta:\theta')=F(\theta)-H_{\theta'}(\theta)$ for the squared Euclidean divergence which is in this particular case symmetric.

Let $\hat{p}^+=(p,\|p\|^2-w)$ be the lifted weighted point with respect to the paraboloid $\calQ$.

\begin{Proposition}[Power distance as a vertical gap]
The power distance between two weighted points $\hat{p}$ and $\hat{p}'$ can be interpreted as the vertical gap between $\hat{p}^+$  and $H_{p'}(p)$ or equivalently as the vertical gap between ${\hat{p}'}^+$  and $H_p(p')$:
\begin{eqnarray*}
\sigma(\hat{p},\hat{p}') &=& \|p-p'\|^2-w-w,\\
&=& \left|\hat{p}^+ - H_{p'}(p)\right|,\\
&=& \left|{\hat{p}'}^+ - H_{p}(p')\right|.
\end{eqnarray*}
\end{Proposition}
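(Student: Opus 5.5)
The plan is a direct computation: evaluate the shifted tangent hyperplane at the vertical line through $p$, subtract from the height of the lifted weighted point, and recognize the power distance. The first step is to fix the conventions, since the statement compresses notation. The lifted weighted point is $\hat p^+=(p,\|p\|^2-w)$, and its height is $\|p\|^2-w$. The relevant hyperplane is the shifted tangent hyperplane $H_{\hat p'}$ of Eq.~\ref{eq:paraboloidlift} attached to $\hat p'=(p',w')$, namely $y=H_{\hat p'}(x)=2\inner{p'}{x}-\|p'\|^2+w'$. I read $H_{p'}(p)$ in the statement as this hyperplane evaluated at $x=p$. I also read the displayed ``$-w-w$'' as the intended $-w-w'$.

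With these conventions, the second step is the expansion of the signed vertical gap:
$$
(\|p\|^2-w)-\bigl(2\inner{p'}{p}-\|p'\|^2+w'\bigr)=\|p\|^2-2\inner{p}{p'}+\|p'\|^2-w-w'=\|p-p'\|^2-w-w'=\sigma(\hat p,\hat p').
$$
This is the same completion of the square already used to derive $E_F$ from $Y_F$, so nothing new is needed.

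The third step gives the second equality by symmetry. The expression $\|p-p'\|^2-w-w'$ is invariant under swapping $(p,w)\leftrightarrow(p',w')$. Hence exchanging roles gives the gap between ${\hat p'}^+$ and $H_{\hat p}(p')$, which is the same value. Equivalently, one can cite the symmetry $\sigma(\hat p,\hat p')=\sigma(\hat p',\hat p)$ noted after the definition of the power distance.

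The only real subtlety is the absolute value. The signed gap equals $\sigma$ exactly, and $\sigma$ can be negative: weights are arbitrary reals, and by the earlier Property $\sigma\geq 0$ is guaranteed only in the Bregman case. So I would state the result for the signed vertical gap, with $\hat p^+$ lying above the hyperplane iff $\sigma(\hat p,\hat p')\geq 0$. The version with $|\cdot|$ holds when $\sigma\geq 0$, which covers the Bregman/Fenchel--Young setting of Proposition~\ref{prop:bdpd}. As a sanity check, setting $w=w'=0$ should recover the Bregman vertical-gap picture for $F=Q$.
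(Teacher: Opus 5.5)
Your proof is correct and is the same direct computation the paper intends: the paper gives no separate proof and relies on the paraboloid-lifting description of the shifted tangent hyperplanes $H_{\hat p}$, $H_{\hat p'}$ just before the statement. Your readings of $H_{p'}$ as $H_{\hat p'}$ and of $-w-w$ as $-w-w'$ are the ones that make the identity hold, and your caveat about the absolute value is also right. The signed gap equals $\sigma(\hat p,\hat p')$, so the $|\cdot|$ version holds only when $\sigma(\hat p,\hat p')\geq 0$; for instance, it fails for $p=p'$, $w=w'=1$, where $\sigma=-2$.
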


\begin{figure}
\centering
\begin{tabular}{cc}
\fbox{\includegraphics[width=0.4\textwidth]{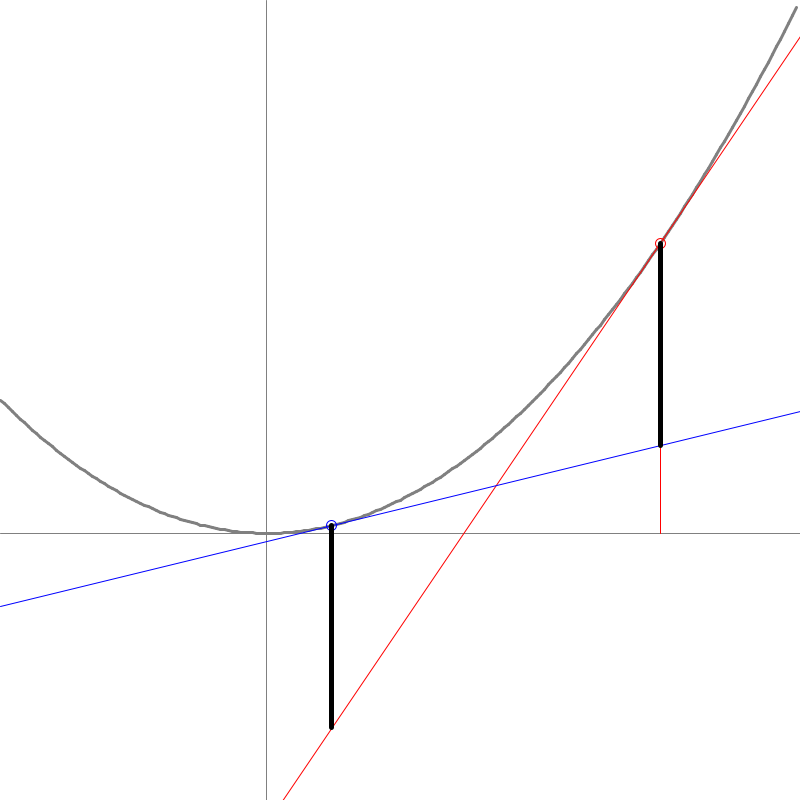}}
&
\fbox{\includegraphics[width=0.4\textwidth]{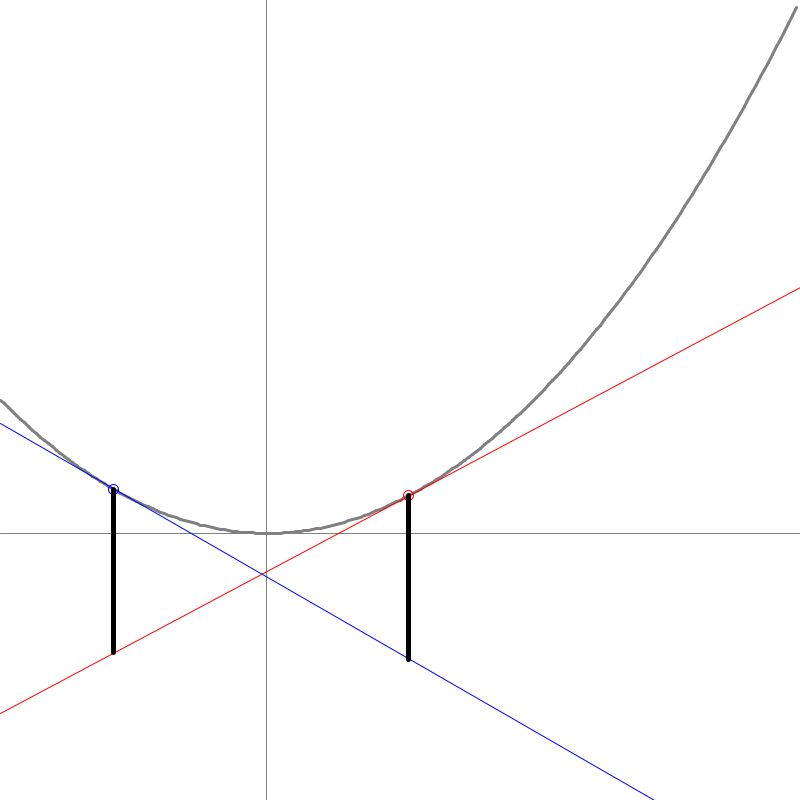}}
\end{tabular}

\caption{Visualizing the squared Euclidean distance in $\bbR^d$ as a vertical distance (bold line segments) using the paraboloid lifting. }\label{fig:SqrEucl}
\end{figure}

%%%%%%%%%
\section{Intersection of Bregman spheres and balls}\label{sec:intersectionduality}
%%%%%%%%%

The intersection of two (left) Bregman spheres $\sphere_F(\theta_1,R_1)$ and $\sphere_F(\theta_2,R_2)$ amounts to find the potential solution(s) $\theta$ of the following system of equations:
$$
\left\{\begin{array}{l}
B_F(\theta:\theta_1)=R_1,\\
B_F(\theta:\theta_2)=R_2
\end{array}\right.
$$

That is, solve for $\theta$ in
$$
B_F(\theta:\theta_1)-R_1=0=B_F(\theta:\theta_2)-R_2.
$$

Since $B_F(\theta:\theta_i)=\frac{1}{2}\sigma(\wtheta,\weta_i)$, we get
$$
\frac{1}{2}\sigma(\wtheta,\weta_1)-\frac{1}{2}\sigma(\wtheta,\weta_2)-R_1+R_2=0.
$$

Thus we need equivalently to solve for the intersection of two power spheres:
$$
\sigma(\theta,\hat{s}_1)=\sigma(\theta,\hat{s}_1),
$$
where $\hat{s}_i=(\eta_i,\|\eta_i\|^2-2\, F^*(\eta_i)-R_i)$.
 
Therefore the potential solutions for $\theta$ are found as the intersection of the radical axis of two power spheres.

The Bregman balls $\ball_F(\theta_1,R_1)$ and $\ball_F(\theta_2,R_2)$ have non-empty intersections when the corresponding power balls have non-empty intersections.
It is enough to check whether $\Bi_\sigma(\hat{s}_1,\hat{s}_2)\cap(\hat{s}_1\hat{s}_2)$ is empty or not.

Notice that we get a simple constant test to check whether two Bregman balls intersect or not which avoid to walk on a primal geodesic as reported in the earlier work of~\cite{nielsen2009tailored,nielsen2009bregman} which considers Bregman ball trees or Bregman vantage point trees.
 
Consider the decision problem (DP) of a MEB with respect to an arbitrary dissimilarity measure $D(\cdot:\cdot)$:
That is given a prescribed $R$, test whether $DP: \max_{i\in [n]} D(\theta^*:\theta_i)\leq R$? or not where the optimal solution $\theta^*$ is assumed unique.
This predicate amounts to check whether $\theta^*$ belong to $n$ right $D$-ball centered at the $\theta_i$'s.
Using reference duality~\cite{zhang2015reference} $D^*(\theta:\theta')\eqdef D(\theta':\theta)$ (swapping parameter order), the decision problem 
amounts to check whether $\max_{i\in [n]} D^*(\theta_i:\theta^*)\leq R$? or not. That is, whether $\theta^*$ belongs to the intersection of left $D^*$-ball.
This is the generic $D$-MEB covering/$D^*$-ball piercing duality~\cite{nielsen2009approximating}.
For Bregman divergences, this duality is coupled with the representation $\theta\leftrightarrow\eta$ duality induced by the Legendre-Fenchel transform.
The left Bregman circumcenter belongs to the intersection of right Bregman balls centered at the $\theta_i$'s:
$$
\forall i\in [n], B_F(\theta^*:\theta_i)\leq R \Leftrightarrow \theta^* \in \cap_{i=1}^n \ball_F^R(\theta_i,R).
$$
Interestingly, finding a common point in the intersection of $n$ convex objects (here, right Bregman balls $\ball_F^R(\theta_i,R)$) was the primary motivation of Bregman who introduced the cyclic Bregman projection algorithm~\cite{Bregman-1967} which converge in the limit to such a point when it exists.

%%%%%%%%%%%%%%%%%%%%%%%%%%%%
\section{Concluding remarks}\label{sec:concl}
%%%%%%%%%%%%%%%%%%%%%%%%%%%%

In this paper, we considered the Bregman divergences $B_F(\theta:\theta')$ as equivalent power distances between corresponding weighted points expressed using the primal-gradient dual parameterization:
$B_F(\theta:\theta')=\frac{1}{2}\sigma(\weta,\weta')$ where $\wtheta=(\theta,\|\theta\|^2-2F(\theta))$ and $\weta'=(\eta'=\nabla F(\theta'),\|\eta'\|^2-2F^*(\eta'))$ and $\sigma(\hat{p},\hat{p}')=\|p-p'\|^2-w-w'$ for $\hat p=(p,w)$ and $\hat p'=(p',w')$.

From this equivalence standpoint, we summarize our main results  as follows:

\begin{itemize}
\item  We proved that Bregman MEBs on primal parameters amount to power MEBs on weighted dual parameters by rewriting the Bregman divergences as mixed primal-gradient squared Euclidean weighted divergence (Proposition~\ref{prop:equivMEB}),

 \item We showed that the Bregman Bad\u oiu-Clarkson Frank-Wolfe-type approximation algorithm~\cite{ASEBB-2005} corresponds to the power Frank-Wolfe algorithm in the dual parameterization (Proposition~\ref{prop:BBCPowerFW}) and implemented the exact LP-type minimum enclosing power ball,

\item We recasted the lifting parameters to Bregman potential transform as a corresponding paraboloid lifting operation (Proposition~\ref{prop:liftequiv}),

\item We explained how to perform in constant time a test of intersection of two (left) Bregman balls as an equivalent test of intersection of two power spheres.
This predicate is used in Bregman nearest neighbour data-structures~\cite{nielsen2009bregman}.

\end{itemize}

The dual optimization problem of the largest margin separating hyperplanes~\cite{bennett2000duality} in support vector machines (SVMs) can be solved equivalently as Euclidean MEBs~\cite{tsang2005core}. Since Euclidean MEBs are particular power MEBs, we may handle uncertainty of input points by points and replace the dual MEB optimization problem by a dual power MEB optimization task.  The radii of spheres can encode uncertainties of input points:
The larger the radii the more imprecise the input (Figure~\ref{fig:vizBDPD}).

\begin{figure} 
\centering
\includegraphics[width=0.5\textwidth]{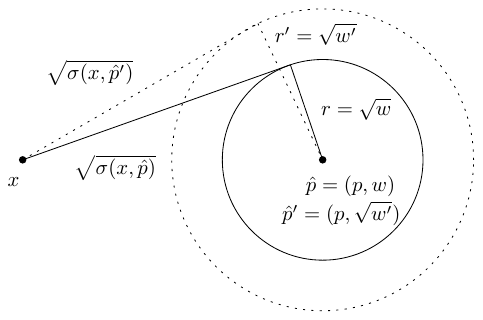}

\caption{By increasing the radius $r=\sqrt{w}$ of a weighted point $\hat p=(p,w)$ to $r'=\sqrt{w'}>r$, we increase the point uncertainty of $p$ and decreases its power distance  to a given point $x$: $\sigma(x,\hat p')<\sigma(x,\hat p)$.\label{fig:vizBDPD}}
\end{figure}

The MEB problem has been generalized in several ways.
For example, let us mention MEB with outliers~\cite{crammer2004needle,ding2020sub}, the MEB with multiplicative weights~\cite{kumar2009algorithm} (i.e., $\min_x \max_i w_i\, \|x-p_i\|^2$), or the probabilistic MEB~\cite{munteanu2014smallest}.
It would be interesting to consider those extensions to the Bregman MEB problem as well.

To conclude, consider the following generalization of the Bregman MEB:
$$
\arg\min_\theta\max_{i\in [n]} B_{F,F_i}(\theta:\theta_i),
$$
where $B_{F,F_i}$ is a duo Bregman pseudo-divergence~\cite{DuoBD-2022} induced by two Bregman generators $F$ and $F_i$ such that $F(\theta)\geq F_i(\theta)$:
$$
B_{F,F_i}(\theta:\theta_i)\eqdef F(\theta)-F_i(\theta_i)-\inner{\theta-\theta_i}{\nabla F_i(\theta_i)}.
$$
The Bregman MEB problem is recovered when $F_i=F$ for all $i\in [n]$.
Let $\calE=\{\eta_i=\nabla F_i(\theta_i)\}$.
We have $B_{F,F_i}(\theta:\theta_i)=Y_{F,F_i^*}(\theta:\eta_i)$ with $F\leq F_i^*$, i.e., Legendre transform reverses majorization, 
and $Y_{F,F_i^*}(\theta:\eta_i)=\frac{1}{2} \left( \|\theta-\eta_i\|^2-\omega_F(\theta)-\omega_{F_i^*}(\eta_i)\right)$.

Thus the duo Bregman MEB amounts to a power MEB as follows:

\begin{eqnarray*}
\argmin_\theta \max_{i\in [n]} B_{F,F_i}(\theta:\theta_i) &=& \argmin_\theta \max_{i\in [n]} Y_{F,F_i^*}(\theta:\eta_i),\\
&=&\argmin_\theta \max_i E_F(\theta:\eta_i),\\
&\equiv& \argmin_\theta \max_i \|\theta-\eta_i\|^2 -   \omega_{F^*_i}(\eta_i),\\
&=&\argmin_\theta \max_i \|\theta-\eta_i\|^2 -\|\eta_i\|^2- 2\,F^*_i(\eta_i),\\
&=& \argmin_\theta \max_i  \sigma(\theta,\hat{p}_i),
\end{eqnarray*}
where $\hat{p}_i=\nabla F_i(\theta_i)$ and $w_i=\|\eta_i\|^2 + 2\,F^*_i(\eta_i)$.

\begin{Proposition}\label{prop:duoMEB}
The duo Bregman MEB problem amounts to an equivalent power MEB problem. 
\end{Proposition}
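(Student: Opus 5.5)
The plan is to mirror the argument for Proposition~\ref{prop:equivMEB}, replacing the single generator by the pair $(F,F_i)$ in each term. The key point is that the $\theta$-dependent part of the objective is the same for every index $i$.

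First, rewrite each duo Bregman pseudo-divergence as a duo Fenchel--Young divergence. Set $\eta_i=\nabla F_i(\theta_i)$ and use the Fenchel equality $F_i(\theta_i)+F_i^*(\eta_i)=\inner{\theta_i}{\eta_i}$, valid since $F_i$ is Legendre-type. Then
\[
B_{F,F_i}(\theta:\theta_i)=F(\theta)-F_i(\theta_i)+\inner{\theta_i}{\eta_i}-\inner{\theta}{\eta_i}=F(\theta)+F_i^*(\eta_i)-\inner{\theta}{\eta_i}.
\]
Next, substitute $-\inner{\theta}{\eta_i}=\frac{1}{2}\left(\|\theta-\eta_i\|^2-\|\theta\|^2-\|\eta_i\|^2\right)$. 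This gives
\[
B_{F,F_i}(\theta:\theta_i)=\frac{1}{2}\left(\|\theta-\eta_i\|^2-\omega_F(\theta)-\omega_{F_i^*}(\eta_i)\right),
\]
with $\omega_F(\theta)=\|\theta\|^2-2F(\theta)$ and $\omega_{F_i^*}(\eta_i)=\|\eta_i\|^2-2F_i^*(\eta_i)$. This is the same computation as in Proposition~\ref{prop:bdpd}, done term by term.

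The main step is to remove the common term. The quantity $-\frac{1}{2}\omega_F(\theta)$ depends on $\theta$ but not on $i$, so it comes out of the inner maximum:
\[
\max_i B_{F,F_i}(\theta:\theta_i)=\frac{1}{2}\max_i \sigma(\theta,\hat p_i)-\frac{1}{2}\omega_F(\theta),
\]
where $\hat p_i=(\eta_i,\|\eta_i\|^2-2F_i^*(\eta_i))$. (Note the sign of $F_i^*$ here; the display preceding the statement has a sign slip.)

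This is where I expect the real obstacle. In the single-generator case the paper discards $\omega_F(\theta)$ in passing, but it is not constant in $\theta$ unless $F$ is quadratic. So the minimizers of the two objectives do not coincide literally. I would handle this in the way the proposition's phrase ``amounts to'' allows, by absorbing $\omega_F(\theta)$ into the query point instead of dropping it. Write $\|\theta-\eta_i\|^2-\omega_F(\theta)=\sigma(\wtheta,\hat p_i)$ with the weighted query point $\wtheta=(\theta,\omega_F(\theta))$, in the two-weighted-point form of Eq.~\ref{eq:mpeucl}. Then
\[
\min_\theta \max_i B_{F,F_i}(\theta:\theta_i)=\frac{1}{2}\min_\theta\max_i \sigma(\wtheta,\hat p_i).
\]
This is a power MEB problem over the weighted sites $\hat\calP=\{\hat p_i\}$, with the center constrained to the curve of weighted points $\{\wtheta\}$. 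Equivalently, after the shift $W=-\omega_F(\theta)$ described in Remark~\ref{rk:rescale}, it is a power MEB whose radius is offset by a known function of the center.

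If the referee insists on an unconstrained power MEB, I would instead use the reparametrization $\theta\mapsto\eta=\nabla F(\theta)$. I would then check that the offset term becomes affine in a lifted coordinate, via the paraboloid lifting of Proposition~\ref{prop:liftequiv}, so that the problem is an ordinary power MEB in one dimension higher. I would present the weighted-query form as the proof and mention the lifted form as a remark.

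Finally, the hypothesis $F\ge F_i$ gives $B_{F,F_i}\ge 0$. By the Property stated earlier, it also gives $\omega_F(\theta)+\omega_{F_i^*}(\eta_i)\le\|\theta-\eta_i\|^2$, so the power radii involved are nonnegative. This parallels the nonnegativity statement in the single-generator case and closes the argument.
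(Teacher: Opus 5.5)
Your first two steps coincide with the paper's chain, and your diagnosis of the third step is right. The paper's step marked ``$\equiv$'' replaces $\max_i E_F(\theta:\eta_i)$ by $\max_i \|\theta-\eta_i\|^2-\omega_{F_i^*}(\eta_i)$ and so silently drops $-\frac{1}{2}\omega_F(\theta)$. That term depends on $\theta$; it is constant only for $F=\frac{1}{2}\|\cdot\|^2+\mathrm{const}$. You also correctly caught the sign slip in $w_i$.

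The gap is that neither of your repairs actually produces a power MEB. The weighted-query identity $\min_\theta\max_i B_{F,F_i}(\theta:\theta_i)=\frac{1}{2}\min_\theta\max_i\sigma(\wtheta,\hat p_i)$ is true. But it is only the original objective $F(\theta)+\max_i\left(F_i^*(\eta_i)-\inner{\theta}{\eta_i}\right)$ relabeled. In a power MEB the center is a free point and the power radius is the optimal value. Forcing the query weight to equal $\omega_F(\theta)$ adds a non-quadratic, center-dependent term, so none of the power-MEB facts (uniqueness lemma, orthosphere bases, Frank--Wolfe analysis) transfer. Remark~\ref{rk:rescale} does not rescue this. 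A common weight shift preserves comparisons between sites at a fixed query (bisectors, farthest diagrams), but a query-dependent shift changes the minimizer over the query.

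The lifting also fails. Suppose you realize $-\omega_F(\theta)$ as $(z-h)^2$ plus a constant, with all sites at height $h$. Then the constraint $z=z(\theta)$ remains. Dropping it makes $z=h$ optimal and returns you to the power MEB of $\hat\calP$. Giving the sites distinct heights $h_i$ instead creates $i$-dependent terms $-2zh_i$ that break the match.

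In fact, the equivalence the paper intends (same circumcenter as the power MEB of $\hat\calP$) is false already for $F_i=F$, $d=1$, $n=2$. Take $F(\theta)=e^\theta$, $\theta_1=0$, $\theta_2=1$. Then $\eta_1=1$, $\eta_2=e$, $F^*(\eta)=\eta\log\eta-\eta$, $\hat p_1=(1,3)$, $\hat p_2=(e,e^2)$, and
\[
\max_i B_F(\theta:\theta_i)=\max\{e^\theta-1-\theta,\; e^\theta-e\theta\},\qquad \max_i\sigma(x,\hat p_i)=\max\{x^2-2x-2,\; x^2-2ex\}.
\]
Both envelopes switch branches at the same point $1/(e-1)$, so the bisectors agree. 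The Bregman objective is minimized there, at $\theta^*=1/(e-1)$. The power objective, however, is minimized at $x^*=1$, where only $\hat p_1$ is active. Neither $x^*$ nor $\nabla F^*(x^*)=0$ equals $\theta^*$.

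What can actually be proved is weaker. For fixed $\theta$, $\arg\max_i B_{F,F_i}(\theta:\theta_i)=\arg\max_i\sigma(\theta,\hat p_i)$ and $\max_i B_{F,F_i}(\theta:\theta_i)=\frac{1}{2}\max_i\sigma(\theta,\hat p_i)-\frac{1}{2}\omega_F(\theta)$, so farthest duo-Bregman diagrams are farthest power diagrams. The minimization over the center reduces to a power MEB when $F$ is quadratic (after a linear change of coordinates), but not in general. Your write-up should state that rather than claim the equivalence.
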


An application of the duo Bregman MEB problem can be the KL minimax optimization problem $\min_\theta \max_{i\in [n]} \KL(p_{\theta_i}^{\calX_i}:p_\theta^\calX)$ where $\calX_i\subset\calX$ and $p_{\theta_i}^{\calX_i}$ denote the truncated density of an exponential family (see~\cite{nielsen2022comparing} for an example considering Zeta or Pareto distributions).

\appendix

%%%%%%%%%%%%%%%%%%%%%%%%%%
\section{Redundancy in universal coding}\label{app:universalcoding}
%%%%%%%%%%%%%%%%%%%%%%%%%%

Consider the following  coding problem in information theory (\cite{cover1999elements}, Chapter 13):
Let $\calX=\{A_1, \ldots, A_d\}$ be a finite discrete alphabet of $d$ letters, and $X$ be a random variable with probability mass function $p$ on the alphabet $\calX$.
Let $p_\lambda(x)$ denote the categorical distribution corresponding to $X$ (sometimes called a multinoulli distribution by analogy to Bernoulli distributions when $d=2$) 
so that $\Pr(X=A_i)=p_\lambda(A_i)$ with $\lambda=(\lambda^1,\ldots,\lambda^d)\in\bbR_{++}^d$ and $\sum_{i=1}^d \lambda^i=1$.
The Huffman codeword~\cite{cover1999elements} for $x\in\calX$ is of length $l(x)=-\log p(x)$ (ignoring integer ceil rounding), 
and the expected codeword length of $X$ is thus given by Shannon's entropy $H(X)=-\sum_x p(x)\log p(x)$.
Now, if we code according to a distribution $p_{\lambda'}$ instead of the true distribution $p_\lambda$, the code is not optimal, and the {\em redundancy} $R(p_{\lambda},p_{\lambda'})$ is defined as 
the difference between the expected lengths of the codewords for $p_{\lambda'}$ and $p_{\lambda}$:
$$
R(p_{\lambda},p_{\lambda'})= (-E_{p_{\lambda}}[\log p_{\lambda'}(x)]  - (-E_{p_{\lambda}}[\log p_{\lambda}(x)]
= D_\KL(p_{\lambda}:p_{\lambda'})\geq 0,
$$
where $D_\KL$ stands for the Kullback-Leibler divergence.
Now, suppose that the true distribution $p_{\lambda}$ belong to one of $n$ prescribed distributions that we do not know a piori: 
$p_\lambda \in \calP=\{p_{\lambda_1}, \ldots, p_{\lambda_n} \}$.
Then we seek for the {\em minimax redundancy}:
$$
R^*= \min_{p_{\lambda}} \max_{i\in \{1,\ldots, n\}} D_\KL(p_{\lambda_i}:p_{\lambda}).
$$
The distribution ${p_{\lambda^*}}$ achieving the minimax redundancy is the circumcenter of the KL ball enclosing the distributions $\calP$.
Using the natural coordinates $\theta=(\theta_1,\ldots,\theta_D)\in \bbR^D$ with $\theta_i=\log\frac{\lambda^i}{\lambda^d}$ of the categorical/multinoulli distributions (an exponential family of order $D=d-1$),
 we end up with calculating the smallest Bregman enclosing ball for the Bregman generator: 
$F_{\mathrm{categorical}}(\theta)=\log (1+\sum_{i=1}^D \exp\theta_i)$:
$$
R^* = \min_{\theta\in \in \bbR^D} \max_{i\in [n]}  B_{F_{\mathrm{categorical}}}(p_\theta:p_{\theta_i}).
$$
This latter minimax problem is unconstrained since the natural parameter space $\Theta$ is  $\bbR^D=\bbR^{d-1}$.

%%%%%%%%%
\section{Analysis of the Frank-Wolfe power MEB algorithm}\label{sec:app}
%%%%%%%%%

The power MEB minimizes the following objective function:
$$
\min_{x\in\bbR^d} L(x), \quad L(x)\eqdef\max_{i\in [n]} \|x-p_i\|^2-w_i.
$$
Let us rewrite $L(x)$ as follows:
$$
L(x)=\|x\|^2+\max_i \left(\beta_i-2\inner{p_i}{x}\right), \quad \beta_i=\|p_i\|^2-w_i. 
$$
Then the power MEB is modeled as 
\begin{eqnarray*}
&& \min_{x,t} \|x\|^2+t,\\
&& t\geq \beta_i-2\inner{p_i}{x}.
\end{eqnarray*}

Let $\lambda_i$'s be the Lagrangian multipliers with $\lambda\in\Delta_n$:
$$
L(x,t,\lambda)=\|x^2\|+t+\sum_i \lambda_i(\beta_i-t-\inner{p_i}{x}).
$$

 The dual problem writes as
\begin{equation}
\max_{\lambda\in\Delta_n} G(\lambda), \quad G(\lambda)=\sum_i\lambda_i\beta_i-\|\sum_i\lambda_i p_i\|^2.
\end{equation}

Since the primal optimization function $L(x)$ is convex and Slater's condition holds, strong duality ensures that we have
$$
\min_{x\in\bbR^d} L(x)=\max_{\lambda\in\Delta_n} G(\lambda).
$$
Observe that left-hand-side minimization problem is on $\bbR^d$ while right-hand-side maximization problem in on the standard simplex $\Delta_n$.

We now sketch study the Frank--Wolfe algorithm for the power MEB problem:
Although there exists many refined techniques~\cite{braun2022conditional}, we consider the vanilla Frank--Wolfe algorithm/analysis below.

At iteration $t$, let $c_t=\sum_i \lambda_i p_i$ and $f_t=\argmax_{i\in[n]} \beta_i-2\inner{p_i}{c_t}$.
Thus the Frank-–Wolfe oracle chooses the point of maximum power distance and generalizes the BC algorithm.
 
To prove convergence of the Frank--Wolfe algorithm, we consider the curvature $\kappa$ introduced by Jaggi~\cite{jaggi2011sparse,jaggi2013revisiting} which is related to a Bregman divergence.  We skip details but get finite bounded curvature for the power MEB ($\kappa\leq 8R^*$ where $R^*={r^*}^2$). Hence, the Frank-Wolfe power MEB approximation algorithm returns a $(1+\epsilon)$-approximation of the power circumcenter after $T=O(\frac{1}{\epsilon})$ steps.

\section{Univariate Gaussian exponential family}\label{sec:unigaussian}
%%%%%%%%%%%

The family of univariate Gaussian densities
 $p(\mu,v=\sigma^2)=\frac{1}{\sqrt{2\pi v}}\, \exp\left(\frac{(x-\mu)^2}{2 v}\right)$
 form an exponential family of order 2 with 
natural parameters $\theta(\mu,v)=\left(\theta_1=\frac{\mu}{v},\theta_2=-\frac{1}{2v}\right)$,
 cumulant function $F(\theta)=-\frac{\theta_1^2}{4\theta_2}+\frac{1}{2}\log(-\pi/\theta_2)$,
 dual moment parameters $\eta(\theta)=\nabla F(\theta)=(-\frac{\theta_1}{2\theta_2},\frac{\theta_1^2}{4\theta_2^2}-\frac{1}{2\theta_2})$ 
($\eta(\mu,v)=(\eta_1=\mu=E[t_1(x)],\eta_2=E[t_2(x)]=\mu^2+v)$  with sufficient statistic vector $t(x)=(t_1(x)=x,t_2(x)=x^2)$),
inverse gradient $\theta(\eta)=\nabla F^*(\eta)=(\frac{\eta_1}{\eta_2-\eta_1^2)},-\frac{1}{2(\eta_2-\eta_1^2)})$ and
convex conjugate $F^*(\eta)=\inner{\theta}{\eta}-F(\theta))=-\frac{1}{2}\log 2\pi e\left(\eta_2-\eta_1^2\right)$.
The Fisher information is $g(\mu,v)=\diag\left(\frac{1}{v},\frac{1}{2v^2}\right)$ (Hessian metric $g(\theta)=\nabla^2 F(\theta)$ with $\theta$ an affine coordinate system) and rewrites as the conformal Euclidean metric $g(m,v)=\frac{2}{v} I$ for $m=\sqrt{2}\mu$. 

The {\sc Maxima}\footnote{\url{https://maxima.sourceforge.io/}} code snippet below check that factorization:
{\small
\begin{verbatim}
kill(all)$
simp: true$
innerproduct(v1,v2):=v1[1]*v2[1]+v1[2]*v2[2]$
/* natural parameter */
lambda2theta(lambda):=[lambda[1]/lambda[2],-1/(2*lambda[2])];
theta2lambda(theta):=[-theta[1]/(2*theta[2]),-1/(2*theta[2])];
/* cumulant function */
F(theta):=(-theta[1]**2/(4*theta[2]))+(1/2)*log(-%pi/theta[2]);
t(x):=[x, x*x]$
assume(theta[2]<0)$
pdf(x,theta):=exp(innerproduct(t(x),theta)-F(theta));
theta2eta(theta):=[-theta[1]/(2*theta[2]), theta[1]**2/(4*theta[2]**2) - 1/(2*theta[2])]; 
eta2theta(eta):=[eta[1]/(eta[2]-eta[1]**2),-1/(2*(eta[2]-eta[1]**2))];
Fconj(eta):=-(1/2)*log(2*%pi*%e*(eta[2]-eta[1]**2));
print("Checking Fenchel-Young equality (should be zero)")$
F(theta)+Fconj(theta2eta(theta))-innerproduct(theta,theta2eta(theta))$
radcan(ratsimp(logcontract(%)));
print("Checking negentropy equality (should be zero)")$
integrate(pdf(x,theta)*log(pdf(x,theta)),x,minf,inf)-Fconj(theta2eta(theta))$
radcan(ratsimp(logcontract(%)));
/* Fisher information matrix in natural parameter */ 
hessian(F(theta),[theta[1],theta[2]]);
\end{verbatim}}

\section*{Notations}

\input{notations}
 
\bibliographystyle{plain}
\bibliography{MixedEuclideanFYBIBV3}
\end{document}

%% file: macrosBDPD.tex
\def\diag{\mathrm{diag}}
\def\calA{\mathcal{A}}
\def\calD{\mathcal{D}}
\def\GL{{\mathrm{GL}}}
\def\calQ{\mathcal{Q}}
\def\span{\mathrm{span}}
\def\calB{\mathcal{B}}
\def\wtheta{{\hat{\theta}}}
\def\weta{{\check\eta}}

\def\calC{\mathcal{C}}
\def\MEB{{\mathrm{MEB}}}
\def\calU{\mathcal{U}}

\def\sphere{{\mathrm{sphere}}}
\def\ball{{\mathrm{ball}}}
\def\calX{\mathcal{X}}

\def\bbR{\mathbb{R}}
\def\Bi{\mathrm{Bi}}
\def\eqdef{{:=}}
\def\st{{\ :\ }}
\def\inner#1#2{{\langle #1,#2\rangle}}
\def\calT{\mathcal{T}}
\def\calE{\mathcal{E}}
\def\calF{\mathcal{F}}
\def\calH{\mathcal{H}}
\def\KL{\mathrm{KL}}

\def\calP{\mathcal{P}}

\def\PD{{\mathrm{PD}}}
\def\calP{\mathcal{P}}
\def\Vor{{\mathrm{Vor}}}
\def\argmax{{\mathrm{argmax}}}
\def\argmin{{\mathrm{argmin}}}
\def\Radical{\mathrm{RadAx}}
\def\RadicalAxis{\mathrm{RadAx}}
\def\calS{{\mathcal{S}}}
\def\SVI{\mathrm{SVI}}

\newenvironment{proof}{\paragraph{Proof:}}{\hfill$\square$}

\newtheorem{Definition}{Definition}

\newtheorem{Remark}{Remark}
\newtheorem{Property}{Property}
\newtheorem{Proposition}{Proposition}
\newtheorem{Corollary}{Corollary}

%% file: notations.tex
\begin{tabular}{ll}
$\calQ$ & Paraboloid of revolution\\
$\calF$ & Graph of potential function $F$\\
$\calT=\{\theta_i\}$ & input parameters\\
$\calE=\{\eta_i=\nabla F(\theta_i)\}$ & dual gradient parameters\\
$B_F(\theta:\theta')$ & Bregman divergence\\
$Y_F(\theta:\eta')$ & Fenchel-Young divergence\\
$\sigma(\hat{p},\hat{p}')$ & power distance\\
$\calD(p:p')$ & Dually flat divergence on manifold\\
$\omega_F(\theta)$ & additive weight\\
$\hat\calP$ & weighted point set \\
$F^*$ & Convex conjugate of Legendre type $F$\\
$\hat{p}^+$ & Lifted weighted point $(p,\|p\|^2-w)$\\
$\Bi_\sigma(\hat p,\hat p')$ & Power bisector\\
$\Bi_F(\theta,\theta')$ & Bregman bisector\\
$\wtheta=(\theta,\|\theta\|^2-2\,F(\theta))$ & weighted parameter\\
$\weta=(\eta,\|\eta\|^2-2\,F^*(\eta))$ & dual weighted parameter\\
$\ball_F(\theta,R)$ & left Bregman ball of center $\theta$ and radius $R$\\
$\sphere_F(\theta,R)$ & left Bregman sphere of center $\theta$ and radius $R$
\end{tabular}